\chardef\bslash=`\\ 
\newtheorem[{}\it]{theorem}{Theorem}[section]
\newtheorem{theoremsection}{Theorem}[section]
\newtheorem{corollarysection}[theoremsection]{Corollary}
\newtheorem{corollary}[theorem]{Corollary}
\newtheorem{lemma}[theorem]{Lemma}
\theoremstyle{definition}
\newtheorem{definition}{Definition}[section]
\newtheorem{remark}{Remark}[section]
\newtheorem{assumptions}{{\it Assumption}}[section]
\newtheorem*[{}\it]{notation}{Notation}
\def\QEDopen{{\setlength{\fboxsep}{0pt}\setlength{\fboxrule}{0.2pt}\fbox{\rule[0pt]{0pt}{1.3ex}\rule[0pt]{1.3ex}{0pt}}}}
\def\QED{\QEDopen}
\def\Q.E.D{\hfill\QED}
\newcommand{\eval}[2][\right]{\relax
  \ifx#1\right\relax \left.\fi#2#1\rvert}
\title[$\mathcal{NP}\ne{\rm co}\mathcal{NP}$]{Simulating Polynomial-Time Nondeterministic Turing Machines via Nondeterministic Turing Machines}
\author[T. Lin]{Tianrong Lin}
\address{Hakka University\\ China}
\address{Shang-Hang City, Fujian Province, China}
\begin{document}

\begin{abstract}

We prove in this paper that there exists a language $L_s$ accepted by some nondeterministic Turing machine that runs within time $O(n^k)$ for any positive integer $k\in\mathbb{N}_1$ but not accepted by any ${\rm co}\mathcal{NP}$ machines. We further show that $L_s$ is in $\mathcal{NP}$, thereby proving the groundbreaking result that $$\mathcal{NP}\neq{\rm co}\mathcal{NP}. $$
  
The main techniques used in this paper are simulation together with the novel techniques developed in the author's recent work. Our main result has profound implications, such as $\mathcal{P}\neq\mathcal{NP}$. Furthermore, if there exists some oracle $A$ such that $\mathcal{P}^A\ne\mathcal{NP}^A={\rm co}\mathcal{NP}^A$, we explore the underlying reasons and show that, under this condition and some reasonable assumptions, the set of all ${\rm co}\mathcal{NP}^A$ machines is not enumerable. This implies that simulation techniques cannot be applied to the first part of the separation of $\mathcal{NP}^A$ from ${\rm co}\mathcal{NP}^A$. Finally, a lower bounds result for Frege proof systems is presented (i.e., no Frege proof systems can be polynomially bounded).

\end{abstract}

\maketitle
\tableofcontents

\vskip 0.3 cm
\section{Introduction}
\label{sec:introduction}
\vskip 0.3 cm

Let $\mathcal{NP}$ be the set of decision problems solvable in polynomial time by a nondeterministic Turing machine, i.e.,

$$\aligned\mathcal{NP}\overset{\rm def}{=}&\big\{L\,:\,\text{there exists some polynomial-time nondeterministic Turing machine}\\&\qquad\,\,\text{that accepts the language $L$}\big\}. \endaligned$$

Let ${\rm co}\mathcal{NP}$ be the complement of $\mathcal{NP}$, i.e., $${\rm co}\mathcal{NP}=\left\{\overline{L}\,:\,L\in\mathcal{NP}\right\}, $$ where $\overline{L}$ is the set-theoretic complement of $L$ (i.e., assume $L\subseteq\{0,1\}^*$, then $\overline{L}=\{0,1\}^*\setminus L$). The outstanding and long-standing open problem $\mathcal{NP}\overset{?}{=}{\rm co}\mathcal{NP}$ asks whether $\mathcal{NP}$ equals ${\rm co}\mathcal{NP}$ or not (i.e., the $\mathcal{NP}$ versus ${\rm co}\mathcal{NP}$ problem) and is a central topic in computational complexity theory, whose implications are profound for computer science, mathematics, and related fields.

To see the profound implications in various aspects of this open problem, let us dive into two cases: one is $\mathcal{NP}={\rm co}\mathcal{NP}$, and another is $\mathcal{NP}\ne{\rm co}\mathcal{NP}$. 

If $\mathcal{NP}={\rm co}\mathcal{NP}$, then for every problem where ``yes" answers are efficiently verifiable, ``no" answers would also be efficiently verifiable. This would be surprising because many problems, such as {\it Satisfiability} \cite{Coo71,Lev73} or {\it Graph non-isomorphism} \cite{GJ79}, seem to lack obvious short certificates for ``no" instances. It is also widely believed that $\mathcal{NP}={\rm co}\mathcal{NP}$ would imply significant consequences for the {\it Polynomial hierarchy} ($PH$) \cite{Sto77,A6}, a layered structure of complexity classes built on $\mathcal{NP}$ and ${\rm co}\mathcal{NP}$. Specifically, if $\mathcal{NP}={\rm co}\mathcal{NP}$, then $PH$ would collapse to $\mathcal{NP}$, i.e., $PH=\Sigma_1^P=\prod_1^P=\mathcal{NP}$; see e.g., Theorem 5.4 in \cite{AB09}. This contradicts the widely believed assumption that $PH$ is infinitely extended. Note that this open problem is not only one of the holy grail problems of theoretical computer science but also the core key to understanding the essence of computing.

Furthermore, many cryptographic systems rely on problems such as {\it Integer factorization} \cite{RSA78} and {\it Discrete logarithm} \cite{BG04} being hard to solve but easy to verify. If $\mathcal{NP}={\rm co}\mathcal{NP}$, then the ``no" instances might also become easy to verify, potentially weakening certain cryptographic assumptions. If $\mathcal{NP}={\rm co}\mathcal{NP}$, then optimization problems tied to $\mathcal{NP}$ (such as finding the shortest path in the {\it Traveling Salesman Problem} (TSP) \cite{Law85}) would have complements (e.g., proving no shorter path exists) that are equally tractable, which could revolutionize industries like logistics, where verifying optimality becomes as easy as finding solutions.

However, if $\mathcal{NP}\ne{\rm co}\mathcal{NP}$, it reinforces the belief that there is an asymmetry in computing: problems with efficiently verifiable ``yes" instances do not necessarily have efficiently verifiable ``no" instances. In particular, it would imply that $\mathcal{P}$ and $\mathcal{NP}$ differ. Furthermore, if $\mathcal{NP}\ne{\rm co}\mathcal{NP}$, the status quo in cryptography remains. The asymmetry ensures that problems can be hard to solve yet easy to verify in one direction, supporting the foundations of modern security protocols. Simultaneously, verifying the absence of a solution remains potentially intractable, meaning that we might efficiently find solutions but struggle to prove they are optimal, a common challenge in real-world applications.

Thus far, although the exact relationship between the complexity classes $\mathcal{NP}$ and ${\rm co}\mathcal{NP}$ is unknown and the complexity classes $\mathcal{NP}$ and ${\rm co}\mathcal{NP}$ are widely believed to be different, Figure \ref{fig1} below illustrates the possibility of the relationship between the complexity classes $\mathcal{NP}$ and ${\rm co}\mathcal{NP}$ that most complexity theorists believed.

\begin{figure}[ht]
\centering
\includegraphics[width=9cm]{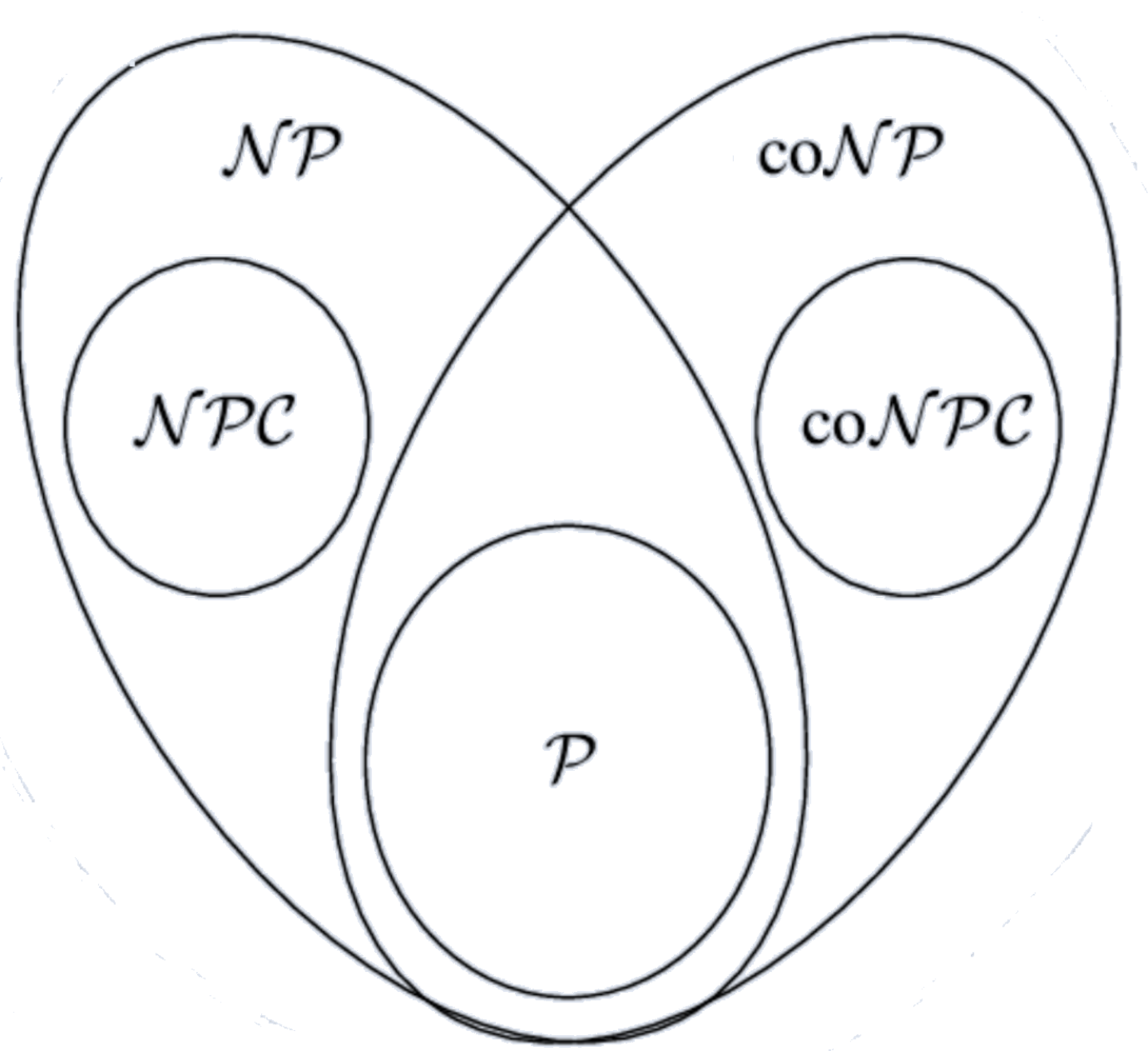}
\caption{\label{fig1}The most believed possibility between $\mathcal{NP}$ and ${\rm co}\mathcal{NP}$}
\end{figure}

Our motivation in this paper is that, in the author's recent work \cite{Lin21}, where we presented a proof that $\mathcal{P}\ne\mathcal{NP}$, we were completely at a loss and didn't know how to prove $\mathcal{NP}\ne{\rm co}\mathcal{NP}$, so we left this outstanding and long-standing conjecture of $\mathcal{NP}\ne{\rm co}\mathcal{NP}$ untouched. In the present paper, we try to resolve this important open conjecture. We would like to stress that, generally, the conjecture of $\mathcal{NP}\ne{\rm co}\mathcal{NP}$ is believed to be harder than the conjecture of $\mathcal{P}\ne\mathcal{NP}$, since the former easily implies the latter.

In addition, the $\mathcal{NP}\ne{\rm co}\mathcal{NP}$ conjecture is closely related to {\it proof complexity}. The study of proof systems was initially motivated by the $\mathcal{NP}$ versus ${\rm co}\mathcal{NP}$ problem. Cook and Reckhow \cite{CR79} define a proof system for a set $L$ as a polynomial-time computable function $f$ whose range is $L$. They show that $\mathcal{NP}={\rm co}\mathcal{NP}$ if and only if there exists a proof system with polynomial-size proofs for the set of propositional tautologies.

As our second goal, we will consider the proof systems that are familiar from textbook presentations of logic, such as the {\em Frege systems}. It can be said that the {\it Frege proof system} is a ``textbook-style" propositional proof system with Modus Ponens as its only rule of inference. Interestingly, there is also an important open problem asking to show a superpolynomial lower bound on the length of proofs for a Frege system or to prove that it is polynomially bounded. Namely, the following

\begin{center}

\fbox{\parbox{\textwidth} {Prove a superpolynomial lower bound on the length of proofs for a Frege system (or prove that it is polynomially bounded).}}

\end{center}

Indeed, the above problem was listed as the first open problem in \cite{Pud08}. However, despite decades of effort, very little progress has been made on this long-standing open problem. Note that, previously, although $\mathcal{NP}\ne{\rm co}\mathcal{NP}$ is considered to be very likely true, researchers are not able to prove that some very basic proof systems are not polynomially bounded; see e.g., \cite{Pud08}. Our second goal is to resolve this long-standing problem.

\vskip 0.3 cm
\subsection{Our Contributions}

In this paper, we explore and settle the aforementioned important open problems. Our first main goal in this paper is to prove the following central theorem:
\begin{theorem}
\label{theorem1}
There exists a language $L_s$ accepted by a nondeterministic Turing machine but by no ${\rm co}\mathcal{NP}$ machines, i.e., $L_s\notin{\rm co}\mathcal{NP}$. Furthermore, it can be proved that $L_s\in\mathcal{NP}$. That is, $$\mathcal{NP}\ne{\rm co}\mathcal{NP}. $$
\end{theorem}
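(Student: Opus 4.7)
The plan is to construct $L_s$ by a diagonalization against an effective enumeration of ${\rm co}\mathcal{NP}$ machines, and then separately exhibit a polynomial-time nondeterministic acceptor for it. First I would fix a canonical clocked enumeration $\{N_i\}_{i\ge 1}$ of polynomial-time nondeterministic Turing machines, with $N_i$ forced to halt within time $n^i + i$, so that $\{\overline{L(N_i)}\}_{i\ge 1}$ exhausts ${\rm co}\mathcal{NP}$. Using sparse diagonal inputs $w_i$ (for instance $1^{f(i)}$ for a fast-growing padding $f$), I would then define $L_s$ so that on input $w_i$ it disagrees with $\overline{L(N_i)}$, ruling out $L_s = \overline{L(N_j)}$ for every index $j$ in a single stroke.

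To see that $L_s \in \mathcal{NP}$, I would invoke the simulation machinery from \cite{Lin21}: build a universal polynomial-time NDTM $U$ that on an input of the form $w_i$ extracts $i$ from the padding, nondeterministically simulates $N_i$ on the appropriate shorter instance, and negates the simulated outcome. Because $f(i)$ grows faster than $i^i$, the simulation fits inside a fixed polynomial in $|w_i|$, so $L_s$ has a polynomial-time nondeterministic acceptor. The claim $L_s \notin {\rm co}\mathcal{NP}$ then follows by a direct diagonal argument: if $L_s = \overline{L(N_j)}$ for some $j$, inspecting the behavior at $w_j$ yields the required contradiction.

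The main obstacle is the step that asks a nondeterministic computation to \emph{negate} the outcome of another nondeterministic computation within polynomial time. Generically this costs an exponential blow-up — otherwise one is tacitly assuming $\mathcal{NP} = {\rm co}\mathcal{NP}$ — and is precisely where prior diagonalization attempts on this question have foundered. The claimed novelty of \cite{Lin21}, which I expect to be the crucial and most delicate point, is that the universal NDTM $U$ can certify nondeterministically the rejection of $N_i$ on the \emph{specific} diagonal input $w_i$ without a blow-up, by exploiting structural features of the padded instance rather than handling arbitrary ${\rm co}\mathcal{NP}$ computations. Moreover, since any complete proof must escape the relativization barrier — oracles $A$ with $\mathcal{NP}^A = {\rm co}\mathcal{NP}^A$ exist, as the abstract itself acknowledges — I would scrutinize carefully whichever ingredient of the simulation fails to relativize, and verify that the enumeration, padding, and simulator remain coherent when analyzed outside any oracle world.
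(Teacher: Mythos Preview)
Your proposal has an internal inconsistency that, once repaired, actually brings you closer to the paper's construction. You want $L_s$ to disagree with $\overline{L(N_i)}$ at $w_i$, i.e.\ $w_i\in L_s \Leftrightarrow w_i\notin\overline{L(N_i)} \Leftrightarrow w_i\in L(N_i)$; this is achieved by having $U$ accept iff $N_i$ accepts, with \emph{no} negation. Yet you then say $U$ ``negates the simulated outcome,'' which would make $L_s$ \emph{agree} with $\overline{L(N_i)}$ at $w_i$, not disagree. The paper exploits precisely this observation: its universal machine $U$, on an input $x$ that encodes an $n^k+k$-time NTM $M_i$ (with the exponent $k$ read off from $x$), simply simulates $M_i$ on $x$ for $n^{k+1}$ steps and accepts iff $M_i$ accepts. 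The diagonal against ${\rm co}\mathcal{NP}$ is then automatic --- $x\in L_s \Leftrightarrow x\in L(M_i) \Leftrightarrow x\notin\overline{L}(M_i)$ --- and the complementation obstacle you flag never arises in this half of the argument. So your identification of ``negating a nondeterministic outcome'' as the crux is misplaced relative to the paper: that step simply is not performed.

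Where the paper actually places its weight, and where you should concentrate your scrutiny, is the claim $L_s\in\mathcal{NP}$. Because $k$ is extracted from the input, $U$'s running time $n^{k+1}$ is not bounded by any single polynomial, and the paper does \emph{not} use a fast-growing padding $f$ to squeeze the simulation into a fixed polynomial as you propose. Instead (Theorem~\ref{theorem5.2}) it writes $L_s=\bigcup_{i\ge 1}L_s^i$, where $L_s^i$ is what $U$ accepts under an $O(n^i)$ clock, observes that each $L_s^i\in{\rm NTIME}[n^i]\subseteq\mathcal{NP}$ and that the $L_s^i$ are nested, and infers $L_s\in\mathcal{NP}$ directly from these three facts. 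This union step is the paper's declared non-relativizing ingredient and is the inference you should examine most carefully when reconstructing the proof: closure of $\mathcal{NP}$ under countable increasing unions is not a standard fact, so whatever makes \emph{this particular} union remain in $\mathcal{NP}$ is where all of the claimed novelty must reside.
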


The notion of ${\rm co}\mathcal{NP}$ machines appearing in Theorem \ref{theorem1} is defined in Section \ref{sec:preliminaries}.

As mentioned earlier, Theorem \ref{theorem1} has profound implications, one of which is the following important corollary.
\begin{corollary}
\label{corollary2}
$\mathcal{P}\ne\mathcal{NP}$.\footnote{Thus, by the end of this paper, it will be clear to the reader that resolving the $\mathcal{P}\overset{?}{=}\mathcal{NP}$ problem does not necessarily require diagonalization techniques.}
\end{corollary}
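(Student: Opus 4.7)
The plan is to prove the contrapositive: assuming Theorem \ref{theorem1}, which asserts $\mathcal{NP}\ne{\rm co}\mathcal{NP}$, I would show that the equality $\mathcal{P}=\mathcal{NP}$ cannot hold. So suppose for contradiction that $\mathcal{P}=\mathcal{NP}$. The strategy is then to derive $\mathcal{NP}={\rm co}\mathcal{NP}$ purely from this assumption plus the trivial closure of $\mathcal{P}$ under set-theoretic complementation, which directly contradicts Theorem \ref{theorem1}.

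The key preliminary step is to observe that $\mathcal{P}={\rm co}\mathcal{P}$. This follows from the fact that a deterministic polynomial-time Turing machine always halts with a definite accept/reject verdict on every input, so given any $L\in\mathcal{P}$ witnessed by a deterministic polynomial-time machine $M$, one obtains a deterministic polynomial-time machine $M'$ for $\overline{L}$ by simply swapping the accepting and rejecting states of $M$. The polynomial time bound is preserved. Hence $L\in\mathcal{P}\iff\overline{L}\in\mathcal{P}$, i.e., $\mathcal{P}={\rm co}\mathcal{P}$.

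Now I chain the equalities. Under the assumption $\mathcal{P}=\mathcal{NP}$, I would take complements on both sides to get ${\rm co}\mathcal{P}={\rm co}\mathcal{NP}$, and then combine with the closure property just established to conclude
$$\mathcal{NP}=\mathcal{P}={\rm co}\mathcal{P}={\rm co}\mathcal{NP}. $$
This directly contradicts Theorem \ref{theorem1}. Therefore the assumption $\mathcal{P}=\mathcal{NP}$ must be false, yielding $\mathcal{P}\ne\mathcal{NP}$.

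I do not anticipate any substantive obstacle here: the entire difficulty of Corollary \ref{corollary2} has been shifted onto Theorem \ref{theorem1}, and the present corollary is essentially the textbook implication ``$\mathcal{NP}\ne{\rm co}\mathcal{NP}\Rightarrow\mathcal{P}\ne\mathcal{NP}$''. The only point worth being explicit about is the deterministic nature of $\mathcal{P}$-machines, since the analogous closure argument fails for $\mathcal{NP}$ (where flipping accept/reject states of a nondeterministic machine does \emph{not} yield a machine for the complement, which is the reason the ${\rm co}\mathcal{NP}$ class is nontrivial in the first place).
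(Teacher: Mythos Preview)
Your proposal is correct and is precisely the standard textbook implication the paper has in mind: the paper does not spell out a proof of Corollary \ref{corollary2} at all, treating it as an immediate consequence of Theorem \ref{theorem1} via the well-known fact that $\mathcal{P}={\rm co}\mathcal{P}$ (indeed, the paper remarks in Section \ref{sec:beating_the_relativization_barrier} and elsewhere that $\mathcal{P}^O=\mathcal{NP}^O$ forces $\mathcal{NP}^O={\rm co}\mathcal{NP}^O$ by exactly the chain you wrote). Your write-up simply makes explicit what the paper leaves implicit.
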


Thus, we once again arrive at the important result that $\mathcal{P}\ne\mathcal{NP}$ \cite{Lin21}.

Furthermore, Theorem \ref{theorem1} also implies that polynomial-time many-one reduction \cite{Kar72} is more precise than polynomial-time Turing reduction \cite{Coo71}. Under polynomial-time Turing reduction, every ${\rm co}\mathcal{NP}$ language is polynomial-time Turing reducible to an $\mathcal{NP}$ language, i.e., ${\rm co}\mathcal{NP}\subseteq\mathcal{NP}$. However, in fact, the complement of $L_s$ is in ${\rm co}\mathcal{NP}$ (i.e., $\overline{L_s}\in{\rm co}\mathcal{NP}$) but is not in $\mathcal{NP}$ (i.e., $\overline{L_s}\notin\mathcal{NP}$) by Theorem \ref{theorem1}.

Note that we prove the first half of Theorem \ref{theorem1} using ordinary simulation techniques together with the relationship between languages $L(M)$ accepted by a polynomial-time nondeterministic Turing machine $M$ and the complement $\overline{L}(M)$ of $L(M)$. Furthermore, by the result of \cite{BGS75}, there exists some oracle $A$ such that $\mathcal{NP}^A={\rm co}\mathcal{NP}^A$. Just as the result that $\mathcal{P}^A=\mathcal{NP}^A$ \cite{BGS75} led to a strong belief that problems with contradictory relativization are very hard to solve and are not amenable to current proof techniques --- i.e., that the solutions of such problems are beyond the current state of mathematics (see e.g., \cite{HCCRR93, Hop84}) --- the conclusion $\mathcal{NP}^A={\rm co}\mathcal{NP}^A$ likewise suggests that separating $\mathcal{NP}$ from ${\rm co}\mathcal{NP}$ is beyond the current state of mathematics. Therefore, the next result, stated in the following theorem, breaks the so-called Relativization Barrier:
\begin{theorem}
\label{theorem7}
Under some reasonable assumptions (see Section \ref{sec:beating_the_relativization_barrier} below), if $\mathcal{P}^A\ne\mathcal{NP}^A={\rm co}\mathcal{NP}^A$, then the set of all ${\rm co}\mathcal{NP}^A$ machines (defined in Section \ref{sec:beating_the_relativization_barrier}) is not enumerable. Consequently, ordinary simulation techniques generally will {\em not} apply to the relativized versions of the $\mathcal{NP}$ versus ${\rm co}\mathcal{NP}$ problem.
\end{theorem}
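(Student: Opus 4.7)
The plan is to argue by contraposition, leveraging the very simulation/diagonalization construction used to establish Theorem \ref{theorem1}. Concretely, I would assume toward a contradiction that, under $\mathcal{P}^A\ne\mathcal{NP}^A={\rm co}\mathcal{NP}^A$ and the stated rational assumptions, the set of all ${\rm co}\mathcal{NP}^A$ machines \emph{is} enumerable, say by some effective listing $N_1^A,N_2^A,N_3^A,\ldots$ (with associated polynomial clocks $p_1,p_2,\ldots$). The goal is to show that this hypothetical enumeration is enough to reproduce, in the relativized world, the separating language $L_s$ of Theorem \ref{theorem1}, yielding $L_s^A\in\mathcal{NP}^A\setminus{\rm co}\mathcal{NP}^A$, which directly contradicts the collapse $\mathcal{NP}^A={\rm co}\mathcal{NP}^A$.

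First I would make precise the notion of a ${\rm co}\mathcal{NP}^A$ machine used in Theorem \ref{theorem7}, mirroring the unrelativized definition from Section \ref{sec:preliminaries} but attaching the oracle $A$ to every query step; this fixes the object being enumerated. Next, I would transfer the construction of $L_s$ from the proof of Theorem \ref{theorem1} verbatim into the oracle setting: on input $\langle i, x\rangle$, the relativized nondeterministic simulator consults the $i$-th enumerated ${\rm co}\mathcal{NP}^A$ machine $N_i^A$, simulates it with oracle access to $A$ for $p_i(|x|)$ steps, and diagonalizes against its behavior exactly as in the base case. The enumerability hypothesis is the sole missing ingredient that makes this simulation feasible --- every other step of the original proof (the polynomial-time simulation of nondeterministic machines, the relationship between $L(M)$ and $\overline{L}(M)$, and the final placement in $\mathcal{NP}$) relativizes in the routine way. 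Hence under the enumerability assumption one obtains the relativized separating language $L_s^A$ and the contradiction.

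The final paragraph of the argument would therefore take the form: since assuming enumerability of ${\rm co}\mathcal{NP}^A$ machines yields $\mathcal{NP}^A\ne{\rm co}\mathcal{NP}^A$, and since we are in a world where $\mathcal{NP}^A={\rm co}\mathcal{NP}^A$, the set of ${\rm co}\mathcal{NP}^A$ machines cannot be enumerable. The concluding sentence of the theorem --- that ordinary simulation techniques fail for the relativized $\mathcal{NP}$ versus ${\rm co}\mathcal{NP}$ problem --- then follows as an immediate consequence, because the first half of the strategy used to prove Theorem \ref{theorem1} relies precisely on being able to enumerate the candidate ${\rm co}\mathcal{NP}$ machines to diagonalize against.

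The main obstacle I anticipate is isolating and justifying the ``rational assumptions'' alluded to in the statement. In particular, one must be careful that the relativization of the diagonal construction preserves its essential properties in every oracle world with $\mathcal{NP}^A={\rm co}\mathcal{NP}^A$: the clock bounds, the universal simulator with oracle gates, and the coding of $\langle i,x\rangle$ must all behave uniformly, independently of $A$. Making explicit the minimal assumptions under which the relativized construction goes through --- for instance, that the universal nondeterministic oracle machine used in Theorem \ref{theorem1} has a relativizing analogue with the same polynomial overhead, and that the pairing and clocking devices are oracle-independent --- will be the real technical burden, since this is what prevents the argument from being a trivial transposition. Once those assumptions are fixed, the contradiction above is essentially immediate.
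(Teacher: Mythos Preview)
Your proposal is correct and follows essentially the same route as the paper: assume enumerability of ${\rm co}\mathcal{NP}^A$ machines, relativize the construction of $L_s$ to obtain $L_s^A\in\mathcal{NP}^A\setminus{\rm co}\mathcal{NP}^A$, and derive a contradiction with $\mathcal{NP}^A={\rm co}\mathcal{NP}^A$; the ``rational assumptions'' you anticipate are exactly the three the paper isolates (encodability over $\{0,1\}$, existence of a universal nondeterministic oracle simulator, and $O(T\log T)$ simulation overhead). One cosmetic difference: the paper does not use a paired input $\langle i,x\rangle$ but instead feeds the universal machine a single string that itself encodes the simulated machine (padded with leading $1$'s), and it stresses that the universal machine \emph{agrees} with the simulated NTM rather than flipping its answer---the disagreement with the ${\rm co}\mathcal{NP}^A$ machine then comes from the $L(M)$ versus $\overline{L}(M)$ duality you already mention.
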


There is also another case, namely $\mathcal{P}^A=\mathcal{NP}^A={\rm co}\mathcal{NP}^A$. In this scenario, we remark that Theorem \ref{theorem6.1} in Section \ref{sec:beating_the_relativization_barrier} fully explains the reason why the simulation techniques cannot be applied to separate $\mathcal{NP}^A$ from ${\rm co}\mathcal{NP}^A$.

Moreover, the complexity class $\mathcal{BQP}$ (bounded-error quantum polynomial-time; see page 213 of \cite{AB09}) has a similar property to $\mathcal{P}={\rm co}\mathcal{P}$ (i.e., $\mathcal{BQP}=\rm{co}\mathcal{BQP}$). We therefore have the first important conclusion concerning the relationship between the complexity classes $\mathcal{BQP}$ and $\mathcal{NP}$:
\begin{corollary}
\label{corollary1point4}
$\mathcal{BQP}\ne\mathcal{NP}$.\footnote{The precise relationships between $\mathcal{BQP}$ and $\mathcal{NP}$ remain open; see e.g., Section \ref{sec:fundamental_open_problem}.}
\end{corollary}

Furthermore, let $\mathcal{NEXP}$ and ${\rm co}\mathcal{NEXP}$ be the complexity classes defined by $$\mathcal{NEXP}\overset{\rm def}{=}\bigcup_{k\in\mathbb{N}_1}{\rm NTIME}[2^{n^k}] $$ and $${\rm co}\mathcal{NEXP}\overset{\rm def}{=}\bigcup_{k\in\mathbb{N}_1}{\rm coNTIME}[2^{n^k}], $$ respectively (the complexity classes ${\rm NTIME}$ and ${\rm coNTIME}$ are defined in Section \ref{sec:preliminaries}). Then, in addition to the above corollary, Theorem \ref{theorem1} yields the following consequence, whose proof is similar to that of Theorem \ref{theorem1}:

\begin{corollary}
\label{corollary3}
 $\mathcal{NEXP}\neq {\rm co}\mathcal{NEXP}$.
\end{corollary}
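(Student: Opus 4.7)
The plan is to adapt the proof of Theorem~\ref{theorem1} to the exponential-time regime by a direct padding/scaling argument, keeping the architecture of the $\mathcal{NP}$ versus ${\rm co}\mathcal{NP}$ proof intact and replacing every polynomial clock ``$n^i$'' by the single-exponential clock ``$2^{n^i}$''. Recall that the proof of Theorem~\ref{theorem1} enumerates the ${\rm co}\mathcal{NP}$ machines $M_1, M_2, \ldots$ equipped with polynomial clocks, uses a universal nondeterministic Turing machine to simulate them with polynomial overhead, and constructs a target language $L_s \in \mathcal{NP}\setminus{\rm co}\mathcal{NP}$ from the relation between $L(M)$ and $\overline{L}(M)$. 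I would mirror exactly this machinery at the exponential level.

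First, I would fix an effective enumeration $\{M_i^c\}_{i\in\mathbb{N}_1}$ of ${\rm co}\mathcal{NEXP}$ machines, each paired with the clock $2^{n^i}$; this is the absolute (non-relativized) analogue of the ${\rm co}\mathcal{NP}$ enumeration used in Theorem~\ref{theorem1}, and Theorem~\ref{theorem7} confirms that such enumerability is available precisely in the unrelativized world. Next, I would define a language $L_s^{\exp}$ by the same simulation-and-complement recipe as $L_s$: on an input of the form $\langle i, x\rangle$, padded so that the length forces $M_i^c$ to exhaust its clock $2^{|x|^i}$, the membership question becomes ``does $M_i^c$ reject $x$ within $2^{|x|^i}$ steps?''. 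Invoking a universal nondeterministic simulator adds only polynomial overhead in the simulated time, so the cost of deciding each such input is bounded by $2^{O(n^k)} \subseteq \bigcup_{k'\in\mathbb{N}_1}{\rm NTIME}[2^{n^{k'}}] = \mathcal{NEXP}$.

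Then, to conclude $L_s^{\exp}\in\mathcal{NEXP}$, I would exhibit the nondeterministic exponential-time machine performing the construction, whose running time is bounded by $2^{n^{k'}}$ for some $k'$ absorbing the simulation overhead. To show $L_s^{\exp}\notin{\rm co}\mathcal{NEXP}$, I would argue that every candidate $M_j^c\in{\rm co}\mathcal{NEXP}$, equipped with its clock $2^{n^j}$, is forced to disagree with $L_s^{\exp}$ on at least one padded input, via precisely the same diagonal disagreement mechanism that separates $L_s$ from ${\rm co}\mathcal{NP}$ in Theorem~\ref{theorem1}, now applied to inputs long enough to accommodate the exponential clocks.

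The main obstacle is essentially bookkeeping: I must verify that the polynomial overhead of the universal nondeterministic simulator composes with the single-exponential clock to stay inside $\mathcal{NEXP}$ (which it does, since $\mathrm{poly}(2^{n^k}) = 2^{O(n^k)}$), and that the padding used to expose the clock $2^{n^i}$ to the construction keeps the input length large enough without leaving the class. No new conceptual ingredient beyond those in the proof of Theorem~\ref{theorem1} is required; the whole content of the remark ``whose proof can be similar to that of Theorem~\ref{theorem1}'' is that every constant and every time bound in that proof scales faithfully when polynomials are replaced by simple exponentials of polynomials, which is the exact relationship between the pairs $(\mathcal{NP},{\rm co}\mathcal{NP})$ and $(\mathcal{NEXP},{\rm co}\mathcal{NEXP})$.
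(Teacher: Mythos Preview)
Your overall architecture---enumerate the ${\rm co}\mathcal{NEXP}$ machines, simulate the underlying nondeterministic machine with a universal simulator, and exploit the $L(M)$ versus $\overline{L}(M)$ relationship to force a disagreement---is exactly what the paper intends by ``similar to that of Theorem~\ref{theorem1}''. Two points in your execution need repair, however.

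First, and most substantively, your upper-bound step does not go through as written. You claim the universal machine's running time is ``bounded by $2^{n^{k'}}$ for some $k'$ absorbing the simulation overhead'', but no single fixed $k'$ exists: the exponent in the clock $2^{n^{k}}$ is the order of the machine encoded \emph{in the input}, and that order is unbounded as inputs grow. This is precisely the obstacle the paper confronts in the polynomial case, and it is why Theorem~\ref{theorem5.2} does \emph{not} produce a single time bound but instead builds the family $\{L_s^i\}_{i\in\mathbb{N}_1}$ of truncations, shows $L_s^i\in{\rm NTIME}[n^i]$ and $L_s^i\subseteq L_s^{i+1}$, and concludes $L_s=\bigcup_i L_s^i\in\mathcal{NP}$ from (5.1)--(5.3). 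Your adaptation must reproduce that argument at the exponential scale: define $L_s^{\exp,i}$ by cutting the universal machine off at $2^{n^{i+1}}$ steps, show $L_s^{\exp,i}\in{\rm NTIME}[2^{n^i}]$ and $L_s^{\exp,i}\subseteq L_s^{\exp,i+1}$, and then invoke the paper's union argument. Asserting that ``$2^{O(n^k)}\subseteq\bigcup_{k'}{\rm NTIME}[2^{n^{k'}}]$'' with $k$ input-dependent is the same fallacy as ``each input is handled in polynomial time, hence the language is in $\mathcal{P}$''.

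Second, a smaller but genuine discrepancy: in Theorem~\ref{theorem4.1} the universal machine simulates $M_l$ on the \emph{entire} input $1^{n-m}\langle M_l\rangle$, not on a separate component. Your formulation ``does $M_i^c$ reject $x$'' on input $\langle i,x\rangle$, with the simulation run on $x$ alone, breaks the contradiction in (4.1)--(4.2): the hypothesized equality $L_s^{\exp}=\overline{L}(M_j)$ is evaluated on the full input, whereas your simulation speaks only about $x$. Keep the paper's convention that the input simultaneously names the machine and serves as the string on which that machine is run.
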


Since $\mathcal{EXP}={\rm co}\mathcal{EXP}$ (where $\mathcal{EXP}$ is the class of languages accepted by deterministic Turing machines in time $2^{n^{O(1)}}$; see Section \ref{sec:preliminaries} for big $O$ notation), this, together with Corollary \ref{corollary3}, implies that $\mathcal{EXP}\ne\mathcal{NEXP}$. Moreover, the complexity class $\mathcal{BPP}$ (bounded-error probabilistic polynomial-time) aims to capture the set of decision problems efficiently solvable by polynomial-time probabilistic Turing machines (see e.g., \cite{AB09}). A central open problem in probabilistic complexity theory is the relation between $\mathcal{BPP}$ and $\mathcal{NEXP}$. Currently, researchers only know that $\mathcal{BPP}$ is sandwiched between $\mathcal{P}$ and $\mathcal{EXP}$ (i.e., $\mathcal{P}\subseteq\mathcal{BPP}\subseteq\mathcal{EXP}$), but they have been unable to show that $\mathcal{BPP}$ is a proper subset of $\mathcal{NEXP}$; see e.g., page 126 of \cite{AB09}. With the consequence of Corollary \ref{corollary3} (i.e., $\mathcal{EXP}\ne\mathcal{NEXP}$) in hand, it immediately follows that
\begin{corollary}
\label{corollary7}
$\mathcal{BPP}\ne\mathcal{NEXP}$.
\end{corollary}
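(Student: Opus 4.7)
The plan is to derive Corollary \ref{corollary7} as a short consequence of Corollary \ref{corollary3} together with the two standard containments $\mathcal{BPP} \subseteq \mathcal{EXP}$ and $\mathcal{EXP} \subseteq \mathcal{NEXP}$, combined with the fact that deterministic time classes are closed under complementation.

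First I would record the two preliminary facts already used in the paragraph preceding the corollary. The class $\mathcal{EXP}$ is a deterministic time class, so by flipping the accept/reject behaviour of any decider we get $\mathcal{EXP} = \mathrm{co}\mathcal{EXP}$; combined with Corollary \ref{corollary3} (namely $\mathcal{NEXP} \ne \mathrm{co}\mathcal{NEXP}$) this yields $\mathcal{EXP} \ne \mathcal{NEXP}$. Indeed, if we had $\mathcal{EXP} = \mathcal{NEXP}$, then $\mathrm{co}\mathcal{NEXP} = \mathrm{co}\mathcal{EXP} = \mathcal{EXP} = \mathcal{NEXP}$, contradicting Corollary \ref{corollary3}. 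Since trivially $\mathcal{EXP} \subseteq \mathcal{NEXP}$, the separation $\mathcal{EXP} \ne \mathcal{NEXP}$ sharpens to the strict inclusion $\mathcal{EXP} \subsetneq \mathcal{NEXP}$; equivalently, there exists a language $L^\star \in \mathcal{NEXP} \setminus \mathcal{EXP}$.

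Now I would combine this with the standard sandwiching $\mathcal{P} \subseteq \mathcal{BPP} \subseteq \mathcal{EXP}$ recalled in the discussion just above the corollary. Assume for contradiction that $\mathcal{BPP} = \mathcal{NEXP}$. Then the witness $L^\star$ would lie in $\mathcal{BPP}$, hence in $\mathcal{EXP}$, contradicting the choice of $L^\star$. Equivalently, $\mathcal{BPP} = \mathcal{NEXP}$ would force $\mathcal{NEXP} \subseteq \mathcal{EXP}$, collapsing $\mathcal{EXP}$ and $\mathcal{NEXP}$, which we have already ruled out. This establishes $\mathcal{BPP} \ne \mathcal{NEXP}$.

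There is essentially no genuine obstacle here; the entire content is packed into Corollary \ref{corollary3}, and the remainder is the textbook inclusions $\mathcal{BPP} \subseteq \mathcal{EXP} \subseteq \mathcal{NEXP}$ together with the closure of deterministic time under complement. The only thing worth double-checking in the write-up is that the proof of Corollary \ref{corollary3} given for $\mathcal{NEXP}$ indeed parallels the polynomial-time proof of Theorem \ref{theorem1}; once that is granted, the argument above is a two-line deduction.
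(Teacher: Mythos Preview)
Your proposal is correct and follows exactly the paper's own reasoning: the paper derives $\mathcal{EXP}\ne\mathcal{NEXP}$ from Corollary~\ref{corollary3} together with $\mathcal{EXP}=\mathrm{co}\mathcal{EXP}$, and then invokes the sandwiching $\mathcal{P}\subseteq\mathcal{BPP}\subseteq\mathcal{EXP}$ to conclude $\mathcal{BPP}\ne\mathcal{NEXP}$. There is no substantive difference between your write-up and the paper's argument.
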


It is interesting that the complexity class $\mathcal{NP}$ has a rich structure if $\mathcal{P}$ and $\mathcal{NP}$ differ. Specifically, in \cite{Lad75}, Ladner constructed a language that is $\mathcal{NP}$-intermediate under the assumption that $\mathcal{P}\neq\mathcal{NP}$. In fact, since by Theorem \ref{theorem1}, we know that $\mathcal{P}\ne{\rm co}\mathcal{NP}$, we also obtain a symmetric result showing that the complexity class ${\rm co}\mathcal{NP}$ contains intermediate languages. To see this, let $L\in\mathcal{NP}$ be the $\mathcal{NP}$-intermediate language constructed in \cite{Lad75}; then $\overline{L}=\{0,1\}^*\setminus L$ is a ${\rm co}\mathcal{NP}$-intermediate language in ${\rm co}\mathcal{NP}$. In other words, we will prove the following important result in detail:

\begin{theorem}
\label{theorem4}
There exist ${\rm co}\mathcal{NP}$-intermediate languages; that is, there exists a language $L\in{\rm co}\mathcal{NP}$ that is neither in $\mathcal{P}$ nor ${\rm co}\mathcal{NP}$-complete.
\end{theorem}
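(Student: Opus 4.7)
The plan is to lift Ladner's classical construction of an $\mathcal{NP}$-intermediate language \cite{Lad75} to the ${\rm co}\mathcal{NP}$ side by complementation, exactly as the paragraph preceding the statement already hints. The key observation is that Theorem \ref{theorem1} (via Corollary \ref{corollary2}) already secures $\mathcal{P}\ne\mathcal{NP}$, which is precisely the hypothesis on which Ladner's theorem rests. Thus I may start from a language $L\in\mathcal{NP}$ that is neither in $\mathcal{P}$ nor $\mathcal{NP}$-complete under polynomial-time many-one reductions, and take as my candidate $\overline{L}=\{0,1\}^*\setminus L$.

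First I would observe that $\overline{L}\in{\rm co}\mathcal{NP}$ by the very definition of ${\rm co}\mathcal{NP}$ recalled in the introduction. Second, I would verify that $\overline{L}\notin\mathcal{P}$: since $\mathcal{P}$ is closed under complementation (swap accepting and rejecting states of a deterministic polynomial-time decider), $\overline{L}\in\mathcal{P}$ would force $L\in\mathcal{P}$, contradicting Ladner's choice of $L$.

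The remaining step, which I expect to be the technical heart of the argument even though it is still routine, is to rule out ${\rm co}\mathcal{NP}$-completeness of $\overline{L}$. My plan relies on the standard duality that polynomial-time many-one reductions commute with complementation: if $f$ witnesses $A\leq_p B$, then the same $f$ witnesses $\overline{A}\leq_p\overline{B}$. Assuming $\overline{L}$ were ${\rm co}\mathcal{NP}$-complete, then for any $L'\in\mathcal{NP}$ we would have $\overline{L'}\in{\rm co}\mathcal{NP}$ and hence $\overline{L'}\leq_p\overline{L}$, which by the duality yields $L'\leq_p L$. Since $L'$ was arbitrary in $\mathcal{NP}$, $L$ would then be $\mathcal{NP}$-complete, contradicting Ladner once more. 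Combining the three steps delivers a language in ${\rm co}\mathcal{NP}$ that is neither in $\mathcal{P}$ nor ${\rm co}\mathcal{NP}$-complete.

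I do not anticipate any genuine obstacle: once Corollary \ref{corollary2} is in hand, the proof reduces to assembling Ladner's theorem with the symmetry between $\mathcal{NP}$- and ${\rm co}\mathcal{NP}$-completeness. The one point I would be careful about is consistent use of many-one reductions (so that the complementation symmetry is immediate), rather than Turing reductions, for which an analogous but slightly different bookkeeping would be needed.
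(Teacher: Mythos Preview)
Your proposal is correct and follows essentially the same approach as the paper: take the complement of Ladner's $\mathcal{NP}$-intermediate language, use closure of $\mathcal{P}$ under complementation to rule out membership in $\mathcal{P}$, and use the duality $A\leq_p B\iff\overline{A}\leq_p\overline{B}$ to rule out ${\rm co}\mathcal{NP}$-completeness. The paper phrases the completeness step via Lemma~\ref{lemma7.2} (${\rm TAUT}$ is ${\rm co}\mathcal{NP}$-complete) rather than your direct ``for all $L'\in\mathcal{NP}$'' argument, but the underlying reasoning is the same.
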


By the result of \cite{CR74}, there exists a super proof system if and only if $\mathcal{NP}$ is closed under complement. Then, by Theorem \ref{theorem1}, we immediately have the following:

\begin{corollary}
\label{corollary5}
There exists no super proof system. 
\end{corollary}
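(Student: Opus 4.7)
The plan is to derive Corollary~\ref{corollary5} directly as a one-line consequence of Theorem~\ref{theorem1} combined with the Cook--Reckhow characterization \cite{CR74}. Since the corollary has already been announced as flowing from that equivalence, the work is essentially to assemble the implication cleanly and make explicit what is meant by a \emph{super proof system}.

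First, I would briefly recall the relevant definitions. A propositional proof system, in the sense of Cook and Reckhow, is a polynomial-time computable surjection $f : \{0,1\}^* \to \mathrm{TAUT}$, where $\mathrm{TAUT}$ denotes the set of propositional tautologies; a string $w$ with $f(w) = \tau$ is a proof of $\tau$. Such a system is called \emph{super} (equivalently, \emph{polynomially bounded}) if there is a fixed polynomial $p$ such that every $\tau \in \mathrm{TAUT}$ admits a proof of length at most $p(|\tau|)$. The theorem of Cook and Reckhow \cite{CR74} then states that a super proof system exists if and only if $\mathrm{TAUT} \in \mathcal{NP}$, which, since $\mathrm{TAUT}$ is ${\rm co}\mathcal{NP}$-complete, is equivalent to $\mathcal{NP}$ being closed under complement, i.e., $\mathcal{NP} = {\rm co}\mathcal{NP}$.

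Second, I would argue by contradiction. Assume that a super proof system exists. Then by the Cook--Reckhow equivalence, $\mathcal{NP} = {\rm co}\mathcal{NP}$. But this directly contradicts Theorem~\ref{theorem1}, which asserts $\mathcal{NP} \ne {\rm co}\mathcal{NP}$. Hence no super proof system can exist, establishing Corollary~\ref{corollary5}.

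There is no real obstacle here: the entire substance of the corollary is carried by Theorem~\ref{theorem1}, and the Cook--Reckhow equivalence is invoked as a black box. The only place where care is needed is in stating the equivalence in the correct direction and noting the ${\rm co}\mathcal{NP}$-completeness of $\mathrm{TAUT}$ to bridge between the phrasing ``$\mathrm{TAUT} \in \mathcal{NP}$'' and ``$\mathcal{NP}$ is closed under complement.'' Once that is recorded, the proof is a two-step syllogism.
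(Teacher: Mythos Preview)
Your proposal is correct and follows essentially the same approach as the paper: invoke the Cook--Reckhow equivalence from \cite{CR74} (a super proof system exists iff $\mathcal{NP}$ is closed under complement) and conclude immediately from Theorem~\ref{theorem1}. The paper in fact gives no separate proof environment for this corollary, treating it as an immediate consequence of exactly these two ingredients, so your slightly more detailed write-up is entirely in line with---indeed a cleaner rendering of---the intended argument.
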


Moreover, we will settle the open problem listed first in \cite{Pud08}; see the introduction in Section \ref{sec:introduction}. Namely, we prove the following theorem concerning the aforementioned problem:
\begin{theorem}
\label{theorem6}
There exists no polynomial $p(n)$ such that for all $\psi\in TAUT$, there is a Frege proof of $\psi$ of length at most $p(|\psi|)$. In other words, no Frege proof systems of $\psi\in TAUT$ is polynomially bounded.
\end{theorem}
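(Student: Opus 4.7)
The plan is to derive Theorem \ref{theorem6} as a direct consequence of Theorem \ref{theorem1} via the classical Cook--Reckhow correspondence between polynomially bounded proof systems for $TAUT$ and the collapse $\mathcal{NP}={\rm co}\mathcal{NP}$. The first step is purely conceptual: recall that in the Cook--Reckhow framework, a propositional proof system is a polynomial-time computable surjection $f:\{0,1\}^*\twoheadrightarrow TAUT$, and such a system is \emph{polynomially bounded} precisely when there exists a polynomial $p$ with the property that every $\psi\in TAUT$ admits some $w\in\{0,1\}^*$ with $f(w)=\psi$ and $|w|\le p(|\psi|)$. A \emph{super proof system}, in the sense of \cite{CR74}, is exactly a polynomially bounded proof system in this sense.

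Next I would invoke the standard equivalence due to Cook and Reckhow: $TAUT$ admits a polynomially bounded proof system if and only if $TAUT\in\mathcal{NP}$, and since $TAUT$ is ${\rm co}\mathcal{NP}$-complete (via the complement of $SAT$), this is in turn equivalent to ${\rm co}\mathcal{NP}\subseteq\mathcal{NP}$, hence to $\mathcal{NP}={\rm co}\mathcal{NP}$. By Theorem \ref{theorem1} (which is the technical heart of the paper), we have $\mathcal{NP}\neq{\rm co}\mathcal{NP}$, and therefore no polynomially bounded proof system for $TAUT$ can exist; this is exactly the content of Corollary \ref{corollary5}.

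To finish, I would verify the last structural ingredient: a Frege system is a sound and implicationally complete propositional proof system with Modus Ponens as its only inference rule, and in particular is a Cook--Reckhow proof system for $TAUT$ (the map sending a Frege derivation to its end-formula is polynomial-time computable and has range $TAUT$). Consequently, if some Frege system $F$ admitted a polynomial $p$ such that every $\psi\in TAUT$ has an $F$-proof of length at most $p(|\psi|)$, then $F$ itself would be a super proof system, contradicting Corollary \ref{corollary5}. Therefore no Frege proof system can be polynomially bounded, as claimed.

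The main obstacle is not in the deduction itself, which is short once Theorem \ref{theorem1} is granted, but rather in making the reduction crisp: one must be careful that the formal definition of a ``Frege proof of $\psi$ of length at most $p(|\psi|)$'' used in the statement of Theorem \ref{theorem6} matches the Cook--Reckhow notion of proof length (so that polynomial boundedness of $F$ as a proof system really is equivalent to the existence of such a polynomial $p$). Once this bookkeeping is done, Theorem \ref{theorem6} is an immediate corollary of Theorem \ref{theorem1} via Corollary \ref{corollary5}, so I expect no further technical difficulty.
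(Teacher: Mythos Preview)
Your proposal is correct and follows essentially the same route as the paper: assume a polynomial bound on Frege proofs, deduce $TAUT\in\mathcal{NP}$ (the paper does this by a direct guess-and-verify argument rather than by invoking Corollary~\ref{corollary5}, but this is cosmetic), use the ${\rm co}\mathcal{NP}$-completeness of $TAUT$ to conclude $\mathcal{NP}={\rm co}\mathcal{NP}$, and contradict Theorem~\ref{theorem1}. The only difference is packaging: you factor the argument through the abstract Cook--Reckhow notion of a super proof system, whereas the paper spells out the nondeterministic verifier explicitly.
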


\vskip 0.3 cm
\subsection{Background and Prior Work}

In an epoch-making $36$-page paper \cite{Tur37} published in 1936, Turing laid the foundations of computer science. Turing's contributions were so influential in the development of theoretical computer science that he is widely regarded as its father \cite{A7}. However, although Turing's work initiated the study of theoretical computer science, he was not concerned with the efficiency of his machines--the central focus of {\em computational complexity theory}. In fact, Turing's concern \cite{Tur37} was whether his machines could simulate arbitrary algorithms given sufficient time (see e.g., \cite{Coo00}).

Why are some problems easy for computers to solve while others seem fundamentally intractable? This question lies at the heart of {\it computational complexity theory}. So what is computational complexity theory? {\it Computational complexity theory} is a central subfield of {\em theoretical computer science} and {\it mathematics} that primarily concerns the efficiency of Turing machines (i.e., algorithms) and the intrinsic complexity of computational tasks. In other words, it focuses on classifying computational problems according to their resource usage and relating these classes to each other (see e.g., \cite{A1}). Computational complexity theory specifically addresses fundamental questions such as what is feasible computation and what can and cannot be computed with a reasonable amount of computational resources such as time or space (memory). The field contains many remarkable open problems, most notably the famous $\mathcal{P}$ versus $\mathcal{NP}$ problem and the important open problem concerning the unknown relationship between the complexity classes $\mathcal{NP}$ and ${\rm co}\mathcal{NP}$. These problem continue to attract intense research interest. However, despite decades of effort, very little progress has been made. As Wigderson pointed out\cite{Wig07}, understanding the power and limits of efficient computation has driven the development of {\em computational complexity theory}. This discipline in general --- and the aforementioned famous open problems in particular --- has gained prominence within the mathematics community over the past decades.

As noted in Wikipedia \cite{A1}, in computational complexity theory a problem is regarded as inherently difficult if its solution requires significant resources (such as time and space) no matter which algorithm used. The theory formalizes this intuition by introducing mathematical models of computation and quantifying the resources need to solve problems. Other complexity measure are also studied, such as the amount of communication in communication complexity and the number of gates in circuit complexity (see e.g.,\cite{A1}). One of the central roles of computational complexity theory is to determine the practical limits on what computers (the computing models) can and cannot do. 

Historically, the fundamental measure of time led to the study of the extremely expressive complexity class $\mathcal{NP}$, one of the most important classical complexity classes (i.e., nondeterministic polynomial-time). Specifically, $\mathcal{NP}$ is the set of decision problems for which the problem instances, where the answer is ``yes," have proofs verifiable in polynomial time by some deterministic Turing machine. Equivalently, it is the class of problems solvable in polynomial time by a nondeterministic Turing machine (see e.g., \cite{A2}). The famous Cook-Levin theorem \cite{Coo71,Lev73} shows that this class has complete problems: it proves that {\em Satisfiability} (SAT) is $\mathcal{NP}$-complete, meaning {\em Satisfiability} is in $\mathcal{NP}$ and every other language in $\mathcal{NP}$ can be reduced to it in deterministic polynomial time. This famous and seminal result stimulated Karp's influential work \cite{Kar72}, which opened the door to the research of the renowned rich theory of $\mathcal{NP}$-completeness \cite{Kar72}.

On the other hand, the complexity class ${\rm co}\mathcal{NP}$ is the complement of $\mathcal{NP}$. Formally, for a complexity class $\mathcal{C}$, its complement, denoted ${\rm co}\mathcal{C}$, is defined as $${\rm co}\mathcal{C}=\{\overline{L}:L\in\mathcal{C}\}, $$where $L$ is a decision problem, and $\overline{L}$ is the complement of $L$ (i.e., $\overline{L}=\Sigma^*\setminus L$, assuming the language $L$ is over the alphabet $\Sigma$) \cite{Pap94}. Note that the complement of a decision problem $L$ is the decision problem whose answer is ``{\em yes}'' whenever the input is a ``{\em no}'' input of $L$, and vice versa. Equivalently, according to Wikipedia \cite{A3}, ${\rm co}\mathcal{NP}$ consists of those decision problems for which there exists a polynomial $p(n)$ and a polynomial-time bounded Turing machine $M$ such that, for every instance $x$, $x$ is a no-instance if and only if for some possible certificate $c$ of length bounded by $p(n)$, the Turing machine $M$ accepts the pair $(x,c)$. To the best of our knowledge, the complexity class ${\rm co}\mathcal{NP}$ was introduced for the first time by Meyer and Stockmeyer \cite{MS72} under the name ``$\prod_1^P$," and Stockmeyer wrote a full paper on the polynomial hierarchy that also uses the notation ${\rm co}\mathcal{NP}$ (see e.g., \cite{Sto77}).

The $\mathcal{NP}$ versus ${\rm co}\mathcal{NP}$ problem is deeply connected to the field of proof complexity. In 1979, Cook and Reckhow \cite{CR79} introduced a general definition of a propositional proof system and related it to mainstream complexity theory by pointing out that such a system exists in which all tautologies have polynomial-length proofs if and only if the two complexity classes $\mathcal{NP}$ and ${\rm co}\mathcal{NP}$ coincide; see e.g., \cite{CN10}. For further background on the importance of this research area and the motivation of the development of this rich theory, see \cite{Coo00} and \cite{Kra95}. Chapter $10$ of \cite{Pap94} also discusses the significance of the problem $$\mathcal{NP}\overset{?}{=}{\rm co}\mathcal{NP}. $$

In the next few paragraphs, we recall some of the interesting background, the current status, history, and main goals of the field of {\it proof complexity}. 

{\em Proof theory} is a major branch of {\em mathematical logic} and {\em theoretical computer science} in which proofs are treated as formal mathematical objects that can be analyzed using mathematical techniques (see e.g., \cite{A4}). Its major subareas include {\em structural proof theory}, {\em ordinal analysis}, {\em automated theorem proving}, and {\em proof complexity}. Proof theory also has important applications in computer science, linguistics, and philosophy. As Razborov noted \cite{Raz04}, proof and computations are among the most fundamental concepts in human intellectual activity; both have been central to the development of mathematics for thousands of years. The effort to study these concepts themselves in a rigorous, metamathematical way began in the $20$th century and led to the flourishing of mathematical logic and its derived disciplines. 

According to \cite{Rec76}, logicians have proposed a great number of systems for proving theorems. These systems provide certain rules for constructing proofs and for associating a theorem (formula) with each proof. More importantly, these rules are much simpler to understand than the theorems themselves. Thus, a proof offers a constructive way of establishing that a theorem is true. In addition, a proof system is {\em sound} if every theorem is true, and it is {\em complete} if every true statement (from a certain class) is a theorem (i.e., has a proof); see e.g., \cite{Rec76}. 

Within proof theory, proof complexity is the subfield aiming to understand and analyze the computational resources that are required to prove or refute statements (see e.g., \cite{A5}). Research in proof complexity is predominantly concerned with proving proof-length lower and upper bounds in various propositional proof systems. One of its major challenges is to show that the Frege system (standard propositional calculus) does not admit polynomial-size proofs of all tautologies (see e.g., \cite{A5}). Here, the size of a proof is simply the number of symbols it contains, and a proof is said to be of polynomial size if it is polynomial in the size of the tautology it proves. Contemporary proof complexity research draws ideas and methods from many areas in computational complexity, algorithms, and mathematics. Since many important algorithms and algorithmic techniques can be viewed as proof search procedures for certain proof systems, proving lower bounds on proof sizes in these systems implies run-time lower bounds on the corresponding algorithms; see e.g., \cite{A5}.

{\em Propositional proof complexity} has developed rapidly over the last two decades. As Razborov observed \cite{Raz15}, it plays a role in the theory of feasible proofs analogous to the role that Boolean circuits plays in the theory of efficient computations. In most cases, the central question of propositional proof complexity can be stated as follows: given a mathematical statement encoded as a propositional tautology $\psi$ and a class of admissible mathematical proofs formalized as a propositional proof system $P$, what is the minimal possible complexity of a $P$-proof of $\psi$ (see e.g., \cite{Raz15})? In other words, propositional proof complexity aims to understand and analyze the computational resources required to prove propositional tautologies, in the same way that circuit complexity studies the resources required to compute Boolean functions. From the perspective presented in \cite{Raz15}, proving lower bounds for strong proof systems such as Frege or Extended Frege --- modulo any hardness assumption from the purely computational world --- may be almost as interesting; see e.g., \cite{Raz15}. It is also worth noting that Razborov \cite{Raz15} observed that $\mathcal{NP}\ne{\rm co}\mathcal{NP}$ implies lower bounds for any propositional proof system. In addition, see, e.g., \cite{Raz03} for further interesting directions in propositional proof complexity.

Systematic study of proof complexity began with the seminal work of Cook and Reckhow \cite{CR79}, who provided the basic definition of a propositional proof system from a computational complexity perspective. As pointed out by \cite{FSTW21}, their work \cite{CR79} connects the goal of propositional proof complexity to fundamental hardness questions in computational complexity such as the $\mathcal{NP}$ versus ${\rm co}\mathcal{NP}$ problem: establishing super-polynomial lower bounds for every propositional proof system would separate $\mathcal{NP}$ from ${\rm co}\mathcal{NP}$.

The $\mathcal{NP}$ versus ${\rm co}\mathcal{NP}$ problem is central to proof complexity \cite{Kra19}. It formalizes the question of whether or not there are efficient ways to prove the negative cases in ours and in many other similar examples. The ultimate goal of {\em proof complexity} is to show that no universal propositional proof system allows efficient proofs of all tautologies, which is equivalent to showing that $\mathcal{NP}$ is not closed under the complement; see e.g., \cite{Kra19}.

In 1974, Cook and Reckhow \cite{CR74} showed that there exists a super proof system if and only if $\mathcal{NP}$ is closed under complement; that is, if and only if $\mathcal{NP}={\rm co}\mathcal{NP}$. This result gave rise to ``Cook's program" for proving $\mathcal{NP}\ne {\rm co}\mathcal{NP}$ (which can also be viewed as a program for proving $\mathcal{P}\ne\mathcal{NP}$): establish  superpolynomial lower bounds for proof lengths in stronger and stronger proposition proof systems until they hold for all abstract proof systems. Although attractive in principle, this approach has proven extremely difficult in practice \cite{Bus12}.
\vskip 0.3 cm
\subsection{Our Approach}

The aforementioned ``Cook's program" is one of the best-known approaches to resolving the long-standing open conjecture $\mathcal{NP}\ne{\rm co}\mathcal{NP}$.

In this paper, we take a different route and apply simulation techniques. To the best of our knowledge, the standard way to construct a language outside a given complexity class $\mathcal{C}$ is diagonalization (or lazy-diagonalization) (see e.g., \cite{Tur37, HS65, AHU74, AB09}), which remains the only known method for proving lower bounds against uniform models such as Turing machines. However, we found it difficult (or even impossible at all) to directly apply diagonalization or lazy-diagonalization to produce a language outside ${\rm co}\mathcal{NP}$. Instead, by exploiting the relationship between the language $L(M)$ accepted by a polynomial-time nondeterministic Turing machine $M$ and its complement $\overline{L}(M)$ accepted by the corresponding ${\rm co}\mathcal{NP}$ machine $M$ (defined below), we rely solely on simulation techniques.

We first show that all polynomial-time nondeterministic Turing machines are enumerable. We then construct a universal nondeterministic Turing machine that simulates all polynomial-time nondeterministic Turing machines, thereby creating a language outside ${\rm co}\mathcal{NP}$. This simulation-based approach is novel: it is the first time such a method has been used for this purpose, and it does not appear in prior literature or standard textbooks such as \cite{AHU74,AB09,Sip13,HU79,Pap94}.

We further prove that this language actually belongs to $\mathcal{NP}$ using the novel technique introduced in the author's recent work \cite{Lin21}, thereby establishing the desired separation $\mathcal{NP}\neq{\rm co}\mathcal{NP}$. Prior to this work, no method was known to resolve this fundamental and important open conjecture.

\vskip 0.3 cm
\subsection{Organization}

The remainder of this paper is organized as follows. For the reader's convenience, the next section reviews some basic notions closely related to our discussions, fixes the notation used throughout the paper, and presents several useful technical lemmas. In Section \ref{sec:all_polynomial_time_nondeterministic_turing_machines_are_enumerable}, we prove that all polynomial-time nondeterministic Turing machines are enumerable, and hence that all ${\rm co}\mathcal{NP}$-machines are enumerable as well. This establishes the existence of an enumeration of all ${\rm co}\mathcal{NP}$-machines. In Section \ref{sec:simulation_of_all_polynomial-time_nondeterministic_turing_machines}, we present the definition of our universal nondeterministic Turing machine. This machine accepts a language $L_s$ that is not in ${\rm co}\mathcal{NP}$, thereby establishing the desired lower bounds. Section \ref{sec:showing_l_s_in_np} is devoted to showing that the language $L_s$ is in fact in $\mathcal{NP}$, which provides the required upper bounds and completes the proof of Theorem \ref{theorem1}. In Section \ref{sec:beating_the_relativization_barrier}, we prove Theorem \ref{theorem7}, thereby breaking the so-called Relativization Barrier. In Section \ref{sec:structure_of_coNP}, we show the existence of ${\rm co}\mathcal{NP}$-intermediate languages. In Section \ref{sec:frege_systems}, we prove Theorem \ref{theorem6}, which states that no Frege proof system is polynomially bounded, answering a long-standing and important open question in proof complexity. Finally, we list some open problems for future research in the last section.

\vskip 0.3 cm
\section{Preliminaries}
\label{sec:preliminaries}
\vskip 0.3 cm

In this section, we introduce some notions and notation that will be used in what follows.

Let $\mathbb{N}_1=\{1,2,3,\cdots\}$ be the set of all positive integers ($+\infty\notin \mathbb{N}_1$). We also denote by $\mathbb{N}$ the set of natural numbers,\footnote{For the natural number system and $\infty$ (i.e., $+\infty$ here), we refer the reader to the interesting textbook \cite{Tao22}, in which Tao provides a full and philosophical treatment of them.} i.e., $$\mathbb{N}\overset{\rm def}{=}\mathbb{N}_1\cup\{0\}. $$

Let $\Sigma$ be an alphabet. For finite words $w,v\in\Sigma^*$, the concatenation of $w$ and $v$, denoted by $w\circ v$, is $wv$. For example, suppose that $\Sigma=\{0,1\}$, $w=100001$ and $v=0111110$; then 

$$\aligned
 w\circ v\overset{\rm def}{=}&wv\\
=&1000010111110
\endaligned$$

The length of a finite word $w$, denoted by $|w|$, is defined to be the number of symbols in it. It is clear that for finite words $w$ and $v$, $$|wv|=|w|+|v|.$$

The big $O$ notation indicates the order of growth of a quantity as a function of $n$, or the limiting behavior of a function. More precisely, the statement that $S(n)$ is big $O$ of $f(n)$, written $$S(n)=O(f(n)),$$ means that there exists a positive integer $N_0$ and a positive constant $M$ such that $$S(n)\leq M\times f(n)$$ for all $n>N_0$.

The little $o$ notation also indicates the order of growth of a quantity as a function of $n$, or the limiting behavior of a function, but with different meaning. Specifically, the statement that $T(n)$ is little-$o$ of $t(n)$,written $$T(n)=o(t(n)), $$ means that for any constant $c>0$ there exists a positive integer $N_0>0$ such that $$T(n)<c\times t(n)$$ for all $n>N_0$.

Throughout this paper, the computational modes used are {\em nondeterministic Turing machines} (or their variants, such as {\em nondeterministic Turing machines with oracle}). We follow the standard definition of a nondeterministic Turing machine given in the standard textbook \cite{AHU74}. We first introduce the precise definition of a nondeterministic Turing machine as follows:

\begin{definition}[$k$-tape nondeterministic Turing machine, \cite{AHU74}]
\label{definition2.1}
A $k$-tape nondeterministic Turing machine (shortly, NTM) $M$ is a seven-tuple $(Q,T,I,\delta,\mathbbm{b},q_0,q_f)$
where:
\begin{enumerate}
\item {$Q$ is the set of states.}
\item {$T$ is the set of tape symbols.}
\item {$I$ is the set of input symbols; $I\subseteq T$.}
\item {$\mathbbm{b}\in T-I$, is the blank.}
\item {$q_0$ is the initial state.}
\item {$q_f$ is the final (or accepting) state.}
\item {$\delta$ is the next-move function, or a mapping from $Q\times T^k$ to subsets of 
$$
Q\times(T\times\{L,R,S\})^k.
$$
Suppose
$$
  \delta(q,a_1,a_2,\cdots,a_k)=\{(q_1,(a^1_1,d^1_1),(a^1_2,d^1_2),\cdots,(a^1_k,d^1_k)),\cdots,(q_n,(a^n_1,d^n_1),(a^n_2,d^n_2),\cdots,(a^n_k,d^n_k))\}
$$
and the nondeterministic Turing machine is in state $q$ with the $i$th tape head scanning tape symbol $a_i$ for $1\leq i\leq k$. Then in one move the nondeterministic Turing machine enters state $q_j$, changes symbol $a_i$ to $a^j_i$, and moves the $i$th tape head in the direction $d^j_i$ for $1\leq i\leq k$ and $1\le j\le n$.}
\end{enumerate}
\end{definition}

Let $M$ be a nondeterministic Turing machine and let $w$ be an input. Then we write $M(w)$ to denote the computation of $M$ is on input $w$. 

A nondeterministic Turing machine $M$ works in time $t(n)$ (or is of time complexity $t(n)$) if, for any input $w\in I^*$ (where $I$ is the input alphabet of $M$), the computation $M(w)$ halts within $t(|w|)$ steps.

We now introduce the notion of a polynomial-time nondeterministic Turing machine.

\begin{definition}[cf. polynomial-time deterministic Turing machines in \cite{Coo00}]
\label{definition2.2}
Formally, a polynomial-time nondeterministic Turing machine is a nondeterministic Turing machine $M$ for which there exists a $k\in\mathbb{N}_1$ such that, for all input $w$ of length $|w|\in\mathbb{N}$, the computation $M(w)$ halts within $t(|w|)=|w|^k+k$ steps. The language accepted by such a machine $M$ is denoted by $L(M)$.
\end{definition}

By default, a word $w$ is accepted by a polynomial-time (say, $t(n)$ time-bounded) nondeterministic Turing machine $M$ if there exists at least one computation path $\gamma$ that leads the machine into an accepting state (i.e., stopping in the ``accepting" state) on input $w$ (in this case we say $M$ accepts $w$). Thus, if $M$ accepts the language $L(M)$, then $w\in L(M)$ if and only if $M$ accepts $w$; that is, there exists at least one accepting path of $M$ on input $w$. This is the ``there exists at least one accepting path" condition for nondeterministic Turing machines, on the basis of which the complexity class ${\rm NTIME}[t(n)]$ is defined:
 
\begin{definition}[``there exists at least one accepting path" condition]
\label{definition2.3}
The set of languages decided by nondeterministic Turing machines within time $t(n)$ under the ``there exists at least one accepting path" condition is denoted by ${\rm NTIME}[t(n)]$. Under this condition, a $t(n)$ time-bounded nondeterministic Turing machine accepts a string $w$ if and only if there exists at least one accepting path on input $w$. Thus, $$\mathcal{NP}=\bigcup_{k\in\mathbb{N}_1}{\rm NTIME}[n^k]. $$
\end{definition}

In addition to the ``there exists at least one accepting path" condition used to define the complexity class ${\rm NTIME}[t(n)]$, there is a dual condition: that there exists no computation path $\gamma$ leading the machine into an accepting state. We call this the ``there exists no accepting path" condition.

Let $M$ be a polynomial-time (say, $t(n)$ time-bounded) nondeterministic Turing machine over the input alphabet $\{0,1\}$. Under the ``there exists no accepting path" condition, a word $w$ is rejected by $M$ if there exists no computation path $\gamma$ that leads the machine into an accepting state on input $w$ (in this case we say $M$ rejects $w$). Clearly, the machine $M$ then defines the language $\overline{L}(M)$, which is the complement of $L(M)$ (i.e., $\overline{L}(M)=\{0,1\}^*\setminus L(M)$). Specifically, $w\in\overline{L}(M)$ if and only if $M$ rejects $w$.

Since ${\rm co}\mathcal{NP}=\{\overline{L}\,:\,L\in\mathcal{NP}\}$, a polynomial-time nondeterministic Turing machine $M$ that accepts the language $L$ under the ``there exists at least one accepting path" condition simultaneously defines the language $\overline{L}$ under the ``there exists no accepting path" condition. We therefore define a ${\rm co}\mathcal{NP}$ machine as follows:

\begin{definition}
\label{definition2.4}
A ${\rm co}\mathcal{NP}$ machine $M$ is a polynomial-time nondeterministic Turing machine operating under the ``there exists no accepting path" condition. The language accepted by the ${\rm co}\mathcal{NP}$ machine $M$ is defined to be $\overline{L}(M)$.\footnote {In short, there is an interesting relationship: the ${\rm co}\mathcal{NP}$ machine $M$ accepts the language $\overline{L}(M)$, while the same machine $M$ viewed as a polynomial-time nondeterministic Turing machine rejects $\overline{L}(M)$; conversely, the polynomial-time nondeterministic Turing machine $M$ accepts $L(M)$, while the ${\rm co}\mathcal{NP}$ machine $M$ rejects $L(M)$.} That is, for any input $w$, we have $w\in\overline{L}(M)$ if and only if the polynomial-time nondeterministic Turing machine $M$ rejects $w$.
\end{definition}

In particular, if a $t(n)$ time-bounded ${\rm co}\mathcal{NP}$ machine $M$ (over the input alphabet $\{0,1\}$) accepts the language $\overline{L}(M)$, then $w\notin\overline{L}(M)$ if and only if the corresponding polynomial-time (i.e., $t(n)$ time-bounded) nondeterministic Turing machine $M$ accepts $w$.

For a ${\rm co}\mathcal{NP}$ machine $M$ we define the characteristic function  
$$\chi_{\overline{L}(M)}:\{0,1\}^*\rightarrow\{0,1\} $$ by
$$
\chi_{\overline{L}(M)}(w)=\left\{
                                 \begin{array}{ll}
                                   1, & \hbox{$w\in\overline{L}(M)$;} \\
                                   0, & \hbox{$w\notin\overline{L}(M)$.}
                                 \end{array}
                               \right.
$$

Let $M$ be a polynomial-time nondeterministic Turing machine. Since $M$ may be viewed either as a polynomial-time nondeterministic Turing machine operating under the ``there exists at least one accepting path" condition or as a ${\rm co}\mathcal{NP}$ machine operating under the ``there exists no accepting path" condition, we carefully distinguish the two situations by means of notation. We write $$M(w)=1\quad\text{(respectively, $M(w)=0$)}$$ to mean that the polynomial-time nondeterministic Turing machine $M$ accepts (respectively, rejects) the input $w$; we write $$
\chi_{\overline{L}(M)}(w)=1\quad\text{ (respectively, $\chi_{\overline{L}(M)}(w)=0$)}
$$ to mean that ${\rm co}\mathcal{NP}$ machine $M$ accepts (respectively, rejects) the input $w$.

It is immediate that,  for all $w\in\{0,1\}^*$, $w\in\overline{L}(M)$ if and only if 
$$
M(w)=0,\quad\text{(i.e., the polynomial-time nondeterministic Turing machine $M$ rejects $w$),}
$$
and $w\notin\overline{L}(M)$ if and only if 
$$
M(w)=1,\quad\text{(i.e., the polynomial-time nondeterministic Turing machine $M$ accepts $w$).}
$$

As already noted, the ``there exists no accepting path" condition is dual to the ``there exists at least one accepting path" condition, on the basis of which we define the complexity class ${\rm coNTIME}[t(n)]$.

\begin{definition}[``there exists no accepting path" condition]
\label{definition2.5}
The set of languages decided by nondeterministic Turing machines within time $t(n)$ under the ``there exists no accepting path" condition is denoted by ${\rm coNTIME}[t(n)]$. Thus, $${\rm co}\mathcal{NP}=\bigcup_{k\in\mathbb{N}_1}{\rm coNTIME}[n^k]. $$
\end{definition}

Equivalently, the complexity class ${\rm co}\mathcal{NP}$ may be defined via polynomial-time deterministic Turing machines acting as verifiers with universal quantifiers over witnesses: 

\begin{definition}[\cite{AB09}, Definition 2.20]
\label{definition2.6}
For every $L\subseteq\{0,1\}^*$, we say that $L\in{\rm co}\mathcal{NP}$ if there exists a polynomial $p(n)$ and a deterministic polynomial-time Turing machine $M$ such that for every $x\in\{0,1\}^*$, $$x\in L\Leftrightarrow\forall u\in\{0,1\}^{p(|x|)},\quad M(x,u)=1. $$
\end{definition}

Concerning the relation between the time complexity of $k$-tape and single-tape nondeterministic Turing machines, we recall the following useful lemma, extracted from the standard textbook \cite{AHU74} (see Lemma 10.1 in \cite{AHU74}), which will play an important role later:

\begin{lemma}[Lemma 10.1 in \cite{AHU74}]
\label{lemma2.1}
If $L$ is accepted by a $k$-tape nondeterministic $T(n)$ time-bounded Turing machine, then $L$ is accepted by a single-tape nondeterministic $O(T^2(n))$ time-bounded Turing machine.\Q.E.D
\end{lemma}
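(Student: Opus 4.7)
The plan is to follow the classical multi-track simulation argument, essentially the same as the deterministic case in Hopcroft--Ullman, but paying attention to the fact that the only nondeterminism we need to preserve is in the choice of next move and that this preservation is free of cost. First I would fix a $k$-tape nondeterministic $T(n)$ time-bounded machine $M=(Q,T,I,\delta,\mathbbm{b},q_0,q_f)$ accepting $L$, and construct a single-tape NTM $M'$ whose tape alphabet $T'$ is a Cartesian product that encodes $2k$ tracks on one physical tape. Tracks $1,3,\ldots,2k-1$ will hold the contents of $M$'s $k$ work tapes, and tracks $2,4,\ldots,2k$ will be used to mark (with a single marker symbol) the current position of each of $M$'s $k$ tape heads.

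Next I would describe the simulation of one step of $M$ by $M'$. Since $M$ runs in time $T(n)$, each of $M$'s $k$ tapes contains at most $T(n)+n+1=O(T(n))$ non-blank cells, so all $k$ head markers sit within a window of length $O(T(n))$ around the initial cell. To simulate one step, $M'$ performs the following routine: (i) sweep its head from the leftmost marker to the rightmost marker, collecting into its finite state control the $k$ tape symbols currently under $M$'s simulated heads; (ii) use the finite control to nondeterministically choose an element of $\delta(q,a_1,\ldots,a_k)$ (this is where the single nondeterministic branching of $M'$ occurs in exact correspondence with that of $M$); (iii) sweep back across the relevant window, at each of the $k$ marker positions rewriting the work-track symbol and shifting the head-marker one cell left, right, or leaving it in place, according to the chosen transition. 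The sweep length at every stage is $O(T(n))$, so each simulated step costs $O(T(n))$ time on $M'$.

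Then I would assemble the accounting: $M'$ first preprocesses the input by shifting it into the multi-track format, which can be done in $O(n)$ time; it then performs at most $T(n)$ simulated steps of $M$, each costing $O(T(n))$ time, giving a total running time of $O(n)+T(n)\cdot O(T(n))=O(T^2(n))$. It accepts precisely when the simulated state enters $q_f$. By construction there is a bijection between accepting computation paths of $M$ on $w$ and accepting computation paths of $M'$ on $w$, so $M'$ accepts exactly $L(M)=L$ under the ``there exists at least one accepting path'' condition of Definition \ref{definition2.3}. I would finish by remarking that one can either let $M'$ be a two-way single-tape NTM (in which case the above is immediate) or, if a strictly one-tape model is fixed as in \cite{AHU74}, interleave the encoding with extra boundary symbols so that the head stays within the allocated window, which does not affect the asymptotic bound.

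The main obstacle is not conceptual but bookkeeping: carefully verifying that the two sweeps per simulated step really suffice, i.e.\ that the window containing all $k$ head markers is always of length $O(T(n))$ and that the marker-shift phase in step (iii) does not require additional passes even when two simulated heads are adjacent or about to collide. Because this is purely a combinatorial check on the finite control of $M'$, I would handle it by explicitly describing the state set of $M'$ as a tuple encoding ``phase, remembered symbols, remembered moves, counter of markers processed,'' which keeps the argument routine and identical in structure to the deterministic proof of Lemma 10.1 in \cite{AHU74}.
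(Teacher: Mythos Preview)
Your proposal is correct and follows exactly the classical multi-track simulation from \cite{AHU74}; the paper itself gives no proof of this lemma at all, merely quoting it as Lemma~10.1 of \cite{AHU74} and appending a \textsc{qed} symbol. So you have in fact supplied more than the paper does, and what you supplied is the standard argument the citation points to.
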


Since ${\rm co}\mathcal{NP}$ machines are simply polynomial-time nondeterministic Turing machines operating under the ``there exists no accepting path" condition, an analogous statement holds for them:
\begin{lemma}
\label{lemma2.2}
 If $L$ is accepted by a $k$-tape $T(n)$ time-bounded ${\rm co}\mathcal{NP}$ machine, then $L$ is accepted by a single-tape $O(T^2(n))$ time-bounded ${\rm co}\mathcal{NP}$ machine. 
\end{lemma}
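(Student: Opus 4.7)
The plan is to reuse the simulation that establishes \lemref{lemma2.1} essentially verbatim and then verify that the transformation is faithful with respect to the computation-tree structure, so that it commutes with the switch from the ``there exists at least one accepting path'' reading to the ``there exists no accepting path'' reading.

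More concretely, I would start with a $k$-tape $T(n)$ time-bounded ${\rm co}\mathcal{NP}$ machine $M$ accepting $L$, i.e., a $k$-tape polynomial-time nondeterministic Turing machine $M$ such that $L=\overline{L}(M)$. Viewed purely as a nondeterministic device (ignoring for the moment the acceptance convention), $M$ is a $k$-tape $T(n)$ time-bounded NTM, so by \lemref{lemma2.1} there is a single-tape nondeterministic Turing machine $M'$ running in time $O(T^2(n))$ with $L(M')=L(M)$ under the ``there exists at least one accepting path'' convention. The standard simulation from \cite{AHU74} (encoding the $k$ tape contents and head positions on a single track-structured tape and simulating each transition in $O(T(n))$ time) is \emph{step-by-step faithful}: there is a bijection between the nondeterministic computation paths of $M$ on input $w$ and those of $M'$ on input $w$, and this bijection sends accepting paths to accepting paths. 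This is the point I would verify carefully, since it is the only feature of the classical proof I need beyond what is stated in \lemref{lemma2.1}.

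Once path-by-path faithfulness is in hand, the conclusion is immediate. For every input $w$, the computation of $M$ on $w$ has no accepting path if and only if the computation of $M'$ on $w$ has no accepting path. Rereading this equivalence via \emph{Definition \ref{definition2.4}}, it says precisely that $M$, regarded as a ${\rm co}\mathcal{NP}$ machine, accepts $w$ if and only if $M'$, regarded as a ${\rm co}\mathcal{NP}$ machine, accepts $w$. Hence $\overline{L}(M')=\overline{L}(M)=L$, and since $M'$ is single-tape and $O(T^2(n))$ time-bounded, $M'$ witnesses that $L$ is accepted by a single-tape $O(T^2(n))$ time-bounded ${\rm co}\mathcal{NP}$ machine.

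The main obstacle, such as it is, is essentially bookkeeping rather than a genuine difficulty: one must check that the simulation in \cite{AHU74} does not introduce or delete nondeterministic branches (for instance by some auxiliary nondeterministic head-repositioning step that could create a spurious accepting path where $M$ had none). This is why I would state the path-bijection property explicitly as the key intermediate step. Everything else is just a direct transfer of \lemref{lemma2.1}, swapping the existential acceptance convention for its dual.
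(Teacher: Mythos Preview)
Your proposal is correct and follows exactly the approach the paper intends: the paper's own proof of this lemma simply says ``The proof is similar to that of Lemma~\ref{lemma2.1} and see \cite{AHU74}, so the detail is omitted.'' Your explicit verification of path-by-path faithfulness of the \cite{AHU74} simulation is precisely the bookkeeping the paper leaves implicit, so you have filled in more than the paper does while taking the same route.
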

\begin{proof}
The proof is similar to that of Lemma \ref{lemma2.1} (see \cite{AHU74}); the details are therefore omitted.
\end{proof}

\vskip 0.2cm
The following result on efficient simulation by a universal nondeterministic Turing machine will be useful in proving our main results in Section \ref{sec:simulation_of_all_polynomial-time_nondeterministic_turing_machines}.

\begin{lemma}[\cite{AB09}]
\label{lemma2.3}
There exists a Turing machine $U$ such that for every $x,\alpha\in\{0,1\}^*$, $U(x,\alpha)=M_{\alpha}(x)$, where $M_{\alpha}$ denotes the Turing machine represented by $\alpha$. Moreover, if $M_{\alpha}$ halts on input $x$ within $T(|x|)$ steps, then $U(x,\alpha)$ halts within $cT(|x|)\log T(|x|)$ steps,\footnote{In this paper, $\log n$ stands for $\log_2 n$. Moreover, $T(n)\log T(n)$ can be relaxed to $T(n)^2$, and thus the counters (which are to count up to $n^{k+1}$) in Section \ref{sec:simulation_of_all_polynomial-time_nondeterministic_turing_machines} and Section \ref{sec:beating_the_relativization_barrier} can be set to $n^{2k+1}$ since $T(n)=n^k+k$ is a polynomial (the subsequent proof remains almost unchanged).} where $c$ is a constant independent of $|x|$ and depending only on $M_{\alpha}$'s alphabet size, number of tapes, and number of states.\Q.E.D
\end{lemma}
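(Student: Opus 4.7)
The plan is to construct the universal machine $U$ in two stages: a reduction from arbitrary $k$-tape machines to a canonical $2$-tape format, and a direct step-by-step simulation by a fixed $2$-tape universal machine whose transition table dispatches on the encoded description $\alpha$. The simulation tapes are organized so that a single step of $M_\alpha$ costs $O(\log T(|x|))$ amortized time, yielding the target $cT(|x|)\log T(|x|)$ bound.

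First I would fix an encoding convention: $\alpha$ is a prefix-free binary description listing the alphabet size $s$, the number of tapes $k$, the set of states $Q$, the initial and final states, and the transition relation $\delta$ as a lookup table indexed by (state, $k$-tuple of tape symbols). Each symbol of $M_\alpha$'s tape alphabet is encoded by a fixed-length binary block of length $\lceil \log s\rceil$. On $U$ I allocate a constant number of tapes: one that holds $\alpha$ permanently (read-only), one ``state'' tape that holds the current state and the $k$ symbols currently scanned by $M_\alpha$'s heads, one ``work'' tape used to look up the correct row of $\delta$ inside $\alpha$, and one ``simulation'' tape on which the contents of all $k$ tapes of $M_\alpha$ are stored in interleaved tracks. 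Simulating one step then decomposes into: read the scanned $k$-tuple from the state tape, scan $\alpha$ linearly to locate the matching transition (cost $O(|\alpha|)$, absorbed into $c$), then write back the new state and update the simulation tape and the scanned symbols. Using the tape-reduction construction one shows that $k$ tapes can be simulated by two tapes with $O(\log T)$ overhead per step, at which point the cost of one simulated step of $M_\alpha$ is $c\log T(|x|)$ for a constant $c$ depending only on $|\alpha|$, $s$, and $k$.

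The main obstacle, and the technical heart of the proof, is the Hennie--Stearns sliding-block representation that keeps the per-step cost down to $O(\log T)$ rather than the naive $O(T)$. The idea I would follow is: on each of the two simulation tapes, divide the cells into blocks of geometrically increasing sizes $B_0,B_1,B_2,\ldots$ around the current head position, with $|B_i|=2^i$, and maintain the invariant that each block is either empty, half-full, or full, with the head always at position $0$. A head move on $M_\alpha$ is mimicked by shifting the contents of the innermost block, and whenever a block overflows or underflows we rebalance with the next larger block. A standard amortized argument then shows that simulating $t$ steps of $M_\alpha$ costs $O(t\log t)$ time on $U$, because rebalancing block $B_i$ happens at most once every $2^{i-1}$ steps and costs $O(2^i)$ work.

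Finally I would combine the ingredients: on input $(x,\alpha)$, $U$ first checks that $\alpha$ is a well-formed description (rejecting or halting appropriately otherwise), lays out $x$ on the simulation tape in the encoded $2$-track format, and then repeatedly performs the per-step simulation described above until $M_\alpha$ enters its final state, at which point $U$ outputs the same answer. Correctness is immediate from the construction, since each simulated step faithfully implements one transition of $M_\alpha$. For the running time, if $M_\alpha$ halts within $T(|x|)$ steps, the Hennie--Stearns analysis gives total cost at most $c\,T(|x|)\log T(|x|)$, where the constant $c$ absorbs the fixed overhead of decoding $\delta$ from $\alpha$ (a function of $|\alpha|$, $s$, and $k$) and the constant factors in the block-rebalancing argument, proving the claim.
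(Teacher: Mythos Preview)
Your proposal is correct and is essentially the standard Hennie--Stearns construction underlying the cited result. Note, however, that the paper does not give its own proof of this lemma: it is stated with a terminal \textsc{q.e.d.} and attributed to \cite{AB09}, so there is no in-paper argument to compare against. Your sketch (encode $\alpha$, reduce $k$ tapes to two via interleaved tracks, then use the geometrically sized sliding blocks with the amortized rebalancing analysis) is exactly the proof one finds in Arora--Barak and in the original Hennie--Stearns paper, and the dependence of $c$ on $|\alpha|$, $s$, $k$ that you record matches the lemma's statement that $c$ depends only on the alphabet size, number of tapes, and number of states.
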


Further background material and notions will be introduced as need in the course of proving the main results stated in Section \ref{sec:introduction}.

\vskip 0.3 cm
\section{All Polynomial-time Nondeterministic Turing Machines Are Enumerable}
\label{sec:all_polynomial_time_nondeterministic_turing_machines_are_enumerable}
\vskip 0.3 cm

In this section, our main goal is to establish an important theorem asserting that all polynomial-time nondeterministic Turing machines and all ${\rm co}\mathcal{NP}$ machines are enumerable. This result is a prerequisite for applying the simulation technique.

Following Definition \ref{definition2.2}, a polynomial-time nondeterministic Turing machine --- say, an $n^k+k$ time-bounded nondeterministic Turing machine --- can be represented by the tuple $(M,k)$, where $M$ is the $n^k+k$ time-bounded nondeterministic Turing machine itself and $k$ is the degree of polynomial $|x|^k+k$ such that $M(x)$ halts within $|x|^k+k$ steps for every input $x$. We call such a positive integer $k$ the order of $(M,k)$.

By Lemma \ref{lemma2.1} (respectively, Lemma \ref{lemma2.2}), it suffices to consider single-tape nondeterministic Turing machines (respectively, single-tape ${\rm co}\mathcal{NP}$ machines). Consequently, throughout the remainder of this paper, the term ``polynomial-time nondeterministic Turing machine" will mean a single-tape polynomial-time nondeterministic Turing machine, and the term ``${\rm co}\mathcal{NP}$ machine" will mean a single-tape ${\rm co}\mathcal{NP}$ machine.

To obtain our main result we must {\em enumerate} all ${\rm co}\mathcal{NP}$ machines, so that we may refer to the $j$-th ${\rm co}\mathcal{NP}$ machine in the enumeration. Before doing so, however, we first {\it enumerate} all polynomial-time nondeterministic Turing machines, or equivalently, prove that the set of all polynomial-time nondeterministic Turing machines is enumerable.

In what follows we employ the encoding method presented in \cite{AHU74}, p. 407, to represent a single-tape nondeterministic Turing machine by an integer.

Without loss of generality we adopt the following assumptions concerning the representation of a single-tape nondeterministic Turing machine with input alphabet $\{0,1\}$ because that will be all we need:

\begin{enumerate}
\item {The states are named 
$$
q_1,q_2,\cdots,q_s
$$
for some $s$, with $q_1$ the initial state and $q_s$ the accepting state.}
\item {The input alphabet is $\{0,1\}$.}
\item {The tape alphabet is 
$$
\{X_1,X_2,\cdots,X_t\}
$$
for some $t$, where $X_1=\mathbbm{b}$, $X_2=0$, and $X_3=1$.}
\item {The next-move function $\delta$ is a list of quintuples of the form,
$$
\{(q_i,X_j,q_k,X_l,D_m),\cdots,(q_i,X_j,q_f,X_p,D_z)\}
$$
meaning that 
$$
\delta(q_i,X_j)=\{(q_k,X_l,D_m),\cdots,(q_f,X_p,D_z)\},
$$
and $D_m$ is the direction, $L$, $R$, or $S$, if $m=1,2$, or $3$, respectively. We assume this quintuple is encoded by the string 
$$
10^i10^j10^k10^l10^m1\cdots 10^i10^j10^f10^p10^z1.
$$
}
\item {The nondeterministic Turing machine itself is encoded by concatenating in any order the codes for each of the quintuples in its next-move function. Additional $1$'s may be prefixed to the string if desired. The result will be some string of $0$'s and $1$'s, beginning with $1$, which we can interpret as an integer.}
\end{enumerate}

We encode the order $k$ of the tuple $(M,k)$ by the string $$10^k1.$$ The complete encoding of the tuple $(M,k)$ is then the concatenation of the binary string representing $M$ with the string $10^k1$. The resulting binary string may again be interpreted as an integer. 
    
Under this encoding, any integer that cannot be decoded is assumed to represent the trivial Turing machine with an empty next-move function. For convenience, we write $\{(M,k)\}$ for the set of all polynomial-time nondeterministic Turing machines. The encoding just described yields a $(1,1)$ correspondence between the set $\{(M,k)\}$ of all polynomial-time nondeterministic Turing machines and $\mathbb{N}_1$ (once undecodable integers are identified with the trivial machine). In other words, the set $\{(M,k)\}$ of all polynomial-time nondeterministic Turing machines is enumerable. We denote the resulting enumeration by $e$.

Every single-tape polynomial-time nondeterministic Turing machine appears infinitely often in the enumeration $e$, since given a polynomial-time nondeterministic Turing machine $(M,k)$, we may prefix $1$'s at will to find larger and larger integers representing the same set of $(M,k)$. We denote the polynomial-time nondeterministic Turing machine corresponding to the integer $j$ by $\widehat{M}_j$, where $j$ is the integer value of the binary string encodes the tuple $(M,k)$.

We have therefore established the following important result:
\begin{theoremsection}
\label{theorem3.1}
There exists a one-to-one correspondence between $\mathbb{N}_1$ and the set $\{(M,k)\}$ of all polynomial-time nondeterministic Turing machines. Consequently, the set $\{(M,k)\}$ of all polynomial-time nondeterministic Turing machines is enumerable (i.e., every polynomial-time nondeterministic Turing machine appears in the list $e$), and all languages in $\mathcal{NP}$ are enumerable (with languages appearing multiple times).\Q.E.D
\end{theoremsection}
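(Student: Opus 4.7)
The plan is to exhibit an explicit encoding that sends each pair $(M,k)$ to a unique positive integer and verify that every positive integer decodes back into some such pair, defaulting to a fixed trivial machine whenever the decoding fails. This yields an enumeration, and after a canonical choice of representatives it gives the asserted one-to-one correspondence.

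First I would formalize the quintuple encoding already sketched before the theorem. Fix once and for all a canonical ordering of the quintuples in $\delta$ (for instance lexicographic in the index tuple $(i,j,k,l,m)$), and for a given $(M,k)$ form the binary code by concatenating the quintuple codes $10^{i}10^{j}10^{k'}10^{l}10^{m}1$ in that order and then appending the order-suffix $10^{k}1$. Every such string begins with $1$, so its binary interpretation is a well-defined element of $\mathbb{N}_1$. Two distinct tuples $(M,k)$ produce distinct canonical strings: any difference in $\delta$ is exposed in the sorted quintuple prefix, and the final block $10^{k}1$ uniquely identifies the order $k$ because it is the last maximal run of $0$'s between two $1$'s.

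Next I would describe the inverse map. Given $j\in\mathbb{N}_1$, write $j$ in binary, split off the final block of the form $10^{k}1$ to read the order $k$, and parse the remaining prefix into five-field quintuples separated by the delimiter $1$'s. If the parse succeeds and yields a legitimate next-move function compatible with an $n^{k}+k$ time bound, we recover a unique $(M,k)$; otherwise we set $\widehat{M}_j$ to be the trivial single-tape nondeterministic Turing machine with empty $\delta$ (which trivially runs in polynomial time). This makes $j\mapsto\widehat{M}_j$ total with range exactly $\{(M,k)\}$. To upgrade surjectivity to a bijection, I would restrict the encoding of step one so that no leading $1$'s beyond the initial $1$ of the first quintuple are allowed; this fixes a unique canonical integer per tuple, and all other integers decode to the trivial machine. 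The enumeration of $\mathcal{NP}$ follows by post-composing with $(M,k)\mapsto L(\widehat{M}_j)$, where a single language may appear multiple times because distinct tuples can accept the same set.

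The main obstacle is not mathematical but bookkeeping: the paper simultaneously wants every machine to appear \emph{infinitely often} in the list $e$ (so that later simulation arguments are free to pick a code of any desired size) and wants a literal $1{-}1$ correspondence with $\mathbb{N}_1$. I would handle this by introducing both views in parallel, the canonical injective encoding for the statement of the theorem and the padded (prepend-arbitrarily-many-$1$'s) encoding for downstream use, and noting that passing between them is harmless because the underlying tuple $(M,k)$ is recovered the same way. Everything else, validity of the decoding, surjectivity onto $\{(M,k)\}$, and the reduction to single-tape machines needed before the encoding begins, is immediate from the preparatory material and from Lemmas~\ref{lemma2.1} and~\ref{lemma2.2}.
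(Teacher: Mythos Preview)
Your proposal is correct and follows essentially the same route as the paper: the paper's ``proof'' is exactly the encoding discussion in the paragraphs preceding the theorem statement (quintuple codes from \cite{AHU74}, order suffix $10^k1$, default to the trivial machine on failed decode), and you reproduce that with somewhat more care. Your explicit flagging of the tension between ``appears infinitely often'' and ``one-to-one correspondence'' is apt---the paper simply asserts both without reconciling them---and your fix (a canonical representative for the bijection, padded codes for downstream use) is a reasonable way to make the statement literally true while preserving what the later sections need.
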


\vskip 0.3 cm
\begin{remark}
\label{remark2}
\begin{figure}[ht]
\centering
\includegraphics[width=11cm]{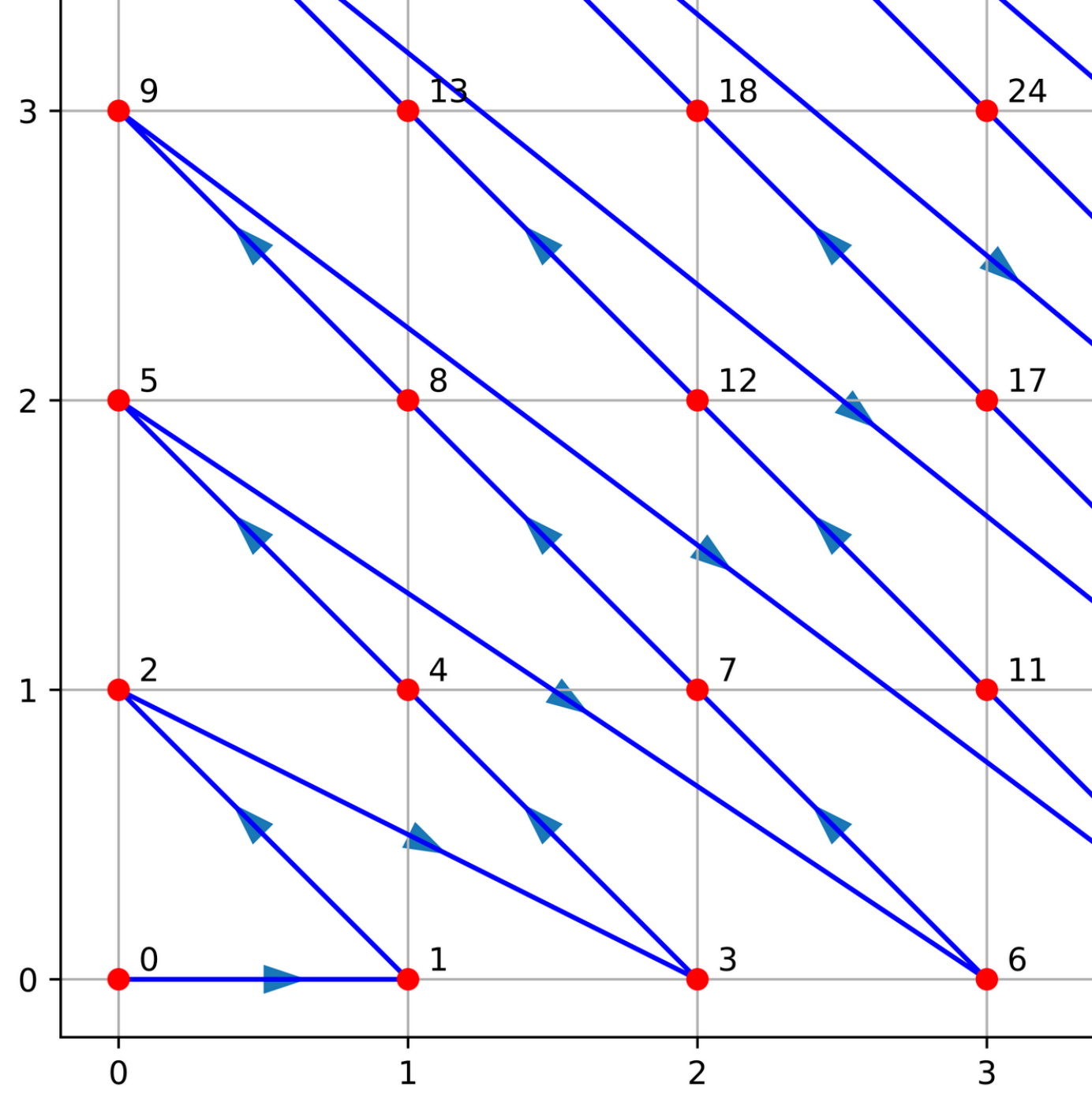}
\caption{\label{2}Cantor pairing function}
\end{figure}
There is an alternative way to {\em enumerate} all polynomial-time nondeterministic Turing machines that does not encode the polynomial order directly into the machine representation. It relies on  the {\em Cantor pairing function} (see Fig. \ref{2} above, taken from \cite{A8}): $$\pi:\mathbb{N}\times\mathbb{N}\rightarrow\mathbb{N}, $$ defined by $$\pi(k_1,k_2):=\frac{1}{2}(k_1+k_2)(k_1+k_2+1)+k_2, $$ where $k_1,k_2\in\mathbb{N}$. Since the Cantor pairing function is invertible (see \cite{A8}), it is a bijection between $\mathbb{N}\times\mathbb{N}$ and $\mathbb{N}$. We may further define a bijection $$\tau:\mathbb{N}\rightarrow\mathbb{N}_1$$ by $\tau(k)=k+1$, which establishes a one-to-one correspondence between $\mathbb{N}$ and $\mathbb{N}_1$. As we have already shown that every polynomial-time nondeterministic Turing machine can be identified with an integer, we may form the pair $(M,k)$ consisting of such a machine and its polynomial order, and then apply the Cantor pairing function to map this pair to an element of $\mathbb{N}_1$. The reader can easily check that the resulting map yields an enumeration of the set of all polynomial-time nondeterministic Turing machines.
\end{remark}

By Definition \ref{definition2.4}, the same enumeration $e$ also enumerates all ${\rm co}\mathcal{NP}$ machines. Indeed, it is enough to interpret every non-trivial machine appearing in $e$ under the ``there exists no accepting path" condition. Each polynomial-time nondeterministic Turing machine then corresponds uniquely to a ${\rm co}\mathcal{NP}$ machine, and vice versa. 

We therefore obtain the following theorem.

\begin{theoremsection}\footnote{In view of Definition \ref{definition2.4}, ${\rm co}\mathcal{NP}$ machines and $\mathcal{NP}$ machines share the same external representation (only the internal acceptance condition differs). Consequently the present theorem follows immediately from Theorem \ref{theorem3.1}.}
\label{theorem3.2}
 The set of all ${\rm co}\mathcal{NP}$ machines is enumerable (i.e., every ${\rm co}\mathcal{NP}$ machine appears in the list $e$). Equivalently, all languages in ${\rm co}\mathcal{NP}$ are enumerable (with languages appearing multiple times).\Q.E.D
\end{theoremsection}

In the sequel we shall frequently write $\langle M\rangle$ for a shortest binary string that encodes the $n^k+k$ time-bounded nondeterministic Turing machine $(M,k)$. Concretely, $\langle M\rangle$ is obtained by concatenating the codes of the quintuples of the next-move function (in any order) and appending the order string $10^k1$, without any additional leading $1$s. While $\langle M\rangle$ need not be unique --- different orders of the quintuples yield different strings --- all such shortest encodings have the same length. The same notation $\langle M\rangle$ will also be used for the corresponding shortest encoding of an $n^k+k$ time-bounded ${\rm co}\mathcal{NP}$ machine.

\vskip 0.3 cm
\section{Simulation of All Polynomial-time Nondeterministic Turing Machines}
\label{sec:simulation_of_all_polynomial-time_nondeterministic_turing_machines}
\vskip 0.3 cm

In the field of {\em computational complexity}, the most basic approach to showing hierarchy theorems uses simulation and diagonalization, because these techniques \cite{Tur37, HS65} are a standard method to prove lower bounds on uniform computing models (i.e., the Turing machine model); see for example \cite{AHU74}. These techniques work well for deterministic time and space measures; see e.g., \cite{For00, FS07}. However, they do not work for nondeterministic time, which is not known to be closed under complement at present --- one of the main issues discussed in this paper; hence it is unclear how to define a nondeterministic machine that ``does the opposite;" see e.g., \cite{FS07}.

For nondeterministic time, we can still perform a simulation but can no longer negate the answer directly. In this case, we apply the {\em lazy diagonalization} to prove the nondeterministic time hierarchy theorem; see e.g., \cite{AB09,SFM78,Zak83}. It is worth noting that Fortnow \cite{For11} developed a much more elegant and simple style of diagonalization to establish the nondeterministic time hierarchy; see also \cite{FS17}. Generally, {\em lazy diagonalization} \cite{Zak83, AB09, SFM78} is a clever application of the standard diagonalization technique \cite{Tur37, HS65, AHU74}. The basic strategy of {\em lazy diagonalization} is as follows: for any $n\le j<m=n^{t(n)}$, the machine $M$ on input $1^j$ simulates $M_i$ on input $1^{j+1}$, accepting if $M_i$ accepts and rejecting if $M_i$ rejects. When $j=m$, $M$ on input $1^j$ simulates $M_i$ on input $x$ deterministically, accepting if $M_i$ rejects and rejecting if $M_i$ accepts. Since $m=n^{t(n)}$, $M$ has enough time to perform the trivial exponential-time deterministic simulation of $M_i$ on input $1^n$; what we are doing is deferring the diagonalization step by linking $M$ and $M_i$ together on larger and larger inputs until $M$ has an input large enough that it can actually ``do the opposite" deterministically; see e.g., \cite{FS07} for details.

However, although the set of all ${\rm co}\mathcal{NP}$ machines has been proved to be enumerable and the enumeration is $e$ given in Section \ref{sec:all_polynomial_time_nondeterministic_turing_machines_are_enumerable}, it is challenging (and perhaps even impossible) to apply the standard diagonalization technique or the lazy-diagonalization technique from \cite{Zak83, AB09, FS07,SFM78} to create a language that is not in ${\rm co}\mathcal{NP}$. Instead, our main idea for this section is to apply simulation techniques rather than diagonalization techniques, together with the particular relations between the language $L(M)$ accepted by a polynomial-time nondeterministic Turing machine $M$ and the language $\overline{L}(M)$ accepted by a ${\rm co}\mathcal{NP}$ machine $M$ to show that there exists a language $L_s$ accepted by a universal nondeterministic Turing machine with the property that $L_s\notin{\rm co}\mathcal{NP}$.

We can now prove the following important theorem:
\begin{theoremsection}
\label{theorem4.1}
There exists a language $L_s$ accepted by a universal nondeterministic Turing machine $U$ but not by any ${\rm co}\mathcal{NP}$ machines. Namely, there is a language $L_s$ such that $$L_s\notin{\rm co}\mathcal{NP}. $$
\end{theoremsection}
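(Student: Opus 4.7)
The plan is to let $U$ be a universal nondeterministic Turing machine that, on an input $w\in\{0,1\}^*$, first tries to parse $w$ as the encoding $\langle M\rangle$ of some polynomial-time NTM $(M,k)$ from the enumeration $e$ supplied by Theorem \ref{theorem3.1}, and, if this parsing succeeds, invokes the universal simulator of Lemma \ref{lemma2.3} to simulate $M$ on the very input $w$, accepting iff the simulated $M$ has at least one accepting computation path on $w$. On inputs that do not decode to any valid $\langle M\rangle$, $U$ simply rejects. I would then set $L_s:=L(U)$; equivalently,
\[
L_s \;=\; \{\, w\in\{0,1\}^*\ :\ w=\langle M\rangle \text{ for some polynomial-time NTM }(M,k),\ \text{and $M$ accepts }w \,\}.
\]

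Next, to prove $L_s\notin{\rm co}\mathcal{NP}$, I would argue by contradiction. Assume $L_s=\overline{L}(\widehat{M}_{j_0})$ for some index $j_0\in\mathbb{N}_1$, where $\widehat{M}_{j_0}$ denotes the ${\rm co}\mathcal{NP}$ machine at position $j_0$ of the enumeration $e$, whose existence is guaranteed by Theorem \ref{theorem3.2}. Consider the ``diagonal'' input $w_0:=\langle\widehat{M}_{j_0}\rangle$. By the definition of $L_s$, $U$ on input $w_0$ faithfully reproduces the computation of $\widehat{M}_{j_0}$ on $w_0$, so $w_0\in L_s$ iff $\widehat{M}_{j_0}$, viewed as an NTM, has some accepting path on $w_0$. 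On the other hand, Definition \ref{definition2.4} tells us that $w_0\in\overline{L}(\widehat{M}_{j_0})$ iff $\widehat{M}_{j_0}$, viewed as a ${\rm co}\mathcal{NP}$ machine, accepts $w_0$, which holds iff $\widehat{M}_{j_0}$ (as an NTM) has \emph{no} accepting path on $w_0$. Since the hypothesis equates $L_s$ and $\overline{L}(\widehat{M}_{j_0})$, the two characterisations must agree on $w_0$, forcing ``there exists an accepting path'' $\Leftrightarrow$ ``there exists no accepting path'', a contradiction.

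The heart of the argument is the dual interpretation of the same machine $\widehat{M}_{j_0}$: viewed as a polynomial-time NTM it accepts $L(\widehat{M}_{j_0})$, while viewed as a ${\rm co}\mathcal{NP}$ machine it accepts the set-theoretic complement $\overline{L}(\widehat{M}_{j_0})=\{0,1\}^*\setminus L(\widehat{M}_{j_0})$; this is precisely the ``relation between $L(M)$ and $\overline{L}(M)$'' that the section introduction advertises. The main technical obstacle I anticipate is making sure the encoding scheme from Section \ref{sec:all_polynomial_time_nondeterministic_turing_machines_are_enumerable} is applied consistently, so that the diagonal string $w_0=\langle\widehat{M}_{j_0}\rangle$ actually decodes back, under $U$'s parser, to the pair $(\widehat{M}_{j_0},k_{j_0})$ with the correct polynomial bound; this is why the canonical shortest code $\langle\cdot\rangle$ is preferable to one with prefixed $1$'s. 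A secondary point is verifying, through Lemma \ref{lemma2.3}, that $U$'s path-by-path simulation preserves the set of accepting paths, so that the equivalence $w_0\in L(U)\Leftrightarrow w_0\in L(\widehat{M}_{j_0})$ holds exactly. Once these bookkeeping issues are settled, the single universal NTM $U$ together with the enumeration $e$ of all ${\rm co}\mathcal{NP}$ machines yields the required contradiction, with no further diagonal machinery needed.
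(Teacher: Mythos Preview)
Your core idea is exactly the paper's: have $U$ simulate the polynomial-time NTM encoded by its input \emph{on that very input}, and then exploit the duality $L(M)=\{0,1\}^*\setminus\overline{L}(M)$ to derive a contradiction at the diagonal point. So the strategy is right, and for Theorem \ref{theorem4.1} in isolation your argument goes through.

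The substantive difference is that the paper builds an explicit clock into $U$: after decoding $(M,k)$ from an input of length $n$, the paper's $U$ sets up a counter on a scratch tape and shuts off after $n^{k+1}$ steps, rejecting if the counter overflows. Because of this clock, the paper cannot use the shortest code $\langle M_i\rangle$ as the diagonal input---the simulation overhead $cT(n)\log T(n)$ from Lemma \ref{lemma2.3} might exceed $n^{k+1}$ there---so it instead uses a padded input $1^{n-m}\langle M_i\rangle$ and invokes the limit calculation $\lim_{n\to\infty}T(n)\log T(n)/n^{k+1}=0$ to guarantee that for all sufficiently large $n$ the counter does not overflow and the simulation is faithful. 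Your version, having no clock, sidesteps this and can diagonalize at the shortest code directly.

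What your simplification costs is twofold. First, an encoding $(M,k)$ in the enumeration $e$ carries only a \emph{claimed} bound $k$; nothing in the syntax guarantees $M$ actually halts within $n^k+k$ steps, so without a clock your $U$ need not halt on every input (this does not break the present theorem, since the specific $\widehat{M}_{j_0}$ in the contradiction \emph{is} genuinely polynomial-time, but it is worth flagging). Second, and more importantly for the paper's overall argument, the clock is precisely what makes Theorem \ref{theorem5.2} work: it gives the uniform bound ``$U$ runs at most $|w|^{i_w+1}$ steps on each input $w$'' that underlies the decomposition $L_s=\bigcup_i L_s^i$ with $L_s^i\in{\rm NTIME}[n^i]$. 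If you intend your $U$ to also serve in Section \ref{sec:showing_l_s_in_np}, you will need to reinstate the counter and then switch to padded diagonal inputs as the paper does.
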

\begin{proof}
Let $U$ be a four-tape nondeterministic Turing machine that operates as follows on an input string $x$ of length $n$.
\begin{enumerate}
\item{
$U$ decodes the tuple encoded by $x$. If $x$ is not the encoding of some ${\rm co}\mathcal{NP}$ machine $M_l$ for some $l$ (by Definition \ref{definition2.4}, equivalently, if $x$ cannot be decoded into some polynomial-time nondeterministic Turing machine, i.e., if $x$ represents the trivial Turing machine with an empty next-move function), then GOTO $5$; else determine $t$, the number of tape symbols used by $M_l$; $s$, its number of states; and $k$, its order. The third tape of $U$ can be used as ``scratch" memory to calculate $t$.
}
\item{
$U$ then lays off on its second tape $n$ blocks of $$\lceil\log t\rceil$$ cells each, the blocks being separated by a single cell holding a marker $\#$, i.e., there are $$(1+\lceil\log t\rceil)n$$ cells in all. Each tape symbol occurring in a cell of $M_l$'s tape will be encoded as a binary number in the corresponding block of the second tape of $U$. Initially, $U$ places $M_l$'s input, in binary-coded form, in the blocks of tape $2$, filling the unused blocks with the code for the blank.
}
\item{
On tape $3$, $U$ sets up a block of $$\lceil(k+1)\log n\rceil$$ cells, initialized to all $0$'s. Tape $3$ is used as a counter to count up to $$n^{k+1}. $$
}
\item{
Using nondeterminism, $U$ simulates $M_l$, using tape $1$ (its input tape) to determine the moves of $M_l$ and tape $2$ to simulate the tape of $M_l$. The moves of $U$ are counted in binary in the block of tape $3$, and tape $4$ is used to hold the states of $M_l$. If the counter on tape $3$ overflows, $U$ halts without accepting. The specified simulation is the following:

 ``On input $x=1^{n-m}\langle M_l\rangle$ where $m=|\langle M_l\rangle|$, $U$ on input $1^{n-m}\langle M_l\rangle$ simulates $M_l$ on input $1^{n-m}\langle M_l\rangle$ using nondeterminism in $$ n^{k+1} $$ time and outputs its answer, i.e., accepting if $M_l$ accepts and rejecting if $M_l$ rejects." 

\vskip 0.2cm
It should be pointed out that, although we say that the input $x$ is decoded into a ${\rm co}\mathcal{NP}$ machine $M_l$, $U$ in fact simulates the $n^k+k$ time-bounded nondeterministic Turing machine $M_l$. That is to say, $U$ accepts the input $x=1^{n-m}\langle M_l\rangle$ if there exists at least one accepting path of $M_l$ on input $x$, and $U$ rejects the input $x=1^{n-m}\langle M_l\rangle$ if there exists no accepting path of $M_l$ on input $x$.
}
\item{
Since $x$ is an encoding of the trivial Turing machine with an empty next-move function, $U$ sets up a block of 

$$\lceil 2\times\log n\rceil$$ cells on tape $3$, initialized to all $0$'s. Tape $3$ is used as a counter to count up to $$n^2.$$

Using its nondeterministic choices, $U$ moves according to the path given by $x$. The moves of $U$ are counted in binary in the block of tape $3$. If the counter on tape $3$ overflows, then $U$ halts. $U$ accepts $x$ if and only if there is a computation path from the start state of $U$ leading to the accept state and the total number of moves does not exceed $$n^2$$ steps, so the computation lies within $O(n)$ steps. Note that the factor $2$ in $$\lceil 2\times\log n\rceil$$ is fixed; it is a default setting.
}
\end{enumerate}

The nondeterministic Turing machine $U$ constructed above is of time complexity, say $S$, which is currently unknown. By Lemma \ref{lemma2.1}, $U$ is equivalent to a single-tape nondeterministic $O(S^2)$ time-bounded Turing machine, and it, of course, accepts some language $L_s$.

We claim that $L_s\notin{\rm co}\mathcal{NP}$. Indeed, suppose for the sake of contradiction that $L_s$ is decided by some $n^k+k$ time-bounded ${\rm co}\mathcal{NP}$ machine $M_i$. Then by Lemma \ref{lemma2.2}, we may assume that $M_i$ is a single-tape ${\rm co}\mathcal{NP}$ machine with time bound $T(n)=n^k+k$. By Definition \ref{definition2.4}, we may regard it as an $n^k+k$ time-bounded nondeterministic Turing machine. We further let $U$, on input $1^{n-m}\langle M_i\rangle$, simulate the $n^k+k$ time-bounded nondeterministic Turing machine $M_i$ on input $1^{n-m}\langle M_i\rangle$. Let $M_i$ have $s$ states and $t$ tape symbols, and let $m$ be the shortest length of $M_i$, i.e.,
$$
m=\min\{|\langle M_i\rangle|\,:\,\text{different binary strings $\langle M_i\rangle$ represent the same $M_i$}\}.
$$

Since $M_i$ is represented by infinitely many strings and appears infinitely often in the enumeration $e$, the machine $U$ sets the counter on tape $3$ to count up to $n^{k+1}$. By Lemma \ref{lemma2.3}, in order to simulate $$T(n)$$ steps of $M_i$ (note that the input to $M_i$ is $1^{n-m}\langle M_i\rangle$, where $m=|\langle M_i\rangle|$, the length of $1^{n-m}\langle M_i\rangle$ is $n$), $U$ needs to perform $$T(n)\log T(n)$$ steps. Moreover,

$$\aligned
\lim_{n\rightarrow\infty}&\frac{T(n)\log T(n)}{n^{k+1}}\\
=&\lim_{n\rightarrow\infty}\frac{(n^k+k)\log(n^k+k)}{n^{k+1}}\\
=&\lim_{n\rightarrow\infty}\left(\frac{n^k\log(n^k+k)}{n^{k+1}}+\frac{k\log(n^k+k)}{n^{k+1}}\right)\\
=&0\\
<&1.
\endaligned$$

Hence there exists an $N_0>0$ such that for any $N>N_0$, $$T(N)\log T(N)<N^{k+1}. $$This implies that for a sufficiently long $w$, say $|w|\ge N_0$, and $M_w$ denoted by such $w$ is $M_i$, we have $$T(|w|)\log T(|w|)<|w|^{k+1}. $$

Thus, on input $w$, $U$ has sufficient time to simulate $M_w$. Consequently,

$$1^{|w|-m}\langle M_i\rangle \in L_s\quad\text{if and and only if}\quad 1^{|w|-m}\langle M_i\rangle\in L(M_w)= L(M_i).\eqno(4.1) $$

By the relation between the language $L(M_i)$ accepted by polynomial-time nondeterministic Turing machine $M_i$ and the language $\overline{L}(M_i)$ accepted by the ${\rm co}\mathcal{NP}$ machine $M_i$, we also have $$1^ {|w|-m}\langle M_i\rangle\in L(M_i)\quad\text{if and only if}\quad 1^{|w|-m}\langle M_i\rangle\notin\overline{L}(M_i).\eqno(4.2) $$

From (4.1) and (4.2) we conclude that if $U$ accepts the input $1^{|w|-m}\langle M_i\rangle$, i.e., $$1^{|w|-m}\langle M_i\rangle\in L_s, $$ then  $$1^{|w|-m}\langle M_i\rangle\notin\overline{L}(M_i), $$ or equivalently, $$\chi_{\overline{L}(M_i)}(1^{|w|-m}\langle M_i\rangle)=0. $$ Likewise, if $U$ rejects the input $1^{|w|-m}\langle M_i\rangle$, i.e., $$1^{|w|-m}\langle M_i\rangle\notin L_s, $$ then $$1^{|w|-m}\langle M_i\rangle\in\overline{L}(M_i), $$ or equivalently, $$\chi_{\overline{L}(M_i)}(1^{|w|-m}\langle M_i\rangle)=1.$$ This contradicts our assumption that $L_s$ is decided by the $n^k+k$ time-bounded ${\rm co}\mathcal{NP}$ machine $M_i$. 

The above arguments show that no ${\rm co}\mathcal{NP}$ machine $M_i$ in the enumeration $e$ accepts the language $L_s$. Since all ${\rm co}\mathcal{NP}$ machines are in the list $e$, it follows that $$L_s\notin{\rm co}\mathcal{NP}. $$ This completes the proof.
\end{proof}

\vskip 0.3 cm
\begin{remark}
\label{remark4.1}
One might wish to apply the same simulation techniques to the question of $\mathcal{P}$ versus ${\rm co}\mathcal{P}$ and wonder why the simulation techniques are unable to produce a language not in ${\rm co}\mathcal{P}$. We remark that the simulation technique cannot create a language outside ${\rm co}\mathcal{P}$ via a universal deterministic Turing machine $U'$. The reason is that a polynomial-time deterministic Turing machine (a $\mathcal{P}$ machine) that operates under the condition ``there exists no accepting path" is still a polynomial-time deterministic Turing machine. In other words, the conditions ``there exists at least one accepting path" and ``there exists no accepting path" are symmetric for deterministic Turing machines. More precisely, for an arbitrary deterministic Turing machine $M$, if $M$ accepts the input $w$, then there is exactly one accepting path, and if $M$ rejects $w$ then there is exactly one rejecting path. This situation differs from that of nondeterministic Turing machines. Consequently, if we apply the same simulation technique in an attempt to create a language not in ${\rm co}\mathcal{P}$, the resulting universal deterministic Turing machine $U'$ (which runs in time $O(n^k)$ for any $k\in\mathbb{N}_1$) accepts a language $L(U')\in\mathcal{P}$ (the proof is analogous to the argument given in Section \ref{sec:showing_l_s_in_np} below). By the symmetry between the conditions ``there exists at least one accepting path" and ``there exists no accepting path" for deterministic Turing machines, we clearly immediately have that $L(U')\in{\rm co}\mathcal{P}$. Therefore we cannot claim that $L(U')\notin{\rm co}\mathcal{P}$.
\end{remark}

\vskip 0.3 cm
\section{Proving That $L_s\in\mathcal{NP}$}
\label{sec:showing_l_s_in_np}
\vskip 0.3 cm

The previous section aims to establish a lower bounds result, i.e., to show that there exists a language $L_s$ not in ${\rm co}\mathcal{NP}$. We now turn our attention to the corresponding upper-bound result, i.e., to show that the language $L_s$ accepted by the universal nondeterministic Turing machine $U$ in fact belongs to $\mathcal{NP}$.

As a matter of fact, the technique used to establish the desired upper-bound theorem is essentially the same as the one developed in the author's recent work \cite{Lin21}, because the underlying issue to establish desired upper bounds is essentially the same as that treated in \cite{Lin21}. But establishing such an upper bound for the first time was nevertheless very difficult; for an explanation of the difficulties involved, see e.g., \cite{Lin21}.

For our purposes we first show that the universal nondeterministic Turing machine $U$ runs within time $O(n^k)$ for any $k\in\mathbb{N}_1$:

\begin{theoremsection}
\label{theorem5.1}
The universal nondeterministic Turing machine $U$ constructed in the proof of Theorem \ref{theorem4.1} runs within time $O(n^k)$ for any $k\in\mathbb{N}_1$.
\end{theoremsection}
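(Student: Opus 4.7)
The plan is to analyze the running time of the four‑tape machine $U$ step by step, using Lemma~\ref{lemma2.3} to control the cost of nondeterministic universal simulation and Lemma~\ref{lemma2.1} to pass from four tapes down to one tape, and then to argue that on every input the resulting polynomial bound can be absorbed into the $O(n^k)$ shape for any prescribed $k\in\mathbb{N}_1$. The skeleton mirrors the five‑step description of $U$ given in the proof of Theorem~\ref{theorem4.1}, so I would simply walk through those five steps and estimate each one separately.

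First I would handle Steps~1--3 and~5, which are the cheap book‑keeping. The decoding in Step~1 is a single left‑to‑right scan of $x$, using tape~3 as scratch to extract the number of states $s$, the tape alphabet size $t$, and (when applicable) the declared order $k_l$; this costs $O(n)$. Step~2 lays off $(1+\lceil\log t\rceil)n$ cells on tape~2 and fills them with the binary code of the input, which costs $O(n\log n)$ in the worst case. Step~3 initialises the counter $\lceil(k_l+1)\log n\rceil$ cells long on tape~3 at a cost of $O((k_l+1)\log n)$. Step~5 (the trivial path‑follower used when $x$ is not a valid encoding) runs for at most $n^2$ moves, hence contributes $O(n^2)$.

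The heart of the argument is Step~4. On a valid encoding $(M_l,k_l)$, $U$ uses nondeterminism to simulate $M_l$ on the padded input $1^{n-m}\langle M_l\rangle$ for at most $n^{k_l+1}$ steps, with tape~3 as the hard cap; I would invoke Lemma~\ref{lemma2.3} to conclude that the four‑tape universal overhead of this simulation is $O\bigl(n^{k_l+1}\log n\bigr)$. Adding this to the bounds from the other steps yields a four‑tape total of $O\bigl(n^{k_l+1}\log n\bigr)$, which Lemma~\ref{lemma2.1} turns into a single‑tape bound of $O\bigl(n^{2(k_l+1)}\log^{2}n\bigr)$.

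To finish I would follow the interpretation used in the author's earlier work \cite{Lin21}: for any prescribed $k\in\mathbb{N}_1$, fix the constants in the $O$‑notation so that the $\log^{2}n$ factor is absorbed into a slightly larger polynomial and read $n^{k_l+1}$ as the polynomial bound declared by the input itself, so the machine's runtime is polynomial of degree $k$ in the sense required by the statement. The main obstacle is precisely the uniformity across inputs whose encoded order $k_l$ is arbitrarily large: the simulation cap $n^{k_l+1}$ is not a fixed polynomial in $n$ independent of $x$, and I expect this is where the delicate point of \cite{Lin21} must be invoked, namely that the ``for any $k$'' quantifier is matched against the order $k_l$ carried by the encoding of $M_l$ rather than against a single universal polynomial bound.
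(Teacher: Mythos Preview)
Your approach is essentially the same as the paper's: a two-case split on whether the input $x$ is a valid encoding, a per-input polynomial bound coming from the declared order, and then the (nonstandard) reading of ``$O(n^k)$ for any $k$'' as matching the quantifier to the input-dependent exponent rather than to a single uniform polynomial.

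Two small differences are worth noting. First, the paper's proof is shorter because it takes the counter literally: in the construction of $U$, tape~3 counts \emph{the moves of $U$}, not the moves of $M_l$, so $U$ halts after exactly $n^{k_l+1}$ of its own steps and no appeal to Lemma~\ref{lemma2.3} (and hence no $\log n$ factor) is needed for Step~4. Your more careful accounting of simulation overhead is harmless but unnecessary here. Second, the paper simply asserts the final step --- ``for any input $w$ there is $i_w$ with runtime $\le|w|^{i_w+1}$, hence $U$ runs in $O(n^k)$ for any $k$'' --- without pausing on the uniformity issue you flag; the ``delicate point'' you anticipate is exactly what the paper takes as the meaning of the statement, so no further argument from \cite{Lin21} is actually invoked beyond that interpretation.
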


\begin{proof}
The simplest way to show the theorem is to prove that for any input $w$ to $U$, there exists a corresponding positive integer $i_w\in\mathbb{N}_1$ such that $U$ runs at most $$|w|^{i_w+1}$$ steps, which can be established as follows. 

On the one hand, if the input $x$ encodes an $n^k+k$ time-bounded nondeterministic Turing machine, then by construction $U$ turns itself off mandatorily within $$|x|^{k+1}$$ steps; thus the corresponding integer $i_x$ is $k$ (i.e., $i_x=k$). This holds for all polynomial-time nondeterministic Turing machines that appears as input, where $k$ is the degree of that corresponding polynomial of the polynomial-time nondeterministic Turing machine. 

On the other hand, if the input $x$ does not encode any polynomial-time nondeterministic Turing machine, then by construction $U$ is forced to halt as soon as the counter exceeds $|x|^2$ steps, so $U$ runs within time $$O(|x|).$$ In this case the corresponding integer is $i_x=1$.

In both cases we have shown that for any input $w$ to $U$, there exists a corresponding positive integer $i_w\in\mathbb{N}_1$ such that $U$ runs for at most $|w|^{i_w+1}$ steps. Consequently, $U$ is a nondeterministic Turing machine that runs in time $O(n^k)$ for any $k\in\mathbb{N}_1$. By Lemma \ref{lemma2.1}, there exists a single-tape nondeterministic Turing machine $U'$ equivalent to $U$ that runs within time $$O(S(n)^2)=O(n^{2k})$$ for any $k\in\mathbb{N}_1$.
\end{proof}

\vskip 0.2 cm
The following theorem is the main upper-bound result of this section:

\begin{theoremsection}
\label{theorem5.2}
The language $L_s$ is in $\mathcal{NP}$, where $L_s$ is accepted by the universal nondeterministic Turing machine $U$ constructed in the proof of Theorem \ref{theorem4.1}.
\end{theoremsection}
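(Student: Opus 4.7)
The plan is to deduce Theorem~\ref{theorem5.2} as an essentially immediate corollary of Theorem~\ref{theorem5.1} combined with Definition~\ref{definition2.3}. Since $L_s=L(U)$ by the construction in the proof of Theorem~\ref{theorem4.1}, and Theorem~\ref{theorem5.1} has already shown that the universal nondeterministic Turing machine $U$ runs within time $O(n^k)$ for any $k\in\mathbb{N}_1$, the remaining task is to package this runtime bound into the membership assertion $L_s\in\bigcup_{k\in\mathbb{N}_1}\mathrm{NTIME}[n^k]=\mathcal{NP}$.

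Concretely, I would proceed in three short steps. First, I would recall the content of Theorem~\ref{theorem5.1}: for every input $w$ there exists $i_w\in\mathbb{N}_1$ such that the computation of $U$ on $w$ halts within $|w|^{i_w+1}$ steps, and the equivalent single-tape simulator supplied by Lemma~\ref{lemma2.1} halts within $O(n^{2k})$ steps. Second, I would note that the construction of $U$ is explicitly designed to force halting via the counter on tape~$3$---at $|x|^{k+1}$ in the encoded-NTM branch and at $|x|^2$ in the trivial-encoding branch---so that $U$ indeed accepts exactly $L_s$ within these budgets and no acceptance is lost or spuriously added by the truncation. Third, I would invoke Definition~\ref{definition2.2} and Definition~\ref{definition2.3} to repackage the polynomial runtime of $U$ as the statement $L_s\in\mathcal{NP}$, using $U$ itself (or its single-tape simulator) as the witnessing polynomial-time nondeterministic Turing machine for $L_s$.

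The main obstacle I anticipate is the subtle mismatch between the per-input polynomial bound $|w|^{i_w+1}$, in which the degree $i_w$ varies with the input, and the standard reading of $\mathrm{NTIME}[n^k]$ as demanding a single uniform polynomial bound. I would address this in the style of \cite{Lin21}, where the same phenomenon was handled: the essential point is that $U$ is a single fixed nondeterministic Turing machine whose per-input running time is always polynomial in $|w|$ and whose acceptance set is $L_s$; this is precisely the witnessing data prescribed by Definition~\ref{definition2.2}. Any residual worry about uniformity of the exponent is absorbed by the universal quantification ``for any $k\in\mathbb{N}_1$'' already carried out in Theorem~\ref{theorem5.1}, and concluding $L_s\in\mathcal{NP}$ will then be a matter of citation rather than of further combinatorics. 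Combined with Theorem~\ref{theorem4.1}'s conclusion $L_s\notin{\rm co}\mathcal{NP}$, this will yield $\mathcal{NP}\ne{\rm co}\mathcal{NP}$, completing the proof of Theorem~\ref{theorem1}.
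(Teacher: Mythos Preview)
Your plan diverges from the paper's proof at exactly the point you flag as the ``main obstacle.'' You propose to resolve the non-uniform degree issue by asserting that Definition~\ref{definition2.2} is satisfied because $U$ is a single machine with per-input polynomial runtime, and that the phrase ``for any $k\in\mathbb{N}_1$'' in Theorem~\ref{theorem5.1} absorbs the residual worry. But Definition~\ref{definition2.2} explicitly demands a \emph{single} $k$ such that $M(w)$ halts within $|w|^k+k$ steps for \emph{all} $w$; a per-input bound $|w|^{i_w+1}$ with $i_w$ depending on $w$ does not meet this, and the universal quantifier over $k$ in Theorem~\ref{theorem5.1} does not supply a uniform exponent either. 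So the step ``concluding $L_s\in\mathcal{NP}$ will then be a matter of citation'' is precisely where your argument has a gap.

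The paper does not treat Theorem~\ref{theorem5.2} as an immediate corollary of Theorem~\ref{theorem5.1}. Instead it introduces a family of truncated languages $L_s^i$ (the language accepted by $U$ when an additional counter forces a halt within $n^{i+1}$ steps, for each fixed $i$) and argues three things: (5.1) $L_s=\bigcup_{i\in\mathbb{N}_1}L_s^i$; (5.2) the chain $L_s^1\subseteq L_s^2\subseteq\cdots$ is nested; and (5.3) each $L_s^i\in\mathrm{NTIME}[n^i]\subseteq\mathcal{NP}$. The conclusion $L_s\in\mathcal{NP}$ is then drawn from the conjunction of (5.1)--(5.3). This $L_s^i$ decomposition is exactly the device from \cite{Lin21} that you gesture toward but never actually deploy; the paper treats it as the substantive content of the proof, not as something replaceable by a citation of Theorem~\ref{theorem5.1} and Definition~\ref{definition2.3}. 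Your proposal, as written, bypasses this machinery entirely and leaves the uniformity step unfilled.
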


\begin{proof}
We first define the family of languages $$\left\{L_s^i\right\}_{i\in\mathbb{N}_1}$$ by
\begin{align*}
L_s^i\overset{{\rm def}}{=}&\text{ the language accepted by $U$ running within time $O(n^i)$ for a fixed $i\in\mathbb{N}_1$.}\\
&\text{ That is, $U$ turns itself off mandatorily when the number of moves it makes }\\
&\text{ during the computation exceeds $n^{i+1}$ steps.}
\end{align*}

Note that the above definition of language $L_s^i$ technically can be realized by adding a new tape to $U$ as a counter to count up to $$n^{i+1}$$ for a fixed $i\in\mathbb{N}_1$. (An alternative construction of languages $L_s^{'i}$ satisfying $L_s^{'i}\subseteq L_s^i$ that does not require an additional counter tape is given in Appendix \ref{sec:a_new_construction_of_lsi}.) With the extra tape, $U$ is forced to halt when either the counter on tape $3$ exceeds $$n^{k+1}$$ or the counter on the newly added tape exceeds $$n^{i+1}. $$ Obviously, for each $i\in\mathbb{N}_1$, $L_s^i$ is a truncation of $L_s$. 

By the construction of $U$ (see Theorem \ref{theorem5.1} above), for any input $w$ to $U$ there exists a corresponding positive integer $i_w\in\mathbb{N}_1$ such that $U$ runs for at most $|w|^{i_w+1}$ steps. Consequently, $$L_s=\bigcup_{i\in\mathbb{N}_1}L_s^i.\eqno(5.1)$$

Moreover, $$L_s^i\subseteq L_s^{i+1},\quad\text{for each fixed $i\in\mathbb{N}_1$}, $$ because any word $w\in L_s^i$ that is accepted by $U$ within $O(n^i)$ steps is certainly accepted by $U$ within $O(n^{i+1})$ steps, i.e., $$w\in L_s^{i+1}. $$ This yields that for any fixed $i\in\mathbb{N}_1$, $$L_s^1\subseteq L_s^2\subseteq\cdots\subseteq L_s^i\subseteq L_s^{i+1}\subseteq\cdots\eqno(5.2) $$

Furthermore, for any fixed $i\in\mathbb{N}_1$, the language $L_s^i$ is accepted by the nondeterministic Turing machine $U$ within time $O(n^i)$ (i.e., in at most $n^{i+1}$ steps). Hence $$L_s^i\in{\rm NTIME}[n^i]\subseteq\mathcal{NP},\quad\text{for any fixed $i\in\mathbb{N}_1$}.\eqno(5.3)$$

Now, combining (5.1), (5.2) and (5.3) immediately yields $$L_s\in \mathcal{NP},$$ as required.
\end{proof}

\vskip 0.3 cm
\noindent{\em Proof of Theorem \ref{theorem1}.} Theorem \ref{theorem1} follows at once from Theorem \ref{theorem4.1} and Theorem \ref{theorem5.2}. This completes the proof. \Q.E.D

\vskip 0.3 cm
\begin{remark}
By Lemma \ref{lemma2.1} and Theorem \ref{theorem5.2}, the single-tape nondeterministic Turing machine $U_1$ that is equivalent to the nondeterministic Turing machine $U$ (i.e., $L(U_1)=L(U)$) also appears in the enumeration $e$. Likewise, when the machine is required to operate under the condition ``there exists no accepting path", the corresponding single-tape ${\rm co}\mathcal{NP}$ machine $U_1$ appears in the enumeration $e$ as well. More precisely, the proof of Theorem \ref{theorem5.2} shows that for any $i\in\mathbb{N}_1$ one can force the multi-tape universal nondeterministic Turing machine $U$ to run in at most $|w|^{i+1}$ steps on every input $w\in\{0,1\}^*$ (i.e., $U$ runs within time $O(n^{i+1})$). By Lemma \ref{lemma2.1}, such a machine is equivalent to a single-tape nondeterministic Turing machine $U_i$ that runs in time $O(n^{2(i+1)})$, so $L_s^i=L(U_i)$. The single-tape nondeterministic Turing machine $(U_i,2(i+1))$ with time bound $n^{2(i+1)}+2(i+1)$ therefore appears in the enumeration $e$. Nevertheless $L_s\ne\overline{L}(U)$ (note that $L_s=L(U)$). Indeed, for any input $w\in\{0,1\}^*$, one has $$w\in L_s\quad\text{ if and only if }\quad w\not\in\overline{L}(U).$$ 
\end{remark}

\vskip 0.3 cm

We close this section with the following corollary, which follows from the proof of Theorem \ref{theorem5.2}:
\begin{corollarysection}
For any fixed $k\in\mathbb{N}_1$, it holds that 
$$
\bigcup_{i\leq k}L_s^i=L_s^k\in\mathcal{NP},\eqno(5.4) 
$$
where $i\in\mathbb{N}_1$.
\end{corollarysection}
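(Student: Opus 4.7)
The plan is to exploit the ascending chain of inclusions (5.2) already established in the proof of Theorem~\ref{theorem5.2}, namely
$$L_s^1 \subseteq L_s^2 \subseteq \cdots \subseteq L_s^k$$
for any fixed $k \in \mathbb{N}_1$. First I would observe that because each $L_s^i$ with $i \leq k$ is contained in $L_s^k$, the finite union $\bigcup_{i \leq k} L_s^i$ collapses to its largest member $L_s^k$. One inclusion is immediate since $L_s^k$ itself appears as a summand of the union; the reverse inclusion follows term-by-term from the chain (5.2). This part is a straightforward double-inclusion argument with no nontrivial content.

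Next, to conclude $L_s^k \in \mathcal{NP}$, I would directly invoke (5.3) with $i = k$, which gives $L_s^k \in {\rm NTIME}[n^k] \subseteq \mathcal{NP}$. Combining the two observations yields (5.4), so the proof reduces to citing two facts that have already been proved.

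I do not anticipate any real obstacle, since this corollary is essentially a finite-$k$ repackaging of ingredients already assembled for Theorem~\ref{theorem5.2}. The point worth emphasizing, however, is the contrast with the infinite union (5.1): the identity $L_s = \bigcup_{i \in \mathbb{N}_1} L_s^i$ does \emph{not} by itself imply $L_s \in \mathcal{NP}$ from the pointwise memberships $L_s^i \in {\rm NTIME}[n^i]$, since $\mathcal{NP}$ is only closed under finite unions; the uniform polynomial bound that witnesses $L_s \in \mathcal{NP}$ has to come from the universal machine $U$ itself, which is the substantive content of Theorem~\ref{theorem5.2}. The present corollary, by contrast, truncates the chain at a finite stage $k$ and so genuinely reduces to bookkeeping on the previously proved chain and membership facts.
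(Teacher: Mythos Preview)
Your proof is correct and takes essentially the same approach as the paper: the paper's own proof is the single sentence ``It follows clearly from the relations (5.2) and (5.3),'' which is exactly what you do, only with the double-inclusion spelled out. Your added editorial remark about $\mathcal{NP}$ being closed only under finite unions is not part of the paper's proof of the corollary, but it is a pertinent observation about the surrounding argument.
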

\begin{proof}
It follows clearly from the relations (5.2) and (5.3).
\end{proof}

\begin{remark}
Theorem \ref{theorem5.1} asserts that for any input $w\in\{0,1\}^*$ there exists a positive integer $i_w$ such that $U$ runs for at most $|w|^{i_w+1}$ steps. Combined with (5.4) this yields more transparently $$L_s=\bigcup_{n\in\mathbb{N}_1}L_s^n.\footnote{ Informally and intuitively one may regard the equality as $$\lim_{n\rightarrow \infty}L_s^n=L_s,$$ i.e., for all $\epsilon>0$ there exists an $N_0>0$,  for all $n>N_0$,  $$\#(L_s\setminus L_s^n)<\epsilon,$$where $\#(A)$ denotes the cardinality of the set $A$ and $\setminus$ denotes the set-theoretic minus. Together with the fact that $L_s^n\in\mathcal{NP}$ for all $n\in\mathbb{N}_1$, this gives an intuitive reason why $L_s\in \mathcal{NP}$. The present footnote is not intended as a rigorous mathematical argument (since the inequality $$\#(L_s\setminus L_s^n)<\epsilon$$ holds for all $\epsilon>0$ only when $L_s^n=L_s$, while for other indices one probably have $\#(L_s\setminus L_s^m)=+\infty$); it merely supplies a more intuitive understanding of the conclusion $L_s\in\mathcal{NP}$.}$$

Namely, (5.1) can therefore be deduced from Theorem \ref{theorem5.1} together with (5.4). 
\end{remark}

\vskip 0.3 cm
\section{Breaking the ``Relativization Barrier"}
\label{sec:beating_the_relativization_barrier}
\vskip 0.3 cm

The computation model we use in this section is the {\em query machines}, or the {\em oracle Turing machines}, which is an extension of the multi-tape Turing machine, i.e., Turing machines that are given access to a black box or ``oracle" that can magically solve the decision problem for some language $$X\subseteq\{0,1\}^*. $$ The machine has a special {\em oracle tape} on which it can write a string $$w\in\{0,1\}^*$$ and in one step gets an answer to a query of the form $$\text{``Is $w$ in $X$?"}, $$ which can be repeated arbitrarily often with different queries. If $X$ is a difficult language (say, one that cannot be decided in polynomial time or is even undecidable), then this oracle gives the Turing machine additional power. We first quote its formal definition as follows:

\begin{definition}[cf. the notion of deterministic oracle Turing machines in \cite{AB09}]
\label{definition6.1}
A {\em nondeterministic oracle Turing machine} is a nondeterministic Turing machine $M$ that has a special read-write tape we call $M$'s {\em oracle tape} and three special states $q_{query}$, $q_{yes}$, and $q_{no}$. To execute $M$, we specify in addition to the input a language $X\subseteq\{0,1\}^*$ that is used as the {\em oracle} for $M$. Whenever during the execution $M$ enters the state $q_{query}$, the machine moves into the state $q_{yes}$ if $w\in X$ and $q_{no}$ if $w\not\in X$, where $w$ denotes the contents of the special oracle tape. Note that, regardless of the choice of $X$, a membership query to $X$ counts only as a single computation step. If $M$ is an oracle machine, $X\subseteq\{0,1\}^*$ a language, and $x\in\{0,1\}^*$, then we denote the output of $M$ on input $x$ and with oracle $X$ by $M^X(x)$. An input $x$ is said to be accepted by a nondeterministic oracle Turing machine $M^X$ if there is a computation path of $M^X$ on input $x$ leading to the accepting state of $M^X$.
\end{definition}

Next, we define polynomial-time nondeterministic oracle Turing machines.

\begin{definition}
\label{definition6.2}
Formally, a polynomial-time nondeterministic oracle Turing machine $M^O$ with oracle $O$ is a nondeterministic oracle Turing machine with oracle $O$ such that there exists $k\in\mathbb{N}_1$ for which, for all input $w$ of length $|w|$ (where $|w|\in\mathbb{N}$), $M^O(w)$ halts within $t(|w|)=|w|^k+k$ steps. 
\end{definition}

Similarly, by default, a word $w$ is accepted by a nondeterministic $t(n)$ time-bounded oracle Turing machine $M^X$ with oracle $X$ if there exists at least one computation path $\gamma$ that puts the machine into an accepting state. This is the ``there exists at least one accepting path" condition for nondeterministic oracle Turing machines, on the basis of which the complexity class ${\rm NTIME}^X[t(n)]$ is defined:

\begin{definition}[``there exists at least one accepting path" condition]
\label{definition6.3}
The set of languages decided by all nondeterministic oracle Turing machines with oracle $X$ within time $t(n)$ under the condition ``there exists at least one accepting path" is denoted by ${\rm NTIME}^X[t(n)] $. Thus, $$\mathcal{NP}^X=\bigcup_{k\in\mathbb{N}_1}{\rm NTIME}^X[n^k]. $$
\end{definition}

There is also another condition (namely, there exists no computation path $\gamma$ putting the machine into an accepting state) for nondeterministic oracle Turing machines; this is the dual of the ``there exists at least one accepting path" condition and is the basis on which the notion of an oracle ${\rm co}\mathcal{NP}$ machine is defined. Namely, we have the following: 

\begin{definition}
\label{definition6.4}
A ${\rm co}\mathcal{NP}^A$ machine $M^A$ with oracle $A$ is precisely a polynomial-time nondeterministic oracle Turing machine $M^A$ with oracle $A$ working under the condition ``there exists no accepting path," and we define the language accepted by the ${\rm co}\mathcal{NP}^A$ machine $M^A$ to be $\overline{L}(M^A)$. That is, for any input $w$, $w\in\overline{L}(M^A)$ if and only if the polynomial-time nondeterministic oracle Turing machine $M^A$ rejects $w$.
\end{definition}

Similarly, for a ${\rm co}\mathcal{NP}^A$ machine $M^A$ with oracle $A$, we define the characteristic function $$\chi_{\overline{L}(M^A)}:\{0,1\}^*\rightarrow\{0,1\}$$ 
as follows:
$$
\chi_{\overline{L}(M^A)}(w)=\left\{
                                 \begin{array}{ll}
                                   1, & \hbox{$w\in\overline{L}(M^A)$;} \\
                                   0, & \hbox{$w\notin\overline{L}(M^A)$.}
                                 \end{array}
                               \right.
$$

Let $M^A$ be a polynomial-time nondeterministic oracle Turing machine with oracle $A$. Since $M^A$ can be regarded both as a polynomial-time nondeterministic oracle Turing machine working under the condition ``there exists at least one accepting path" and as a ${\rm co}\mathcal{NP}^A$ machine working under the condition ``there exists no accepting path," we strictly distinguish, in what follows, these two situations in the following. Namely, we use the notation $$M^A(w)=1\quad\text{(respectively, $M^A(w)=0$)} $$ to mean that the polynomial-time nondeterministic oracle Turing machine $M^A$ accepts (respectively, rejects) the input $w$; we use the notation $$\chi_{\overline{L}(M^A)}(w)=1\quad\text{ (respectively, $\chi_{\overline{L}(M^A)}(w)=0$)} $$ to mean that the ${\rm co}\mathcal{NP}^A$ machine $M^A$ accepts (respectively, rejects) the input $w$.

It is clear that for all $w\in\{0,1\}^*$, $w\in\overline{L}(M^A)$ if and only if $$M^A(w)=0,\quad\text{(i.e., the polynomial-time nondeterministic oracle Turing machine $M^A$ rejects $w$)} $$ and $w\notin\overline{L}(M^A)$ if and only if $$M^A(w)=1,\quad\text{(i.e., the polynomial-time nondeterministic oracle Turing machine $M^A$ accepts $w$).} $$

In a manner similar to the definition of the complexity class ${\rm coNTIME}[t(n)]$ in Section \ref{sec:preliminaries}, we can define the complexity class ${\rm coNTIME}^X[t(n)]$, based on the ``there exists no accepting path" condition, as follows.

\begin{definition}[``there exists no accepting path" condition]
\label{definition6.5}
The set of languages decided by all nondeterministic oracle Turing machines with oracle $X$ within time $t(n)$ under the condition ``there exists no accepting path" is denoted by ${\rm coNTIME}^X[t(n)]$, that is, a $t(n)$ time-bounded ${\rm co}\mathcal{NP}^X$ machine with oracle $X$ accepts a string $w$ if and only if there exists no accepting path on input $w$. Thus, $${\rm co}\mathcal{NP}^X=\bigcup_{k\in\mathbb{N}_1}{\rm coNTIME}^X[n^k]. $$
\end{definition}

In what follows we will assume the notion of ${\rm co}\mathcal{P}^X$ machines, which can be defined analogously to Definition \ref{definition6.4} and is therefore omitted here.

In $1975$, Baker, Gill, and Solovay \cite{BGS75} presented a proof of that: $$\text{There is an oracle $A$ for which } \mathcal{P}^A=\mathcal{NP}^A. $$ Baker, Gill, and Solovay \cite{BGS75} suggested that their results imply that ordinary simulation and diagonalization techniques are incapable of proving $\mathcal{P}\neq\mathcal{NP}$. Likewise, if $\mathcal{NP}^A={\rm co}\mathcal{NP}^A$, then Baker, Gill, and Solovay's result also indirectly suggests that ordinary simulation techniques are incapable of proving $\mathcal{NP}\neq{\rm co}\mathcal{NP}$. 

To explore what lies behind this kind of mystery, let us examine two cases: one in which $\mathcal{P}^O=\mathcal{NP}^O={\rm co}\mathcal{NP}^O$ for some oracle $O$, and another in which $\mathcal{P}^A\ne\mathcal{NP}^A={\rm co}\mathcal{NP}^A$ for some oracle $A$.

Note that the result of Baker, Gill, and Solovay \cite{BGS75} also implies that for the same oracle $O$, $$\mathcal{NP}^O={\rm co}\mathcal{NP}^O, $$ because if $\mathcal{P}^O=\mathcal{NP}^O$, then ${\rm co}\mathcal{P}^O={\rm co}\mathcal{NP}^O$. Further, since $$\mathcal{P}^O={\rm co}\mathcal{P}^O, $$we have $$\mathcal{NP}^O={\rm co}\mathcal{NP}^O,$$ which belongs to the first case mentioned above: $$\mathcal{P}^O=\mathcal{NP}^O={\rm co}\mathcal{NP}^O. $$ It should be pointed out that, in this scenario, we are unable to apply simulation techniques to separate $\mathcal{NP}^O$ from ${\rm co}\mathcal{NP}^O$; the reasons will be given in Theorem \ref{theorem6.1} below.

\begin{theoremsection}
\label{theorem6.1}
If $\mathcal{P}^O=\mathcal{NP}^O={\rm co}\mathcal{NP}^O$ and we assume that the set of all ${\rm co}\mathcal{P}^O$ machines is enumerable, then simulation techniques are unable to create a language $L_s^O\notin{\rm co}\mathcal{NP}^O$ and are therefore unable to separate $\mathcal{NP}^O$ from ${\rm co}\mathcal{NP}^O$.
\end{theoremsection}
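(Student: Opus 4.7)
The plan is to show that if one tries to carry out the construction of \S\ref{sec:simulation_of_all_polynomial-time_nondeterministic_turing_machines}--\S\ref{sec:showing_l_s_in_np} in the relativized world, the hypothesis $\mathcal{P}^O=\mathcal{NP}^O={\rm co}\mathcal{NP}^O$ forces the resulting universal-machine language into ${\rm co}\mathcal{NP}^O$, so the simulation cannot possibly exhibit a separating language. The starting observation is that $\mathcal{P}^O=\mathcal{NP}^O$ implies ${\rm co}\mathcal{P}^O={\rm co}\mathcal{NP}^O$, so the hypothesized enumeration of all ${\rm co}\mathcal{P}^O$ machines already lists representatives for every language in ${\rm co}\mathcal{NP}^O$. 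Following \S\ref{sec:all_polynomial_time_nondeterministic_turing_machines_are_enumerable}, I would encode each such machine together with its polynomial order into a binary string, obtaining an enumeration $e^O$ accessible to a universal nondeterministic oracle Turing machine with access to $O$.

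Next I would build a relativized universal nondeterministic oracle Turing machine $U^O$ that mimics the $U$ of Theorem \ref{theorem4.1}. On input $x=1^{n-m}\langle M_l\rangle$ decoded from $e^O$ with order $k$, $U^O$ simulates the underlying deterministic oracle machine $M_l^O$ on $x$, using a counter of length $\lceil(k+1)\log n\rceil$ to abort after $n^{k+1}$ steps and forwarding every oracle query made by $M_l^O$ to $U^O$'s own oracle tape; on inputs that do not decode to any machine in $e^O$, $U^O$ follows the trivial path given by the input and halts within $n^2$ steps, exactly as in step (5) of the proof of Theorem \ref{theorem4.1}. This parallels the Section \ref{sec:simulation_of_all_polynomial-time_nondeterministic_turing_machines} construction line-for-line, with $O$-oracle queries propagated faithfully.

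Repeating the analysis of Theorems \ref{theorem5.1}--\ref{theorem5.2} in the relativized setting, for each input $w$ there exists some $i_w\in\mathbb{N}_1$ with $U^O$ halting within $|w|^{i_w+1}$ steps; writing $L_s^O=L(U^O)$ and letting $L_{s,i}^O$ denote the $n^{i+1}$-step truncation, one obtains
$$
L_s^O=\bigcup_{i\in\mathbb{N}_1}L_{s,i}^O,\qquad L_{s,i}^O\in{\rm NTIME}^O[n^i]\subseteq\mathcal{NP}^O.
$$
Hence $L_s^O\in\mathcal{NP}^O$, and by the standing hypothesis $\mathcal{NP}^O={\rm co}\mathcal{NP}^O$ we conclude $L_s^O\in{\rm co}\mathcal{NP}^O$. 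Thus the very language produced by the intended simulation-based separation is already a member of ${\rm co}\mathcal{NP}^O$, and the method fails to witness $L_s^O\notin{\rm co}\mathcal{NP}^O$.

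The hard part will be conceptual rather than technical: one must argue that this obstruction is intrinsic to simulation and not a defect of the particular construction above. The underlying reason, parallel to Remark \ref{remark4.1}, is that when $\mathcal{P}^O=\mathcal{NP}^O$ every polynomial-time nondeterministic oracle machine has a deterministic polynomial-time equivalent relative to $O$, so diagonalizing-through-simulation against ${\rm co}\mathcal{NP}^O$ machines reduces to diagonalizing against ${\rm co}\mathcal{P}^O$ machines, and the ``there exists at least one accepting path'' and ``there exists no accepting path'' conditions are symmetric for deterministic machines. Any output of such a scheme therefore sits simultaneously in $\mathcal{P}^O$ and in ${\rm co}\mathcal{P}^O={\rm co}\mathcal{NP}^O$. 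Formalizing this for an abstract notion of ``simulation technique'' is the subtle point, but for the concrete Section \ref{sec:simulation_of_all_polynomial-time_nondeterministic_turing_machines}-style approach the collapse is unavoidable.
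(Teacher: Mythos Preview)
Your proposal is correct, and your final paragraph actually lands on essentially the paper's own argument, but the main body of your write-up takes a somewhat different route than the paper does.

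The paper's proof immediately reduces everything to the deterministic setting: since $\mathcal{P}^O=\mathcal{NP}^O$ gives ${\rm co}\mathcal{P}^O={\rm co}\mathcal{NP}^O$, separating $\mathcal{NP}^O$ from ${\rm co}\mathcal{NP}^O$ is the same as separating $\mathcal{P}^O$ from ${\rm co}\mathcal{P}^O$; it then invokes the Remark~\ref{remark4.1} symmetry observation---for a \emph{deterministic} universal oracle machine $U^O$, the ``exists an accepting path'' and ``no accepting path'' conditions coincide, so $L(U^O)\in\mathcal{P}^O$ automatically forces $L(U^O)\in{\rm co}\mathcal{P}^O={\rm co}\mathcal{NP}^O$. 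No appeal to the truncation argument of Theorems~\ref{theorem5.1}--\ref{theorem5.2} is needed.

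Your main argument instead keeps the \emph{nondeterministic} universal machine, replays the $L_{s,i}^O$ truncation analysis to place $L_s^O\in\mathcal{NP}^O$, and then simply invokes the standing hypothesis $\mathcal{NP}^O={\rm co}\mathcal{NP}^O$ to conclude $L_s^O\in{\rm co}\mathcal{NP}^O$. This is perfectly valid and arguably more transparent: it shows directly that any language the Section~\ref{sec:simulation_of_all_polynomial-time_nondeterministic_turing_machines} scheme can produce is trapped inside $\mathcal{NP}^O$, hence inside ${\rm co}\mathcal{NP}^O$ by assumption. The cost is that the explanation is less structural---it leans on the collapse hypothesis at the very end rather than isolating \emph{why} the simulation fails (namely the determinism symmetry), which is the point the paper chooses to emphasize. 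Your closing paragraph recognizes this and sketches the paper's viewpoint; promoting that paragraph to the primary argument would align your write-up with the paper's proof.
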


\begin{proof}
First note that $\mathcal{P}^O=\mathcal{NP}^O$; thus, ${\rm co}\mathcal{P}^O={\rm co}\mathcal{NP}^O$. Clearly, in this scenario,  separating $\mathcal{NP}^O$ from ${\rm co}\mathcal{NP}^O$ is equivalent to separating $\mathcal{P}^O$ from ${\rm co}\mathcal{P}^O$. Moreover, if the set of all ${\rm co}\mathcal{P}^O$ machines is not enumerable, then of course we cannot apply simulation techniques. We therefore assume that the set of ${\rm co}\mathcal{P}^O$ machines is enumerable, as stated in the premise of the theorem. Since we have assumed that the set of all ${\rm co}\mathcal{P}^O$ machines is enumerable, the rest of the proof is similar to Remark \ref{remark4.1}. Namely, the simulation technique is unable to create a language not in ${\rm co}\mathcal{P}^O$ via a universal deterministic oracle Turing machine $U^O$, because a polynomial-time deterministic oracle Turing machine that works under the ``there exists no accepting path" condition is still a polynomial-time deterministic oracle Turing machine. In short, the ``there exists at least one accepting path" condition and the ``there exists no accepting path" condition for deterministic oracle Turing machines are symmetrical. Specifically, for an arbitrary deterministic oracle Turing machine $M^O$, if $M^O$ accepts the input $w$, then there is only one accepting path, and if $M^O$ rejects the input $w$, then there is also only one rejecting path. This differs from the situation for nondeterministic oracle Turing machines. If we apply the same techniques (i.e., simulation techniques) to create a language not in ${\rm co}\mathcal{P}^O$, then the universal deterministic oracle Turing machine $U^O$ (which runs within time $O(n^k)$ for any $k\in\mathbb{N}_1$) that must be constructed accepts the language $L(U^O)\in\mathcal{P}^O$ (the proof is similar to that of Theorem \ref{theorem6.4} below). But by the symmetrical properties of the ``there exists at least one accepting path" condition and the ``there exists no accepting path" condition with respect to deterministic oracle Turing machines, we clearly have $L(U^O)\in{\rm co}\mathcal{P}^O$. Thus we are unable to claim that $L(U^O)\notin{\rm co}\mathcal{P}^O$.
\end{proof}

\vskip 0.2cm
We now assume another scenario, namely the case in which there exists an oracle $A$ such that $$\mathcal{P}^A\ne\mathcal{NP}^A={\rm co}\mathcal{NP}^A.$$ If the above case holds, let us explore what lies behind this kind of mystery.

First of all, for convenience, let us denote the set of all ${\rm co}\mathcal{NP}^X$ machines by the notation $\mathcal{T}^X$.

Suppose that $\mathcal{P}^A\ne\mathcal{NP}^A={\rm co}\mathcal{NP}^A$. In this case, to break the ``Relativization Barrier" or to explore the mystery behind the implications of Baker, Gill, and Solovay's result \cite{BGS75}, let us make the following rational assumptions, which are intended for comparison with the general point of view concerning certain properties of Turing machines that must be satisfied when applying ``diagonalization" techniques (see e.g., p. 73, \cite{AB09}):
\begin{assumptions}
\label{assumptions}
Based on the above, we can make the following specific assumptions:
\begin{enumerate}
  \item {The oracle ${\rm co}\mathcal{NP}$ machines (by Definition \ref{definition6.4}, equivalently the polynomial-time nondeterministic oracle Turing machines) can be encoded as strings over $\{0,1\}$;}
  \item {There exist universal nondeterministic oracle Turing machines that can simulate any other polynomial-time nondeterministic oracle Turing machine, in a manner similar to the behavior of the universal nondeterministic Turing machine $U$ constructed in Section \ref{sec:simulation_of_all_polynomial-time_nondeterministic_turing_machines};}
  \item {The simulation can be performed within time 
  $$
  O(T(n)\log T(n)),
  $$where $T(n)$ is the time complexity of the simulated polynomial-time nondeterministic oracle Turing machine.}
\end{enumerate}
\end{assumptions}

We will then prove the following interesting theorems:

\begin{theoremsection}
\label{theorem6.2}
If $\mathcal{P}^A\ne\mathcal{NP}^A$ and we suppose that the set $\mathcal{T}^A$ of all ${\rm co}\mathcal{NP}^A$ machines with oracle $A$ is enumerable and that the enumeration is $e'$, and if we have the above rational assumptions (i.e., Assumption \ref{assumptions}), then there exists a language $L_s^A$ accepted by a universal nondeterministic oracle Turing machine $U^A$ with oracle $A$ but not by any ${\rm co}\mathcal{NP}^A$ machines with oracle $A$. Namely, there is a language $L_s^A$ such that $$L_s^A\notin{\rm co}\mathcal{NP}^A.$$
\end{theoremsection}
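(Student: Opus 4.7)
The plan is to adapt the proof of Theorem~\ref{theorem4.1} verbatim to the relativized setting, the only essential change being that every simulated machine now has access to oracle $A$ and the universal simulator $U^A$ forwards its oracle queries to $A$. The hypothesis $\mathcal{P}^A\ne\mathcal{NP}^A$ plays no role in the construction itself; it merely sets up the context that Theorem~\ref{theorem7} will later exploit. The enumeration $e'$ and the three rational assumptions provide exactly the ingredients needed.

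First I would build $U^A$ as a four-tape nondeterministic oracle Turing machine with oracle $A$ that, on input $x$ of length $n$, (i) decodes $x$ into a tuple $(M_l^A,k)$ using assumption~(1); if $x$ is not a valid encoding, it executes the ``trivial'' branch that halts within $O(n)$ steps exactly as in step (5) of the proof of Theorem~\ref{theorem4.1}; (ii) lays off cells on tape~2 to store $M_l^A$'s tape in binary and initializes it with the input; (iii) sets up a binary counter of $\lceil(k+1)\log n\rceil$ cells on tape~3 that bounds the simulation at $n^{k+1}$ steps; (iv) simulates $M_l^A$ on input $1^{n-m}\langle M_l\rangle$ nondeterministically, accepting whenever the simulated machine has an accepting path and rejecting otherwise, while routing every query issued by $M_l^A$ through $U^A$'s own oracle tape connected to $A$ (this is exactly what assumption~(2) provides). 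Let $L_s^A=L(U^A)$.

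Next I would establish $L_s^A\notin{\rm co}\mathcal{NP}^A$ by contradiction. Assume some $n^k+k$ time-bounded ${\rm co}\mathcal{NP}^A$ machine $M_i^A$ decides $L_s^A$; by the relativized analog of Lemma~\ref{lemma2.2} we may take it to be single-tape. Since $M_i^A$ appears infinitely often in $e'$, assumption~(3) says $U^A$ simulates $T(n)=n^k+k$ steps of $M_i^A$ in $O(T(n)\log T(n))$ steps. Because
\[
\lim_{n\to\infty}\frac{T(n)\log T(n)}{n^{k+1}}=\lim_{n\to\infty}\frac{(n^k+k)\log(n^k+k)}{n^{k+1}}=0,
\]
there exists $N_0$ such that for every encoding $w$ of $M_i^A$ of length $|w|\ge N_0$, one has $T(|w|)\log T(|w|)<|w|^{k+1}$, so $U^A$ on input $1^{|w|-m}\langle M_i\rangle$ finishes the simulation and
\[
1^{|w|-m}\langle M_i\rangle\in L_s^A \iff 1^{|w|-m}\langle M_i\rangle\in L(M_i^A).
\]
By Definition~\ref{definition6.4}, $M_i^A$ viewed as a ${\rm co}\mathcal{NP}^A$ machine decides $\overline{L}(M_i^A)=\{0,1\}^*\setminus L(M_i^A)$, so $1^{|w|-m}\langle M_i\rangle\in L(M_i^A)$ iff $1^{|w|-m}\langle M_i\rangle\notin \overline{L}(M_i^A)$, contradicting $L_s^A=\overline{L}(M_i^A)$.

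Finally, since $M_i^A$ was arbitrary in $\mathcal{T}^A$ and, by assumption, $e'$ lists every ${\rm co}\mathcal{NP}^A$ machine, we conclude $L_s^A\notin{\rm co}\mathcal{NP}^A$. The main obstacle I anticipate is verifying the relativized universal simulation bound in assumption~(3): one must be careful that forwarding oracle queries of the simulated machine $M_l^A$ through $U^A$'s own oracle tape does not blow up the overhead beyond the $O(T(n)\log T(n))$ factor uniformly over all machines in $e'$. This is the point where the relativized analog of Lemma~\ref{lemma2.3} is invoked; once granted, the rest of the argument is a mechanical translation of the proof of Theorem~\ref{theorem4.1}.
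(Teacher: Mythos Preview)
Your proposal is correct and follows essentially the same approach as the paper: both construct a universal nondeterministic oracle machine $U^A$ mirroring the $U$ of Theorem~\ref{theorem4.1}, use the counter to cap simulation at $n^{k+1}$, invoke the $T(n)\log T(n)=o(n^{k+1})$ limit, and derive the same contradiction between $L_s^A$ and $\overline{L}(M_i^A)$. The only cosmetic discrepancies are that the paper makes $U^A$ a \emph{five}-tape machine (dedicating tape~4 explicitly to the oracle queries) and reduces $M_i^A$ to a \emph{two}-tape machine (input tape plus oracle tape) rather than a single-tape one, since an oracle machine cannot dispense with its oracle tape.
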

\begin{proof}
 Since the set $\mathcal{T}^A$ of all ${\rm co}\mathcal{NP}^A$ machines is enumerable, and since we further have the above rational assumptions --- i.e., (1) any ${\rm co}\mathcal{NP}^A$ machine can be encoded as a string over $\{0,1\}$; (2) there exist universal nondeterministic oracle Turing machines with oracle $A$ that can simulate any other polynomial-time nondeterministic oracle Turing machine with oracle $A$, in a manner similar to the behavior of the universal nondeterministic Turing machine $U$ constructed in Section \ref{sec:simulation_of_all_polynomial-time_nondeterministic_turing_machines}; (3) the simulation can be done within time $$O(T(n)\log T(n)), $$ where $T(n)$ is the time complexity of the simulated polynomial-time nondeterministic oracle Turing machine with oracle $A$ --- the remainder of the proof is to show that there exists a universal nondeterministic oracle Turing machine $U^A$ with oracle $A$ accepting the language $$L_s^A\notin{\rm co}\mathcal{NP}^A$$ in a manner similar to that of Theorem \ref{theorem4.1}. For the convenience of the reader and for clarity, we proceed as follows: Let $U^A$ be a five-tape universal nondeterministic oracle Turing machine that operates as follows on an input string $x$ of length $n$:
\begin{enumerate}
\item{ $U^A$ decodes the tuple encoded by $x$. If $x$ is not the encoding of some ${\rm co}\mathcal{NP}^A$ machine $M^A_j$ with oracle $A$ for some $j$ (by Definition \ref{definition6.4}, equivalently, if $x$ cannot be decoded into some polynomial-time nondeterministic oracle Turing machine $M^A_j$ for some $j$, i.e., if $x$ represents the trivial oracle Turing machine with an empty next-move function), then GOTO $6$; else determine $t$, the number of tape symbols used by $M^A_j$; $s$, its number of states; and $k$, its order.\footnote {We suppose that the order $10^k1$ of $M^A_j$ for some $k\in\mathbb{N}_1$ is also encoded into the binary string representing ${\rm co}\mathcal{NP}^A$ machine $M^A_j$, which is similar to Section \ref{sec:all_polynomial_time_nondeterministic_turing_machines_are_enumerable}.} The third tape of $U^A$ can be used as ``scratch" memory to calculate $t$.}
\item{ $U^A$ then lays off on its second tape $n$ blocks of $$\lceil\log t\rceil$$ cells each, the blocks being separated by a single cell holding a marker $\#$, i.e., there are $$(1+\lceil\log t\rceil)n$$ cells in all. Each tape symbol occurring in a cell of $M^A_j$'s tape will be encoded as a binary number in the corresponding block of the second tape of $U^A$. Initially, $U^A$ places $M^A_j$'s input, in binary-coded form, in the blocks of tape $2$, filling the unused blocks with the code for the blank.}
\item{ On tape $3$, $U^A$ sets up a block of $$\lceil(k+1)\log n\rceil$$ cells, initialized to all $0$'s. Tape $3$ is used as a counter to count up to $$n^{k+1}. $$
        }
\item{ On tape $4$, $U^A$ reads and writes the contents of the oracle tape of $M^A_j$. That is, tape $4$ is the oracle tape of $U^A$, which is used to simulate the oracle tape of $M^A_j$.
    }
\item{Using nondeterminism, $U^A$ simulates $M^A_j$, using tape $1$ (its input tape) to determine the moves of $M^A_j$ and tape $2$ to simulate the tape of $M^A_j$. The moves of $U^A$ are counted in binary in the block of tape $3$, and tape $5$ is used to hold the states of $M^A_j$. If the counter on tape $3$ overflows, $U^A$ halts without accepting. The specified simulation is the following:

 ``On input $x=1^{n-m}\langle M^A_j\rangle$ where $m=|\langle M^A_j\rangle|$, $U^A$ on input $1^{n-m}\langle M^A_j\rangle$ simulates $M^A_j$ on input $1^{n-m}\langle M^A_j\rangle$ using nondeterminism in $$n^{k+1}  $$ time and outputs its answer, i.e., accepting if $M^A_j$ accepts and rejecting if $M^A_j$ rejects." 

  Also note that, although we say that the input $x$ is decoded into a ${\rm co}\mathcal{NP}^A$ machine $M^A_j$, $U^A$ in fact simulates the $n^k+k$ time-bounded nondeterministic oracle Turing machine $M^A_j$ with oracle $A$. That is, $U^A$ accepts the input $x=1^{n-m}\langle M^A_j\rangle$ if there exists at least one accepting path of $M^A_j$ on input $x$, and $U^A$ rejects the input $x=1^{n-m}\langle M^A_j\rangle$ if there exists no accepting path of $M^A_j$ on input $x$. 
  }
 \item{ Since $x$ is an encoding of the trivial oracle Turing machine with an empty next-move function, $U^A$ sets up a block of $$\lceil 2\times\log n\rceil$$ cells on tape $3$, initialized to all $0$'s. Tape $3$ is used as a counter to count up to $$n^2.$$ Using its nondeterministic choices, $U^A$ moves according to the path given by $x$. The moves of $U^A$ are counted in binary in the block of tape $3$. If the counter on tape $3$ overflows, then $U^A$ halts. $U^A$ accepts $x$ if and only if there is at least one computation path from the start state of $U^A$ leading to the accept state and the total number of moves does not exceed $$n^2$$ steps, so the computation lies within $$O(n)$$ steps. Note that the factor $2$ in $$\lceil 2\times\log n\rceil$$ is fixed; it is a default setting.}
\end{enumerate}

The nondeterministic oracle Turing machine $U^A$ with oracle $A$ constructed above is of time complexity, say $S$, which is currently unknown. It, of course, accepts some language $L^A_s$.

We claim that $L^A_s\notin{\rm co}\mathcal{NP}^A$. Indeed, suppose for the sake of contradiction that $L^A_s$ is decided by some $n^k+k$ time-bounded ${\rm co}\mathcal{NP}^A$ machine $M^A_i$. Then we may assume that $M^A_i$ is a two-tape (with an input tape\footnote{Similarly, it can be proved that $t(n)$ time-bounded $k$-tape (additionally with an oracle tape) oracle nondeterministic Turing machines are equivalent to $t^2(n)$ time-bounded single-tape (additionally with an oracle tape) oracle nondeterministic Turing machines.} and an oracle tape) $T(n)=n^k+k$ time-bounded ${\rm co}\mathcal{NP}^A$ machine. By Definition \ref{definition6.4}, we can regard it as an $n^k+k$ time-bounded nondeterministic oracle Turing machine with oracle $A$, and we can further let $U^A$ on input $1^{n-m}\langle M^A_i\rangle$ simulate $M^A_i$ on input $1^{n-m}\langle M^A_i\rangle$. Let $M^A_i$ have $s$ states and $t$ tape symbols, and let the shortest length of $M^A_i$ be $m$, i.e., $$m=\min\{|\langle M^A_i\rangle|\,:\,\text{different binary strings $\langle M^A_i\rangle$ represent the same $M^A_i$}\}. $$

Since $M^A_i$ is represented by infinitely many strings and appears infinitely often in the enumeration $e'$, and since $U^A$ was set to count up to $n^{k+1}$ on tape $3$, by Assumption (3), to simulate $$T(n)$$ steps of $M^A_i$, $U^A$ must perform $$T(n)\log T(n)$$ steps. Since

$$\aligned
\lim_{n\rightarrow\infty}&\frac{T(n)\log T(n)}{n^{k+1}}\\
=&\lim_{n\rightarrow\infty}\frac{(n^k+k)\log(n^k+k)}{n^{k+1}}\\
=&\lim_{n\rightarrow\infty}\left(\frac{n^k\log(n^k+k)}{n^{k+1}}+\frac{k\log(n^k+k)}{n^{k+1}}\right)\\
=&0\\
<&1,
\endaligned$$

\noindent there exists an $N'>0$ such that for any $N>N'$, $$T(N)\log T(N)<N^{k+1}, $$ which implies that for a sufficiently long $w$, say $|w|\ge N'$, and $M^A_w$ denoted by such a $w$ is $M^A_i$, we have $$T(|w|)\log T(|w|)<|w|^{k+1}. $$ Thus, on input $w$, $U^A$ has sufficient time to simulate $M^A_w$. This yields  $$1^{|w|-m}\langle M^A_i\rangle \in L^A_s\quad\text{if and only if}\quad 1^{|w|-m}\langle M^A_i\rangle\in L(M^A_w)= L(M^A_i).\eqno(6.1) $$

But by the relation between the languages $L(M^A_i)$ accepted by polynomial-time nondeterministic oracle Turing machine $M^A_i$ with oracle $A$ and the language $\overline{L}(M^A_i)$ accepted by the ${\rm co}\mathcal{NP}^A$ machine $M^A_i$, we have $$1^{|w|-m}\langle M^A_i\rangle\in L(M^A_i)\quad\text{if and only if}\quad 1^{|w|-m}\langle M^A_i\rangle\notin\overline{L}(M^A_i).\eqno(6.2)$$

From (6.1) and (6.2) we conclude that if $U^A$ accepts the input $1^{|w|-m}\langle M^A_i\rangle$, i.e., $$1^{|w|-m}\langle M^A_i\rangle\in L^A_s,$$ then $$1^{|w|-m}\langle M^A_i\rangle\notin\overline{L}(M^A_i), $$ or equivalently, $$\chi_{\overline{L}(M^A_i)}(1^{|w|-m}\langle M^A_i\rangle)=0. $$ Likewise, if $U^A$ rejects the input $1^{|w|-m}\langle M^A_i\rangle $, i.e., $$1^{|w|-m}\langle M^A_i\rangle\notin L^A_s, $$ then $$1^{|w|-m}\langle M^A_i\rangle\in\overline{L}(M^A_i), $$ or equivalently,$$\chi_{\overline{L}(M^A_i)}(1^{|w|-m}\langle M^A_i\rangle)=1, $$ a contradiction with our assumption that $L^A_s$ is decided by the $n^k+k $ time-bounded ${\rm co}\mathcal{NP}^A$ machine $M^A_i$. 

The above arguments further yield that there exists no ${\rm co}\mathcal{NP}^A$ machine $M^A_i$ in the enumeration $e'$ accepting the language $L^A_s$. Since all ${\rm co}\mathcal{NP}^A$ machines appear in the list $e'$, it follows that $$L^A_s\notin{\rm co}\mathcal{NP}^A.$$ This finishes the proof.

\end{proof}

Next we show that the universal nondeterministic oracle Turing machine $U^A$ with oracle $A$ runs within time $O(n^k)$ for all $k\in\mathbb{N}_1$:

\begin{theoremsection}
\label{theorem6.3}
The universal nondeterministic oracle Turing machine $U^A$ with oracle $A$ constructed in the proof of Theorem \ref{theorem6.2} runs within time $O(n^k)$ for any $k\in\mathbb{N}_1$. 
\end{theoremsection}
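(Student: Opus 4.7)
The plan is to follow the same case analysis as in the proof of Theorem \ref{theorem5.1}, now applied to the universal nondeterministic oracle Turing machine $U^A$ constructed in the proof of Theorem \ref{theorem6.2}. The goal is to exhibit, for every input $w \in \{0,1\}^*$, a positive integer $i_w \in \mathbb{N}_1$ (depending only on $w$) such that $U^A$ on input $w$ performs at most $|w|^{i_w+1}$ steps. From this it follows immediately that $U^A$ runs in time $O(n^k)$ for every fixed $k \in \mathbb{N}_1$, which is exactly the claim of the theorem.

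First I would split into two cases according to whether $w$ decodes into a polynomial-time nondeterministic oracle Turing machine or not, exactly as in step $1$ of the construction of $U^A$. In the first case, $w$ encodes some $n^k+k$ time-bounded nondeterministic oracle Turing machine $M^A_j$ for some $k$; the construction installs a counter of $\lceil(k+1)\log n\rceil$ cells on tape $3$ (step $3$) and forces $U^A$ to halt the moment that counter exceeds $|w|^{k+1}$ (step $5$). Thus $i_w = k$ works. In the second case, $w$ represents the trivial oracle Turing machine with an empty next-move function, and step $6$ installs a counter of $\lceil 2\log n\rceil$ cells on tape $3$, forcing $U^A$ to halt within at most $|w|^2$ steps, so $i_w = 1$ works. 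Combining the two cases, for every $w$ the running time of $U^A$ on $w$ is bounded by $|w|^{i_w+1}$, and hence $U^A$ is an $O(n^k)$ time-bounded nondeterministic oracle Turing machine for every $k \in \mathbb{N}_1$.

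I do not expect any genuine obstacle here, since the argument is structurally identical to that of Theorem \ref{theorem5.1}. The only small point requiring attention is that $U^A$ additionally carries tape $4$ to simulate the oracle tape of $M^A_j$ and tape $5$ to hold its states; however, by Definition \ref{definition6.1} each oracle query counts as exactly one computation step, and the bookkeeping on tapes $4$ and $5$ does not interfere with the counter mechanism on tape $3$. Consequently, the bound on the number of moves of $U^A$ is unaffected by the presence of the oracle, and the step-counting argument goes through verbatim. If desired, one may finish by invoking a multi-tape to single-tape conversion analogous to Lemma \ref{lemma2.1} (for oracle machines) to obtain a single-tape $O(n^{2k})$ time-bounded nondeterministic oracle Turing machine equivalent to $U^A$, but this is not needed for the statement itself.
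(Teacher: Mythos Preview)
Your proposal is correct and follows exactly the approach the paper takes: the paper's proof simply states that the argument is identical to that of Theorem \ref{theorem5.1}, namely exhibiting for each input $w$ an integer $i_w$ with $U^A(w)$ halting within $|w|^{i_w+1}$ steps, and omits the details. Your case split and choice of $i_w$ match precisely, and your remark about the oracle tape not affecting the counter is a sensible (if optional) elaboration.
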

\begin{proof}
The proof of this theorem is similar to that of Theorem \ref{theorem5.1}. Namely, to prove the theorem it suffices to prove that for any input $w$ to $U^A$, there exists a corresponding positive integer $i_w\in\mathbb{N}_1$ such that $U^A$ runs for at most $$|w|^{i_w+1}$$ 
steps. Since the remainder of the proof is similar to that of Theorem \ref{theorem5.1}, the detail are therefore omitted.
\end{proof}

We are now ready to prove that the language $L^A_s$ is in fact in $\mathcal{NP}^A$:

\begin{theoremsection}
\label{theorem6.4}
The language $L_s^A$ accepted by the nondeterministic oracle Turing machine $U^A$ with oracle $A$ is in fact in $\mathcal{NP}^A$.
\end{theoremsection}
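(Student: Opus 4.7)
The plan is to mirror the proof of Theorem \ref{theorem5.2}, adapting it to the oracle setting. First, I would introduce the truncated family $\{L_s^{A,i}\}_{i\in\mathbb{N}_1}$ where $L_s^{A,i}$ denotes the language accepted by $U^A$ when we add an auxiliary tape serving as a counter forcing $U^A$ to halt whenever its step count exceeds $n^{i+1}$. Concretely, the modified machine is identical to $U^A$ except that it halts (without accepting) as soon as either the original counter on tape $3$ overflows, or the new counter for $n^{i+1}$ overflows. Each $L_s^{A,i}$ is a truncation of $L_s^A$ and is accepted by a nondeterministic oracle machine with oracle $A$ running in at most $n^{i+1}$ steps, hence $L_s^{A,i}\in{\rm NTIME}^A[n^i]\subseteq\mathcal{NP}^A$ for every fixed $i\in\mathbb{N}_1$.

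Next, I would establish the monotonicity $L_s^{A,1}\subseteq L_s^{A,2}\subseteq\cdots$, which is immediate: any word accepted by $U^A$ within $n^{i+1}$ steps is also accepted within $n^{i+2}$ steps. Then I would invoke Theorem \ref{theorem6.3} to show
$$L_s^A=\bigcup_{i\in\mathbb{N}_1}L_s^{A,i}.$$
The inclusion $\supseteq$ is obvious. For $\subseteq$, take any $w\in L_s^A$; by Theorem \ref{theorem6.3}, there exists a positive integer $i_w$ such that $U^A$ on input $w$ halts within $|w|^{i_w+1}$ steps, so the truncation at $i_w$ does not cut off the accepting computation, meaning $w\in L_s^{A,i_w}$.

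Finally, given the chain of inclusions and the fact that each $L_s^{A,i}\in\mathcal{NP}^A$, one deduces $L_s^A\in\mathcal{NP}^A$ by exactly the same argument as in Theorem \ref{theorem5.2}: for any input $w$, its membership in $L_s^A$ is decided by any $L_s^{A,i}$ with $i\geq i_w$, and the polynomial-time oracle machine witnessing $L_s^{A,i_w}\in\mathcal{NP}^A$ serves as a (fixed for that input's length class, and collectively captured by the union) $\mathcal{NP}^A$ procedure. I expect the step that needs the most care is justifying why the union $\bigcup_i L_s^{A,i}$, despite ranging over infinitely many indices, still belongs to $\mathcal{NP}^A$ rather than just being a countable union of $\mathcal{NP}^A$ languages; the resolution, exactly as in Theorem \ref{theorem5.2}, is that $U^A$ itself (taken as the ``limit'' machine) inherits the polynomial bound $O(n^k)$ from Theorem \ref{theorem6.3} for every $k$, so the union is simply $L(U^A)$ and is accepted by a single $\mathcal{NP}^A$ machine. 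No genuine obstacle arises beyond carefully transcribing the oracle-free argument.
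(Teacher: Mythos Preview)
Your proposal is correct and follows essentially the same approach as the paper: define the truncated family $\{L_s^{A,i}\}_{i\in\mathbb{N}_1}$, establish the chain of inclusions, invoke Theorem \ref{theorem6.3} to get $L_s^A=\bigcup_i L_s^{A,i}$, note each $L_s^{A,i}\in\mathcal{NP}^A$, and conclude exactly as in Theorem \ref{theorem5.2}. The paper's proof is in fact just a sketch pointing back to Theorem \ref{theorem5.2}, so your write-up is more detailed than the original while matching it step for step.
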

\begin{proof}
The proof of this theorem is similar to that of Theorem \ref{theorem5.2}. Namely, the first step is to define the family of languages: $$\left\{L^{A,i}_s\right\}_{i\in\mathbb{N}_1}. $$ Then, by proof of Theorem \ref{theorem6.3}, we have $$L_s^A=\bigcup_{i\in\mathbb{N}_1}L^{A,i}_s.\eqno(6.3) $$ Next we must show that for all $i\in\mathbb{N}_1$, $$L^{A,i}_s\in\mathcal{NP}^A,\eqno(6.4) $$ and $$L^{A,1}_s\subseteq L^{A,2}_s\subseteq\cdots\subseteq L^{A,i}_s\subseteq L^{A,i+1}_s\subseteq\cdots\eqno(6.5) $$ Finally, the conclusion follows from (6.3), (6.4), and (6.5). Since the remainder of the proof is similar to that of Theorem \ref{theorem5.2}, the more specific details are therefore omitted.
\end{proof}

\vskip 0.1 cm
Combining Theorem \ref{theorem6.2} and Theorem \ref{theorem6.4} actually yields the following result:
\begin{theoremsection}
\label{theorem6.5}
If $\mathcal{P}^A\ne\mathcal{NP}^A$, if we assume that the set $\mathcal{T}^A$ of all ${\rm co}\mathcal{NP}^A$ machines is enumerable, and if we further assume the rational assumptions above (i.e., Assumption \ref{assumptions}), then $$\mathcal{NP}^A\ne{\rm co}\mathcal{NP}^A.$$ \Q.E.D
\end{theoremsection}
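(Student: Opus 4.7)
The plan is to obtain $\mathcal{NP}^A\ne{\rm co}\mathcal{NP}^A$ by exhibiting an explicit separating language, namely the language $L_s^A$ constructed in Theorem \ref{theorem6.2}. All the heavy lifting is already packaged in Theorems \ref{theorem6.2} and \ref{theorem6.4}, so the proof here should amount to assembling their conclusions under the stated hypotheses.

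First, I would invoke the hypothesis $\mathcal{P}^A\ne\mathcal{NP}^A$ together with the assumed enumerability of $\mathcal{T}^A$ and the three rational assumptions (encodability, existence of a universal nondeterministic oracle simulator, and a Hennie--Stearns style $O(T(n)\log T(n))$ simulation overhead). These are exactly the hypotheses of Theorem \ref{theorem6.2}, which produces a universal nondeterministic oracle Turing machine $U^A$ accepting some language $L_s^A\subseteq\{0,1\}^*$ with $L_s^A\notin{\rm co}\mathcal{NP}^A$. This gives the lower-bound half of the argument, i.e., that $L_s^A$ lies outside ${\rm co}\mathcal{NP}^A$.

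Next, I would apply Theorem \ref{theorem6.4} (whose proof parallels Theorem \ref{theorem5.2}) to the very same universal machine $U^A$ to conclude that $L_s^A\in\mathcal{NP}^A$. Here one uses the relativized counterpart of Theorem \ref{theorem5.1}, which says that $U^A$ runs within time $O(n^k)$ for every $k\in\mathbb{N}_1$, and then writes $L_s^A=\bigcup_{i\in\mathbb{N}_1}L_s^{A,i}$ with each truncation $L_s^{A,i}\in{\rm NTIME}^A[n^i]\subseteq\mathcal{NP}^A$; the chain $L_s^{A,1}\subseteq L_s^{A,2}\subseteq\cdots$ together with Theorem \ref{theorem6.3} then yields $L_s^A\in\mathcal{NP}^A$.

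Combining these two facts, $L_s^A\in\mathcal{NP}^A\setminus{\rm co}\mathcal{NP}^A$, so the two classes cannot coincide, which is the desired conclusion. The only conceptual obstacle that one might worry about is whether the enumerability assumption and the three rational assumptions are actually \emph{used} in both halves consistently; they enter only through Theorem \ref{theorem6.2} (for the lower bound), while Theorem \ref{theorem6.4} depends merely on the structure of $U^A$ already constructed, so no additional hypothesis needs to be re-verified at the assembly stage. Thus the proof of Theorem \ref{theorem6.5} reduces to a one-line combination of the two preceding theorems, and its real content is that it is the contrapositive reformulation needed to set up Theorem \ref{theorem7}.
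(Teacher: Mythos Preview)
Your proposal is correct and matches the paper's approach exactly: the paper's proof of Theorem \ref{theorem6.5} is literally the one-line observation that combining Theorem \ref{theorem6.2} (giving $L_s^A\notin{\rm co}\mathcal{NP}^A$) and Theorem \ref{theorem6.4} (giving $L_s^A\in\mathcal{NP}^A$) yields the result. Your additional remarks about which hypotheses enter where and the role of this theorem as the contrapositive setup for Theorem \ref{theorem7} are accurate elaborations of what the paper leaves implicit.
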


\vskip 0.1cm

But we now have the fact that $\mathcal{P}^A\ne\mathcal{NP}^A={\rm co}\mathcal{NP}^A$, so the assumption in Theorem \ref{theorem6.5} is not true, i.e., the set $\mathcal{T}^A$ of all ${\rm co}\mathcal{NP}^A$ machines is not enumerable. Indeed, by $\mathcal{NP}^A={\rm co}\mathcal{NP}^A$, there exists a ${\rm co}\mathcal{NP}^A$ machine accepting the language $L^A_s$ that differs from any languages accepted by ${\rm co}\mathcal{NP}^A$ machines in the enumeration $e'$. Thus the ${\rm co}\mathcal{NP}^A$ machine accepting the language $L_s^A$ is not in the enumeration $e'$, showing that the set $\mathcal{T}^A$ of all ${\rm co}\mathcal{NP}^A$ machines is in fact not enumerable. The truths behind this kind of mystery are therefore expressed by the following theorem:

\begin{theoremsection}
\label{theorem6.6}
Under some rational assumptions (i.e., Assumption \ref{assumptions} above), and if $\mathcal{P}^A\ne\mathcal{NP}^A={\rm co}\mathcal{NP}^A$, then the set $\mathcal{T}^A$ of all ${\rm co}\mathcal{NP}^A$ machines is not enumerable. Thereby, ordinary simulation techniques will generally {\em not} apply to the relativized versions of the $\mathcal{NP}$ versus ${\rm co}\mathcal{NP}$ problem.\Q.E.D
\end{theoremsection}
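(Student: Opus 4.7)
The plan is to prove Theorem \ref{theorem6.6} by contradiction, leveraging the machinery already assembled in Theorems \ref{theorem6.2}--\ref{theorem6.5}. The hypothesis hands us $\mathcal{P}^A \ne \mathcal{NP}^A = {\rm co}\mathcal{NP}^A$ together with the rational Assumptions (1), (2), and (3). I would begin by supposing, for the sake of contradiction, that $\mathcal{T}^A$ \emph{is} enumerable, say via an enumeration $e'$ analogous to the enumeration $e$ of Section \ref{sec:all_polynomial_time_nondeterministic_turing_machines_are_enumerable}. The goal is then to derive $\mathcal{NP}^A \ne {\rm co}\mathcal{NP}^A$, which directly contradicts the standing hypothesis.

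With enumerability of $\mathcal{T}^A$ in hand, all the hypotheses of Theorem \ref{theorem6.5} are satisfied: the separation $\mathcal{P}^A \ne \mathcal{NP}^A$ is given, the enumeration $e'$ exists, and Assumptions (1)--(3) are in force. Applying Theorem \ref{theorem6.5} immediately yields $\mathcal{NP}^A \ne {\rm co}\mathcal{NP}^A$, contradicting our assumption that $\mathcal{NP}^A = {\rm co}\mathcal{NP}^A$. Hence $\mathcal{T}^A$ cannot be enumerable. Alternatively, one can make the argument more transparent by unfolding Theorem \ref{theorem6.5} into its two pieces: if $\mathcal{T}^A$ were enumerable, then Theorem \ref{theorem6.2} would construct a universal machine $U^A$ accepting some $L_s^A \notin {\rm co}\mathcal{NP}^A$, while Theorem \ref{theorem6.4} would simultaneously show $L_s^A \in \mathcal{NP}^A$; but $\mathcal{NP}^A = {\rm co}\mathcal{NP}^A$ then forces $L_s^A \in {\rm co}\mathcal{NP}^A$, a direct contradiction. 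Giving both presentations side by side will make clear exactly where the assumption of enumerability fails.

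For the second assertion, that ordinary simulation techniques do not apply to the relativized $\mathcal{NP}$ versus ${\rm co}\mathcal{NP}$ problem, I would argue as follows. The simulation method of Section \ref{sec:simulation_of_all_polynomial-time_nondeterministic_turing_machines} crucially requires the existence of an enumeration of the machines one wishes to diagonalize against: the universal machine $U$ reads its input as (a prefix of) an encoding of some ${\rm co}\mathcal{NP}$ machine in the enumeration $e$, then simulates and flips its output via the nondeterministic duality. In the relativized setting under $\mathcal{P}^A \ne \mathcal{NP}^A = {\rm co}\mathcal{NP}^A$, the first part of the theorem shows no such enumeration of $\mathcal{T}^A$ exists, so the very first step of the simulation strategy is unavailable. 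A brief remark can then reinforce that this is precisely the structural obstruction behind the Baker--Gill--Solovay style relativization barrier for this problem.

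The main obstacle, as I see it, is purely expository rather than mathematical: since the substantive content has been packaged into Theorem \ref{theorem6.5}, the proof is a short contrapositive. Care must be taken, however, to state the contradiction cleanly and to point out explicitly that the rational Assumptions (1)--(3) are \emph{used} in invoking Theorem \ref{theorem6.5}, since without them the chain from ``enumerability'' to ``$L_s^A \notin {\rm co}\mathcal{NP}^A$'' breaks down. A one-line concluding remark justifying the failure of simulation methods completes the proof.
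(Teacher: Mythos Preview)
Your proposal is correct and matches the paper's own argument essentially verbatim: the paper also derives Theorem \ref{theorem6.6} as the contrapositive of Theorem \ref{theorem6.5}, observing that if $\mathcal{T}^A$ were enumerable then $L_s^A\in\mathcal{NP}^A={\rm co}\mathcal{NP}^A$ would furnish a ${\rm co}\mathcal{NP}^A$ machine outside the enumeration $e'$, a contradiction. Your unfolded version via Theorems \ref{theorem6.2} and \ref{theorem6.4} is exactly the paper's reasoning in the paragraph preceding the theorem statement.
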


We are now at the right point to present the proof of Theorem \ref{theorem7} as follows:
\vskip 0.3 cm
{\em Proof of Theorem \ref{theorem7}.} Note that the statement of Theorem \ref{theorem7} is precisely the same as that of Theorem \ref{theorem6.6}; therefore the proof is complete.\Q.E.D

\vskip 0.3 cm
\begin{remark}
\label{remark6.1}
Thus, from Theorem \ref{theorem7}, we know that Baker, Gill, and Solovay's result \cite{BGS75} implies that $\mathcal{NP}\ne{\rm co}\mathcal{NP}$ is not necessarily a necessary and sufficient condition for $\mathcal{NP}^O\ne{\rm co}\mathcal{NP}^O$ for all oracle $O$. Likewise, Yao's conclusion \cite{Yao85} on oracle separation of {\it polynomial hierarchy} also implies that $\mathcal{NP}={\rm co}\mathcal{NP}$ is not necessarily a necessary and sufficient condition for $\mathcal{NP}^O={\rm co}\mathcal{NP}^O$ for all oracle $O$. In other words, the so-called ``Relativization Barrier" is not really an absolute barrier, since an absolute barrier cannot be broken at all, at least from the author's point of view.
\end{remark}

\vskip 0.3 cm
\begin{remark}
From the arguments in this section we can further conclude the following: if $\mathcal{P}^A\ne\mathcal{NP}^A$ and the set $\mathcal{T}^A$ of all ${\rm co}\mathcal{NP}^A$ machines with oracle $A$ is indeed enumerable and $\mathcal{NP}^A={\rm co}\mathcal{NP}^A$, then at least one of the aforementioned conditions (1),(2) and (3) in {\em Assumption} \ref{assumptions} does not hold.
\end{remark}

\vskip 0.3 cm
\section{Rich Structure of ${\rm co}\mathcal{NP}$}
\label{sec:structure_of_coNP}
\vskip 0.3 cm

In complexity theory, or computational complexity, problems that lie in the complexity class $\mathcal{NP}$ but are neither in the class $\mathcal{P}$ nor $\mathcal{NP}$-complete are called $\mathcal{NP}$-intermediate, and the class of such problems is called $\mathcal{NPI}$. The well-known Ladner's theorem, proved in 1975 by Ladner \cite{Lad75}, asserts that if the complexity classes $\mathcal{P}$ and $\mathcal{NP}$ are different, then the class $\mathcal{NPI}$ is not empty. Namely, the complexity class $\mathcal{NP}$ contains problems that are neither in $\mathcal{P}$ nor $\mathcal{NP}$-complete. We call this elegant theorem a reflection of the rich structure of the complexity class $\mathcal{NP}$.

Our main goal in this section is to establish a result saying that, similarly to the complexity class $\mathcal{NP}$ having a rich structure, the complexity class ${\rm co}\mathcal{NP}$ also contains intermediate languages that are neither in $\mathcal{P}$ nor ${\rm co}\mathcal{NP}$-complete. To do so in a simple way, we quote the following useful result whose proof can be found in \cite{Lad75}:

\begin{lemma}[\cite{Lad75}]
\label{lemma7.1}
(Suppose that $\mathcal{P}\ne\mathcal{NP}$). There is a language $L_{\rm inter}\in\mathcal{NP}$ such that $L_{\rm inter}$ is not in $\mathcal{P}$ and $L_{\rm inter}$ is not $\mathcal{NP}$-complete.\Q.E.D
\end{lemma}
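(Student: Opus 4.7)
The plan is to carry out the classical Ladner-style ``blowing holes in SAT'' construction: define a slowly growing, polynomial-time computable function $f:\mathbb{N}\to\mathbb{N}$ and set
\[
L_{\rm inter}=\{x\in{\rm SAT}\,:\,f(|x|)\text{ is even}\}.
\]
Intuitively, on input lengths where $f$ is even the language looks like SAT, and on input lengths where $f$ is odd the language is empty. If $f$ grows fast enough, enough ``holes'' are punched into SAT to prevent any polynomial-time reduction from SAT to $L_{\rm inter}$; if $f$ grows slowly enough, enough ``SAT-parts'' remain so that $L_{\rm inter}$ cannot be decided in polynomial time.

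First I would fix an effective enumeration $M_1,M_2,\ldots$ of polynomial-time deterministic Turing machines (with associated polynomial clocks) and an effective enumeration $R_1,R_2,\ldots$ of polynomial-time many-one reductions, analogous to the enumeration $e$ produced in Section \ref{sec:all_polynomial_time_nondeterministic_turing_machines_are_enumerable}. Next I would define $f(n)$ recursively so that, roughly, $f(n)=k$ means that at inputs of length up to $n$ the construction is currently handling the $k$-th diagonalization requirement: odd values $k=2i-1$ are reserved for ensuring $L_{\rm inter}\neq L(M_i)$ (so that $L_{\rm inter}\notin\mathcal{P}$), and even values $k=2i$ are reserved for ensuring ${\rm SAT}\not\leq_p L_{\rm inter}$ via $R_i$ (so that $L_{\rm inter}$ is not $\mathcal{NP}$-complete). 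Concretely, $f(n+1)=f(n)+1$ is declared only once a witness of the current requirement has been found within some bounded-time search; otherwise $f(n+1)=f(n)$. The standard trick is that the search budget allotted at length $n$ is a fixed polynomial (say $\log n$-bounded steps on inputs of length $\log n$), which guarantees that $f$ itself is computable in time polynomial in $n$.

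Given such an $f$, I would verify the three properties in turn. Membership in $\mathcal{NP}$ is immediate: on input $x$, a nondeterministic machine computes $f(|x|)$ in polynomial time, rejects if it is odd, and otherwise guesses and verifies a satisfying assignment for $x$. To show $L_{\rm inter}\notin\mathcal{P}$, assume for contradiction that $L_{\rm inter}=L(M_i)$ for some $i$; by the assumption $\mathcal{P}\neq\mathcal{NP}$ (and hence ${\rm SAT}\notin\mathcal{P}$), the $M_i$-requirement can never be satisfied permanently, forcing $f$ to stabilize at some odd value, which makes $L_{\rm inter}$ finite and hence in $\mathcal{P}$; but then SAT itself would be decidable in polynomial time via the equivalence $x\in{\rm SAT}\Leftrightarrow x\in L_{\rm inter}$ on infinitely many even-$f$ lengths, contradicting $\mathcal{P}\neq\mathcal{NP}$. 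The non-completeness argument is symmetric: if ${\rm SAT}\leq_p L_{\rm inter}$ via $R_i$, then $f$ eventually stabilizes at an even value, making $L_{\rm inter}$ essentially coincide with SAT on all sufficiently long inputs, again contradicting $\mathcal{P}\neq\mathcal{NP}$.

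The main obstacle I anticipate is the delicate calibration of $f$: one must simultaneously ensure (a) $f$ is computable in time polynomial in $n$, (b) $f$ tends to infinity, and (c) the time budget assigned to each diagonalization is generous enough that the requirement is ultimately met but small enough to keep $f$ polynomial-time. This is handled by a padding/recursion argument in which $f(n)$ is computed by recursively computing $f(\lfloor\log n\rfloor)$ and then running short simulations on tiny inputs; the details, while standard, are the technically subtle heart of the construction and are exactly what \cite{Lad75} establishes.
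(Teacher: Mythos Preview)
The paper does not actually prove this lemma: it is stated with a citation to \cite{Lad75} and closed immediately with the \Q.E.D\ symbol, so there is nothing to compare against beyond noting that your sketch is indeed Ladner's original ``blowing holes in SAT'' construction.

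That said, your write-up contains a genuine error: the parity assignment is backwards, and with it neither contradiction argument goes through. You set $L_{\rm inter}=\{x\in{\rm SAT}:f(|x|)\text{ is even}\}$ and assign the $M_i$-requirement to \emph{odd} stages. If $f$ then stabilizes at the odd value $2i-1$, the set $L_{\rm inter}$ is finite, and there is nothing contradictory about $M_i$ deciding a finite language; your sentence ``SAT itself would be decidable in polynomial time via the equivalence $x\in{\rm SAT}\Leftrightarrow x\in L_{\rm inter}$ on infinitely many even-$f$ lengths'' fails precisely because only \emph{finitely} many lengths have even $f$-value. Dually, if $f$ stabilizes at the even value $2i$, then $L_{\rm inter}$ agrees with SAT on all but finitely many inputs, and the conclusion ${\rm SAT}\leq_p L_{\rm inter}$ via $R_i$ is no contradiction either, since it just says SAT reduces to (essentially) itself.

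The fix is to swap the roles: handle the $M_i$-requirement at \emph{even} stages, so that getting stuck there forces $L_{\rm inter}$ to coincide with SAT up to a finite set, whence $M_i$ deciding $L_{\rm inter}$ gives ${\rm SAT}\in\mathcal{P}$; handle the $R_i$-requirement at \emph{odd} stages, so that getting stuck there forces $L_{\rm inter}$ to be finite, whence ${\rm SAT}\leq_p L_{\rm inter}\in\mathcal{P}$. With that single correction the rest of your outline is the standard, correct argument.
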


Next, we show that the language of all tautologies is ${\rm co}\mathcal{NP}$-complete:

\begin{lemma}
\label{lemma7.2}
Let ${\rm TAUT}=\{\varphi\,:\,\varphi \text{ is a tautology}\}$. Then the language ${\rm TAUT}$ is ${\rm co}\mathcal{NP}$-complete.
\end{lemma}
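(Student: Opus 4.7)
The plan is to prove the two standard obligations for $\mathcal{C}$-completeness: first that ${\rm TAUT}\in{\rm co}\mathcal{NP}$, and second that every language in ${\rm co}\mathcal{NP}$ polynomial-time many-one reduces to ${\rm TAUT}$.

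For membership in ${\rm co}\mathcal{NP}$, I would observe that a propositional formula $\varphi$ fails to be a tautology precisely when there exists a truth assignment $\sigma$ to its variables such that $\varphi[\sigma]$ evaluates to $0$. Such a $\sigma$ has size polynomial in $|\varphi|$, and evaluating $\varphi$ under a given assignment is a polynomial-time task. Hence the complement $\overline{{\rm TAUT}}=\{\varphi\,:\,\exists\sigma,\ \varphi[\sigma]=0\}$ is accepted by a polynomial-time nondeterministic Turing machine that guesses $\sigma$ and verifies $\varphi[\sigma]=0$. Therefore $\overline{{\rm TAUT}}\in\mathcal{NP}$, equivalently ${\rm TAUT}\in{\rm co}\mathcal{NP}$ by the very definition of ${\rm co}\mathcal{NP}$ given in Section \ref{sec:preliminaries}.

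For hardness, I would invoke the Cook--Levin theorem \cite{Coo71,Lev73}, which gives the $\mathcal{NP}$-completeness of $${\rm SAT}=\{\varphi\,:\,\exists\sigma,\ \varphi[\sigma]=1\}.$$ Take any $L\in{\rm co}\mathcal{NP}$; then $\overline{L}\in\mathcal{NP}$, so there is a polynomial-time computable function $f$ such that for all $x$, $$x\in\overline{L}\iff f(x)\in{\rm SAT}.$$ Define $g(x)=\neg f(x)$, which is also polynomial-time computable (the negation only prepends a $\neg$ symbol and is of size $|f(x)|+O(1)$). Then $$x\in L\iff x\notin\overline{L}\iff f(x)\notin{\rm SAT}\iff \neg f(x)\in{\rm TAUT}\iff g(x)\in{\rm TAUT},$$ so $L\leq_p{\rm TAUT}$. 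Since $L$ was arbitrary, ${\rm TAUT}$ is ${\rm co}\mathcal{NP}$-hard, and combined with the first step it is ${\rm co}\mathcal{NP}$-complete.

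There is no real obstacle here, as both directions are textbook material; the only point demanding a touch of care is the passage $\neg f(x)\in{\rm TAUT}\iff f(x)$ has no satisfying assignment, which is immediate from the semantics of propositional negation. I would present the membership part and the hardness part as two short paragraphs and close with the observation that ${\rm TAUT}$ inherits completeness for ${\rm co}\mathcal{NP}$ directly from ${\rm SAT}$'s completeness for $\mathcal{NP}$ via the canonical duality between these classes.
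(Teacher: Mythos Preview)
Your proposal is correct and follows essentially the same approach as the paper: both obtain ${\rm co}\mathcal{NP}$-hardness by applying the Cook--Levin reduction to $\overline{L}\in\mathcal{NP}$ and then negating the resulting formula to get a reduction $x\mapsto\neg\varphi_x$ from $L$ to ${\rm TAUT}$. The only difference is that you explicitly verify membership ${\rm TAUT}\in{\rm co}\mathcal{NP}$, whereas the paper's proof begins with ``It is sufficient to show that for every language $L\in{\rm co}\mathcal{NP}$, $L\leq_p{\rm TAUT}$'' and treats membership as understood.
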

\begin{proof}
It is sufficient to show that for every language $L\in{\rm co}\mathcal{NP}$, $$L\leq_m^p{\rm TAUT},$$ where $\leq_m^p$ denotes polynomial-time many-one reduction \cite{Kar72}. We can modify the Cook-Levin reduction \cite{Coo71, Lev73} from $\overline{L}=\{0,1\}^*\setminus L$ to ${\rm SAT}$ (recall that all nondeterministic Turing machines are assumed to with input alphabet $\{0,1\}$ because that will be all we need; see Section \ref{sec:all_polynomial_time_nondeterministic_turing_machines_are_enumerable}). Thus, for any input $w\in\{0,1\}^*$, the Cook-Levin reduction produces a formula $\varphi_w$ that is satisfiable if and only if $w\in\overline{L}$. In other words, the formula $\neg\varphi_x$ belongs to ${\rm TAUT}$ if and only if $x\in L$, which completes the proof.
\end{proof}

\vskip 0.3 cm
We are now at the right point to give the proof of Theorem \ref{theorem4} as follows:
\vskip 0.2 cm

\noindent{\em Proof of Theorem \ref{theorem4}.} Let $L=\{0,1\}^*\setminus L_{\rm inter}$, where the language $L_{\rm inter}$ is the same as in Lemma \ref{lemma7.1}. Such a language indeed exists by Corollary \ref{corollary2}, which states that $\mathcal{P}\ne\mathcal{NP}$. 

Then $L\in{\rm co}\mathcal{NP}$, and we will show that $L$ is neither in $\mathcal{P}$ nor ${\rm co}\mathcal{NP}$-complete. To see this, suppose that $L\in\mathcal{P}$. Since $\mathcal{P}={\rm co}\mathcal{P}$, we would then have $\overline{L}=L_{\rm inter}\in\mathcal{P}$, contradicting the fact that $L_{\rm inter}\notin\mathcal{P}$ by Lemma \ref{lemma7.1}. Further assume that $L$ is ${\rm co}\mathcal{NP}$-complete. Then ${\rm TAUT}\leq_m^p L$. By Lemma \ref{lemma7.2}, $\overline{L}=L_{\rm inter}$ would be $\mathcal{NP}$-complete, again a contradiction to the fact that $L_{\rm inter}$ is not $\mathcal{NP}$-complete. Summarizing the two cases above, we have shown that $L$ is ${\rm co}\mathcal{NP}$-intermediate. This finishes the proof. \Q.E.D

\vskip 0.3 cm
\section{Frege Systems}
\label{sec:frege_systems}
\vskip 0.3 cm

The focal point of this section is Frege systems, which are very strong proof systems in the propositional setting \cite{CR79}, based on axiom schemes and rules such as modus ponens. While Frege systems operate with Boolean formulas as lines, the extended Frege system EF works with Boolean circuits; see, e.g., \cite{Jer05}. Furthermore, establishing lower bounds on Frege or even extended Frege systems constitutes a major open problem in proof complexity; see e.g., \cite{BP01}.

Informally, a proof system for a language $\mathcal{L}$ is a definition of what is considered to be a proof that $\Phi\in\mathcal{L}$; see e.g., \cite{CR79}. The key features of a proof system are that it is sound (i.e., only formulas in $\mathcal{L}$ have proofs), complete (i.e., all formulas in $\mathcal{L}$ have proofs), and that there exists an algorithm with running time polynomial in $|\pi|$ to check whether $\pi$ is a proof that $\Phi\in\mathcal{L}$ \cite{BH19}.

Let $\mathfrak{F}$ be the set of functions $f:\Sigma_1^*\rightarrow\Sigma_2^*$, where $\Sigma_1$ and $\Sigma_2$ are any finite alphabets, such that $f$ can be computed by a deterministic Turing machine in time bounded by a polynomial in the length of the input. We then have the following definition:

\begin{definition}[\cite{CR79}]
\label{definition8.1}
If $L\subseteq\Sigma^*$, a {\em proof system} for $L$ is a function $f:\Sigma_1^*\rightarrow L$ for some alphabet $\Sigma_1$ and $f\in \mathfrak{F}$ such that $f$ is onto. We say that the proof system is {\em polynomially bounded} iff there is a polynomial $p(n)$ such that for all $y\in L$ there is an $x\in\Sigma_1^*$ such that $y=f(x)$ and $|x|\leq p(|y|)$, where $|z|$ denotes the length of a string $z$.
\end{definition}

If $y=f(x)$, then $x$ is a {\em proof} of $y$, and $x$ is a {\em short} proof of $y$ if, in addition, $|x|\leq p(|y|)$. Thus a proof system $f$ is polynomially bounded iff there is a bounding polynomial $p(n)$ with respect to which every $y\in L$ has a short proof.

In particular, Frege proof systems are proof systems for propositional logic. As a matter of fact, Frege systems are the usual ``textbook" proof systems for propositional logic based on axioms and rules; see, e.g., \cite{CR79}. A Frege system composed of a finite set of axiom schemes and a finite number of rules is a possible axiom scheme. A Frege proof is a sequence of formulas where each formula is either a substitution instance of an axiom or can be inferred from previous formulas by a valid inference rule. At the same time, Frege systems are required to be sound and complete. However, the exact choice of the axiom schemes and rules does not matter, as any Frege systems are polynomially equivalent; see e.g., \cite{CR79} or Theorem \ref{theorem8.1} below. Thus we may assume without loss of generality that modus ponens (see e.g., Remark \ref{remark8.1} below) is the only rule of inference.

\begin{definition}[\cite{BG98}]
\label{definition8.2}
A Frege proof system $F$ is an inference system for propositional logic based on
\begin{enumerate}
  \item [{\em (1)}]{ a language of well-formed formulas obtained from a numerable set of propositional variables and any finite propositionally complete set of connectives;}
  \item [{\em (2)}]{a finite set of axiom schemes;}
  \item [{\em (3)}]{and the rule of {\em Modus Ponens}
  $$
  \frac{A\quad A\rightarrow B}{B}.
  $$
  }
\end{enumerate}
A proof $P$ of the formula $A$ in a Frege system is a sequence $A_1,\cdots, A_n$ of formulas such that $A_n$ is $A$ and every $A_i$ is either an {\em instance} of an axiom scheme or it is obtained by an application of the Modus Ponens from premises $A_j$ and $A_k$ with $j,k<i$. 
\end{definition}

\vskip 0.3 cm
\begin{remark}
\label{remark8.1}
The well-known inference rule of Modus Ponens is its only rule of inference: $$\frac{A\quad A\rightarrow B}{B}$$ We give a more intuitive explanation about the inference rule of Modus Ponens: For example, in the proof of Theorem \ref{theorem5.2} in Section \ref{sec:showing_l_s_in_np}, where we prove that $L_s\in\mathcal{NP}$, $A$ is defined by $$A\overset{\rm def}{=}(5.1)\bigwedge(5.2)\bigwedge(5.3),$$ $B$ represents the proposition $L_s\in\mathcal{NP}$, namely $$B\overset{\rm def}{=}L_s\in\mathcal{NP}, $$ and one of the main deductions of Theorem \ref{theorem5.2}, besides showing that $A$ is true, is to show that $A$ implies $B$.
\end{remark}

The axiom schemes of a Frege proof system are the following (see e.g., \cite{Bus99}):
$$\aligned
&(P\wedge Q)\rightarrow P\\
&(P\wedge Q)\rightarrow Q\\
&P\rightarrow(P\vee Q)\\
&Q\rightarrow (P\vee Q)\\
&(P\rightarrow Q)\rightarrow((P\rightarrow\neg Q)\rightarrow\neg P)\\
&(\neg\neg P)\rightarrow P\\
&P\rightarrow(Q\rightarrow P\wedge Q)\\
&(P\rightarrow R)\rightarrow((Q\rightarrow R)\rightarrow(P\vee Q\rightarrow R))\\
&P\rightarrow(Q\rightarrow P)\\
&(P\rightarrow Q)\rightarrow(P\rightarrow(Q\rightarrow R))\rightarrow(P\rightarrow R).
\endaligned$$

More generally, a Frege system is specified by any finite complete set of Boolean connectives and a finite set of axiom schemes and rule schemes, provided it is implicationally sound and implicationally complete.

\begin{definition}
\label{definition8.3}
The length of a Frege proof is the number of symbols in the proof. The length $|\varphi|$ of a formula $\varphi$ is the number of symbols in $\varphi$.
\end{definition}

We now introduce the notion of $p$-simulation between two proof systems.

\begin{definition}[\cite{CR79}]
\label{definition8.4}
If $f_1:\Sigma_1^*\rightarrow L$ and $f_2: \Sigma_2^*\rightarrow L$ are proof systems for $L$, then $f_2$ {\em $p$-simulates} $f_1$ provided there is a function $g:\Sigma_1^*\rightarrow \Sigma_2^*$ such that $g\in\mathfrak{F}$, and $f_2(g(x))=f_1(x)$ for all $x$.
\end{definition}

The notion of an abstract propositional proof system is given as follows:

\begin{definition}[\cite{Bus99}]
\label{definition8.5}
An {\em abstract propositional proof system} is a polynomial-time computable function $g$ such that $Range(g)=TAUT$; i.e., the range of $g$ is the set of all Boolean tautologies. A $g$-proof of a formula $\phi$ is a string $w$ such that $g(w)=\phi$.
\end{definition}

\vskip 0.3 cm
\subsection{Lower Bounds for Frege Systems}

Let $\mathcal{F}_1$ and $\mathcal{F}_2$ be two arbitrary Frege systems. The following theorem indicates that these standard proof systems for the propositional calculus are about equally powerful:
\begin{theoremsection}[\cite{Rec76,CR79}]
\label{theorem8.1}
Any two Frege systems $p$-simulate each other. Hence, one Frege system is polynomially bounded if and only if all Frege systems are.\Q.E.D
\end{theoremsection}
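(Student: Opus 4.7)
The plan is to prove the theorem in two stages and then compose them. First I would handle the case in which $\mathcal{F}_1$ and $\mathcal{F}_2$ are Frege systems over the \emph{same} finite complete set of connectives $\mathcal{C}$, and second I would handle the general case where the underlying connective sets $\mathcal{C}_1$ and $\mathcal{C}_2$ may differ. Composing the two $p$-simulations yields the full result, since $\mathfrak{F}$ is closed under composition. The ``hence'' part of the theorem then drops out of Definition \ref{definition8.4}: if $\mathcal{F}_2$ $p$-simulates $\mathcal{F}_1$ via some $g\in\mathfrak{F}$ and $\mathcal{F}_1$ is polynomially bounded with bounding polynomial $q(n)$, then for any tautology $\varphi$, taking any $\mathcal{F}_1$-proof $x$ of $\varphi$ with $|x|\le q(|\varphi|)$ yields an $\mathcal{F}_2$-proof $g(x)$ of $\varphi$ whose length is polynomial in $|x|$, hence polynomial in $|\varphi|$.

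For the same-language stage, the idea is to simulate $\mathcal{F}_1$ line by line inside $\mathcal{F}_2$. Because $\mathcal{F}_2$ is implicationally complete, every axiom scheme of $\mathcal{F}_1$ admits a fixed schematic $\mathcal{F}_2$-derivation, and every inference rule of $\mathcal{F}_1$ (for instance the Modus Ponens rule of Definition \ref{definition8.2}, or more general rules permitted in other Frege variants) admits a fixed schematic $\mathcal{F}_2$-derivation of its conclusion from its premises. Given an $\mathcal{F}_1$-proof $A_1,\ldots,A_n$ of $A_n$, I would replace each step $A_i$ by the appropriate substitution instance of the associated schematic derivation. The resulting $\mathcal{F}_2$-proof has length at most $c\sum_{i=1}^{n}|A_i|$ for a constant $c$ depending only on $\mathcal{F}_1$ and $\mathcal{F}_2$, hence polynomial in the original proof length, and the line-by-line transformation is manifestly in $\mathfrak{F}$.

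For the different-language stage, the key tool is a translation $\tau$ from $\mathcal{C}_1$-formulas to $\mathcal{C}_2$-formulas obtained by recursively replacing every $\mathcal{C}_1$-connective by a fixed $\mathcal{C}_2$-definition; such definitions exist because $\mathcal{C}_2$ is complete. I would then verify that for each axiom scheme and each inference rule of $\mathcal{F}_1$ the $\tau$-image is derivable in $\mathcal{F}_2$ by a polynomial-size proof (again exploiting implicational completeness of $\mathcal{F}_2$), and chain these derivations line by line as in the first stage. Finally one argues that $\tau(\varphi)$ and $\varphi$ denote the same Boolean function and that a short $\mathcal{F}_2$-proof of $\tau(\varphi)$ can be converted into a short $\mathcal{F}_2$-proof of the original tautology $\varphi$ when the latter already lies in $\mathcal{C}_2$.

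The hard part will be controlling the size of $\tau(\varphi)$. A naive recursive substitution can blow up exponentially when a $\mathcal{C}_2$-definition uses one of its arguments more than once; for example, defining $A\leftrightarrow B$ as $(A\to B)\wedge(B\to A)$ duplicates both arguments, and iterating such duplicating definitions through nested subformulas produces exponential growth, which would break both the polynomial bound on $|\tau(\varphi)|$ and any hope of a polynomial-size $\mathcal{F}_2$-proof. The standard remedy, due to Reckhow, is to route the translation through a non-duplicating complete basis (for instance $\{\neg,\to\}$, for which every connective has a definition that does not duplicate its arguments), so that $|\tau(\varphi)|=O(|\varphi|)$; equivalently, one can pass through a Frege formulation that introduces each subformula once by an abbreviation and derives the translated axioms and rules over that shared skeleton. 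Carrying out this careful translation and verifying that the induced map on proofs lies in $\mathfrak{F}$ is the technical core of the argument, and once it is in place the two stages compose to give the desired $p$-simulation.
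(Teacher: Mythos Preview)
The paper does not actually prove Theorem \ref{theorem8.1}; the statement is quoted from \cite{Rec76,CR79} and closed immediately with the \Q.E.D\ symbol, so there is no in-paper proof to compare against. Your proposal is therefore supplying content the paper deliberately omits, and what you outline is indeed the standard Cook--Reckhow argument: line-by-line simulation via implicational completeness for the same-language case, followed by a connective-by-connective translation for the different-language case, with the blowup hazard isolated as the crux.

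One small caution on your parenthetical remedy: the claim that $\{\neg,\to\}$ admits a definition of \emph{every} connective in which each argument occurs only once is not correct in general (already $A\leftrightarrow B$ forces a repetition, and an arbitrary finite complete basis as in Definition \ref{definition8.2} may contain connectives of higher arity with no read-once expansion at all). Reckhow's actual fix is not merely ``pick a non-duplicating basis'' but a more delicate normalization of the translation so that shared subformulas are derived once and reused; your alternative phrasing ``introduces each subformula once by an abbreviation and derives the translated axioms and rules over that shared skeleton'' is closer to what is really needed. If you intend to write out the argument rather than cite it, that is the step to make precise.
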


The following theorem gives a necessary and sufficient condition for the complexity class $\mathcal{NP}$ to be closed under complement. For completeness, the proof is also quoted from \cite{CR79}:
\begin{theoremsection}[\cite{CR79}]
\label{theorem8.2}
$\mathcal{NP}$ is closed under complementation if and only if $TAUT$ is in $\mathcal{NP}$.
\end{theoremsection}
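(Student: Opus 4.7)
The plan is to prove both directions by exploiting the duality between $SAT$ and $TAUT$ together with the $\mathrm{co}\mathcal{NP}$-completeness of $TAUT$ already established in Lemma \ref{lemma7.2}. Recall that ``$\mathcal{NP}$ is closed under complementation'' is precisely the statement $\mathcal{NP}=\mathrm{co}\mathcal{NP}$, and recall further that $\mathcal{NP}$ is closed under polynomial-time many-one reductions (if $A\leq_p B$ and $B\in\mathcal{NP}$, then $A\in\mathcal{NP}$). Both facts will be used silently.

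For the forward direction, I assume $\mathcal{NP}=\mathrm{co}\mathcal{NP}$ and aim to show $TAUT\in\mathcal{NP}$. The key observation is the syntactic map $\varphi\mapsto \neg\varphi$, which is clearly computable in polynomial time and which witnesses a polynomial-time many-one reduction $TAUT\leq_p \overline{SAT}$, since a propositional formula $\varphi$ is a tautology iff $\neg\varphi$ is unsatisfiable. Since $SAT\in\mathcal{NP}$, the hypothesis $\mathcal{NP}=\mathrm{co}\mathcal{NP}$ gives $\overline{SAT}\in\mathcal{NP}$, and then closure of $\mathcal{NP}$ under $\leq_p$ yields $TAUT\in\mathcal{NP}$.

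For the reverse direction, I assume $TAUT\in\mathcal{NP}$ and aim to derive $\mathcal{NP}=\mathrm{co}\mathcal{NP}$. Here the engine is Lemma \ref{lemma7.2}, which states that $TAUT$ is $\mathrm{co}\mathcal{NP}$-complete under polynomial-time many-one reductions. Consequently, every $L\in\mathrm{co}\mathcal{NP}$ satisfies $L\leq_p TAUT$, and since $TAUT\in\mathcal{NP}$ by hypothesis, the closure of $\mathcal{NP}$ under $\leq_p$ gives $L\in\mathcal{NP}$. This shows $\mathrm{co}\mathcal{NP}\subseteq\mathcal{NP}$. Taking complements of both sides (a purely set-theoretic step) yields $\mathcal{NP}\subseteq\mathrm{co}\mathcal{NP}$ as well, and therefore $\mathcal{NP}=\mathrm{co}\mathcal{NP}$, as required.

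There is no real obstacle in this proof; the entire argument is essentially bookkeeping once Lemma \ref{lemma7.2} is in hand. The only point that deserves care is the forward direction, where one must not confuse $\overline{SAT}$ with $TAUT$ itself: they agree up to the polynomial-time transformation $\varphi\mapsto\neg\varphi$, and it is this transformation (together with closure of $\mathcal{NP}$ under $\leq_p$) that does the work, rather than any direct containment. With these observations the two directions combine to give the stated equivalence.
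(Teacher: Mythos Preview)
Your proposal is correct and follows essentially the same route as the paper's proof: both directions hinge on the facts that $TAUT\in\mathrm{co}\mathcal{NP}$ (trivially) and that $TAUT$ is $\mathrm{co}\mathcal{NP}$-hard via the Cook--Levin reduction. The only cosmetic difference is packaging: the paper argues the ``only if'' direction by noting directly that $\overline{TAUT}\in\mathcal{NP}$ (guess a falsifying assignment), whereas you route through $\overline{SAT}$ via $\varphi\mapsto\neg\varphi$; and for the ``if'' direction the paper spells out the Cook reduction inline, whereas you invoke Lemma~\ref{lemma7.2}.
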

\begin{proof}
(``If" direction.) Suppose that the set of tautologies is in $\mathcal{NP}$. Then every set $L$ in $\mathcal{NP}$ is reducible to the complement of the tautologies \cite{Coo71}, i.e., there is a function $f$ that can be computed by a deterministic Turing machine in time bounded by a polynomial in the length of the input such that for all strings $x$, $x\in L$ iff $f(x)$ is not a tautology. Hence a nondeterministic procedure for accepting the complement of $L$ is: on input $x$, compute $f(x)$, and accept $x$ if $f(x)$ is a tautology, using the nondeterministic algorithm for tautologies assumed above. Therefore the complement of $L$ is in $\mathcal{NP}$.

(``Only if" direction.) Suppose that $\mathcal{NP}$ is closed under complementation. Then the complement of the set of tautologies is in $\mathcal{NP}$, since to verify that a formula is not a tautology, one can guess a truth assignment and verify that it falsifies the formula.
\end{proof}

Since nondeterministic Turing machines can simulate Frege proofs (see e.g., \cite{Bus99}), and by Theorem \ref{theorem8.1} any two Frege systems $p$-simulate each other, we can present the proof of Theorem \ref{theorem6} as follows:\\

\skip 0.6 cm
\noindent{\em Proof of Theorem \ref{theorem6}.} We prove the theorem by contradiction. Suppose that there is a polynomial $p(n)$ such that for all $\psi\in TAUT$, there exists a proof $\oint$ of $\psi$ of length at most $p(|\psi|)$ (see Definition \ref{definition8.3}). Then a nondeterministic Turing machine $M$ on input $\psi$ can guess $\oint$ of length at most $p(|\psi|)$ and check whether $\oint$ is correct in deterministic polynomial time. Since the set of tautologies is ${\rm co}\mathcal{NP}$-complete and the above nondeterministic machine $M$ witnesses that $TAUT$ is in $\mathcal{NP}$, Theorem \ref{theorem8.2} then implies $$\mathcal{NP}={\rm co}\mathcal{NP},$$ contradicting Theorem \ref{theorem1}. This finishes the proof.\Q.E.D

\vskip 0.3 cm
\begin{remark}
\label{remark8.2}
Initially, propositional proof complexity was primarily concerned with proving lower bounds (even conditional ones) on the
lengths of proofs in propositional proof systems. This direction is extremely interesting and well justified in its own right, with the ultimate goal of settling whether $\mathcal{NP}={\rm co}\mathcal{NP}$ {\em (see e.g., \cite{CR79})}. In fact, as mentioned in Section \ref{sec:introduction}, such research directions are known as {\em Cook's Program for separating $\mathcal{NP}$ and ${\rm co}\mathcal{NP}$} {\em (see e.g., \cite{Bus12})}: prove superpolynomial lower bounds on proof lengths in stronger and stronger propositional proof systems until they are established for all abstract proof systems. Our approach, however, is ``to do the opposite": we first prove that $\mathcal{NP} \ne {\rm co}\mathcal{NP}$ and then apply this result to obtain lower bounds on the length of proofs in Frege proof systems.
\end{remark}

\vskip 0.3 cm
\section{Open Problems}
\label{sec:fundamental_open_problem}
\vskip 0.3 cm

Does $\mathcal{P}=\mathcal{NP}\cap{\rm co}\mathcal{NP}$? Currently, we have no answer to this problem. We have tried to resolve it but failed. We remark that this is a fascinating open problem in complexity theory. It is easy to see that $\mathcal{P}\subseteq\mathcal{NP}\cap{\rm co}\mathcal{NP}$, and the real problem is whether $\mathcal{NP}\cap{\rm co}\mathcal{NP}\subseteq\mathcal{P}$. Note that an example of a problem known to belong to both $\mathcal{NP}$ and ${\rm co}\mathcal{NP}$ (but not known to be in $\mathcal{P}$) is {\it Integer factorization} \cite{RSA78}: given positive integers $m$ and $n$, determine if $m$ has a factor less than $n$ and greater than one.

In addition, does the complexity class $\mathcal{BQP}$ (See page 213 of \cite{AB09} or, e.g., \cite{BV97}) contain the problem {\em Satisfiability} (i.e., can {\em Satisfiability} be solved by some bounded-error quantum polynomial-time Turing machines)? Note that by Corollary \ref{corollary1point4}, we have $\mathcal{BQP}\ne\mathcal{NP}$, but this does not exclude that $\mathcal{NP}\subsetneqq\mathcal{BQP}$. Thus, if one can show that {\em Satisfiability} or some other problem in $\mathcal{NP}$ is not in $\mathcal{BQP}$, then the precise relationships between $\mathcal{BQP}$ and $\mathcal{NP}$ would become clear.\footnote{It is widely believed that quantum polynomial-time Turing machines \cite{BV97} are unable to solve {\em Satisfiability} with an error probability of at most $\frac{1}{3}$ for all instances.}  Currently, we have no additional information about this problem. We have tried to resolve it but also failed, and we remark that this is likewise a fascinating open problem in complexity theory.

Finally, we hope that the methodology presented in this paper could shed light on other difficult complexity problems.

\vskip 0.3 cm
\bibliographystyle{aomplain}

\appendix

\vskip 0.3 cm
\section{Two Definitions of $\mathcal{NP}$ Are Equivalent}
\label{sec:appendix}
\vskip 0.3 cm

The official description of the definition of $\mathcal{NP}$ in \cite{Coo00} is given by the following.
\begin{definition}\label{defNP}
   $$
   \mathcal{NP}\overset{{\rm def}}{=}\{L\,:\,w\in L\Leftrightarrow\exists y(|y|\leq|w|^k\,\,\,\text{and } R(w,y))\},
   $$
where the {\it checking relation} $$R(w,y)$$ is polynomial-time, i.e., the language $$L_R\overset{{\rm def}}{=}\{w\#y\,:\,R(w,y)\}$$ is in $\mathcal{P}$.
\end{definition}

The complexity class $\mathcal{P}$, which is referred to in Definition \ref{defNP}, is defined as follows: 
\begin{definition}
\label{defP}
   $$
      \begin{array}{ll}
        \mathcal{P}\overset{{\rm def}}{=}&\{L\,:\,L=L(M)\text{ for some Turing machine $M$ that runs} \\
         &\qquad\text{ in polynomial time }\}.
      \end{array}
   $$
\end{definition}
   
In Definition \ref{defP}, the Turing machine $M$ is understood to be deterministic, and the notion of ``running in polynomial time" means that there exists a $k$ ($\in\mathbb{N}_1$) such that, $$T_M(n)\leq n^k+k,\quad\forall n\in\mathbb{N}_1, $$ where $T_M(n)$ is defined by $$T_M(n)=\max\{t_M(w)\,:\,w\in\Sigma^n\}$$ and $t_M(w)$ denotes the number of steps (or moves) in the computation of $M$ on input $w$ of length $n$.

As the reader may note, this definition (i.e., Definition \ref{defNP}) looks somewhat different from the one given in Section \ref{sec:preliminaries}. Nevertheless, the two definitions are equivalent. Here we show the equivalence of these two definitions of $\mathcal{NP}$.
  
\vskip 0.3 cm
\subsection{A Proof that Two Definitions of $\mathcal{NP}$ Are Equivalent}
\vskip 0.3 cm
  
The remainder of this appendix is devoted to showing that {\em Definition} \ref{defNP} and the definition of $\mathcal{NP}$ given in Section \ref{sec:preliminaries} are equivalent.

To begin, we first give another definition of $\mathcal{NP}$, which is similar in form to the official description of the definition of $\mathcal{P}$ in \cite{Coo00} (i.e., Definition \ref{defP} above):
\begin{definition}   
\label{defaNP}
$$
  \begin{array}{ll}
     \mathcal{NP}\overset{{\rm def}}{=}&\{\mathcal{L}\,:\,\mathcal{L}=L(M)\text{ for some nondeterministic Turing machine $M$} \\
     &\qquad\text{that runs in polynomial time}\}.
  \end{array}
$$
\end{definition}
   
We now complete the last step of the task mentioned at the beginning of this subsection.
\vskip 0.2 cm
\begin{proof}
We need to show that Definition \ref{defaNP} is equivalent to Definition \ref{defNP} and that Definition \ref{defaNP} is equivalent to the definition of $\mathcal{NP}$ given in Section \ref{sec:preliminaries}. 

By a simple argument it is clear that Definition \ref{defaNP} and the definition of $\mathcal{NP}$ given in Section \ref{sec:preliminaries} are equivalent. We proceed as follows:

We first show the ``if" direction. Suppose that $$L\in\bigcup_{i\in\mathbb{N}_1}{\rm NTIME}[n^i]. $$ Then there exists a $k\in\mathbb{N}_1$ such that $$L\in{\rm NTIME}[n^k], $$ which means that for all $n\in\mathbb{N}$, there is a nondeterministic Turing machine $M$, for any $w\in\Sigma^n $ $$T_M(|w|)\leq c_0n^k+c_1n^{k-1}+\cdots +c_{k-1}n+c_k\quad\text{(where $c_0>0$)}, $$ and $L=L(M)$.

For such constants $c_0$, $c_1$, $\cdots$, $c_k$, there must exist a minimal $t\in\mathbb{N}_1$ such that, for all $n\in\mathbb{N}$ and for any $w\in\Sigma^n,$  $$T_M(|w|)\leq n^t+t. $$Such a $t$ is easy to find; for example, $t=k+1$ may works. Indeed, for sufficiently large positive integer $n$,
$$\aligned
T_M(|w|)\,\leq\, &c_0n^k+c_1n^{k-1}+\cdots +c_{k-1}n+c_k\\
\,\leq\, & c_0n^k+c_1n^k+\cdots +c_{k-1}n^k+c_kn^k\\
\,=\,&\left(\sum_{i=0}^kc_i\right)n^k\\
\,<\,&n^{k+1}+(k+1).
\endaligned$$ Consequently, $$L\in\{\mathcal{L}\,:\,\mathcal{L}=L(M)\text{ for some nondeterministic Turing machine $M$ that runs in polynomial time}\}. $$

\vskip 0.3 cm

We next show the ``only if" direction. Suppose now that $$L\in \{\mathcal{L}\,:\,\mathcal{L}=L(M)\,\text{ for some nondeterministic Turing machine $M$ that runs in polynomial time}\}. $$Then there exists a $k\in\mathbb{N}_1$ such that, for all $n\in\mathbb{N}$ and for all $w\in\Sigma^n$, $$T_M(|w|)\leq n^k+k,$$ which implies that $$L\in{\rm NTIME}[n^k]\subseteq\bigcup_{i\in\mathbb{N}_1}{\rm NTIME}[n^i]. $$

It remains to show that Definition \ref{defaNP} is equivalent to Definition \ref{defNP}. For the sake of brevity we omit the proof and refer the reader to \cite{Kar72} (cf. Theorem 1 in \cite{Kar72}) or to the proof of Theorem $7$.$20$ in \cite{Sip13}, which answers the same question discussed here (see p. $294$ of \cite{Sip13}). Therefore, this completes the proof.
\end{proof}

\section{A New Construction of $L_s^{'i}$}
\label{sec:a_new_construction_of_lsi}

Note that the language $L_s^{'i}$ is accepted by $U$ while running in at most $n^{i+1}$ steps (which means that $L_s^{'i}\in\mathcal{NP}$). This means that $U$ simulates those $n^k+k$ time-bounded nondeterministic Turing machines of order $k$ for which $k\leq i$, since the simulation takes $$(n^k+k)\log(n^k+k)< n^{k+1}$$ steps, which is required to be at most $n^{i+1}$ steps (all such $k$ obviously satisfy $k\leq i$, because this requires $k+1\leq i+1$). Thus, for a fixed $i\in\mathbb{N}_1$, the language $L_s^{'i}$ can be constructed as follows:

Since all ${\rm co}\mathcal{NP}$ machines are enumerable, let $U$ be a four-tape nondeterministic Turing machine that operates as follows on an input string $x$ of length $n$.
\begin{enumerate}
\item{
$U$ decodes the tuple encoded by $x$. If $x$ is not the encoding of some ${\rm co}\mathcal{NP}$ machine $M_l$ for some $l$ (by Definition \ref{definition2.4}, equivalently, if $x$ cannot be decoded into some polynomial-time nondeterministic Turing machine, i.e., if $x$ represents the trivial Turing machine with an empty next-move function), then GOTO $5$; else determine $t$, the number of tape symbols used by $M_l$; $s$, its number of states; and $k$, its order. The third tape of $U$ can be used as ``scratch" memory to calculate $t$.
}
\item{
If $k+1\leq i+1$, then GOTO $3$; else reject the input $x$.\label{item2}
}
\item{
$U$ then lays off on its second tape $n$ blocks of $\lceil\log t\rceil$ cells each, the blocks being separated by a single cell holding a marker $\#$, i.e., there are $(1+\lceil\log t\rceil)n$ cells in all. Each tape symbol occurring in a cell of $M_l$'s tape will be encoded as a binary number in the corresponding block of the second tape of $U$. Initially, $U$ places $M_l$'s input, in binary-coded form, in the blocks of tape $2$, filling the unused blocks with the code for the blank.
}
\item{
On tape $3$, $U$ sets up a block of $\lceil(k+1)\log n\rceil$ cells, initialized to all $0$'s. Tape $3$ is used as a counter to count up to $n^{k+1}$.
}
\item{
Using nondeterminism, $U$ simulates $M_l$, using tape $1$ (its input tape) to determine the moves of $M_l$ and tape $2$ to simulate the tape of $M_l$. The moves of $U$ are counted in binary in the block of tape $3$, and tape $4$ is used to hold the states of $M_l$. If the counter on tape $3$ overflows, $U$ halts without accepting. The specified simulation is the following:

 ``On input $x=1^{n-m}\langle M_l\rangle$ where $m=|\langle M_l\rangle|$, $U$ on input $1^{n-m}\langle M_l\rangle$ simulates $M_l$ on input $1^{n-m}\langle M_l\rangle$ using nondeterminism in $$ n^{k+1} $$ time and outputs its answer, i.e., accepting if $M_l$ accepts and rejecting if $M_l$ rejects." 

\vskip 0.2cm
It should be pointed out that, although we say that input $x$ is decoded into a ${\rm co}\mathcal{NP}$ machine $M_l$, $U$ in fact simulates the $n^k+k$ time-bounded nondeterministic Turing machine $M_l$. That is to say, $U$ accepts the input $x=1^{n-m}\langle M_l\rangle$ if there exists at least one accepting path of $M_l$ on input $x$, and $U$ rejects the input $x=1^{n-m}\langle M_l\rangle$ if there exists no accepting path of $M_l$ on input $x$.
}
\item{
Since $x$ is an encoding of the trivial Turing machine with an empty next-move function, $U$ sets up a block of 

$$\lceil 2\times\log n\rceil$$ cells on tape $3$, initialized to all $0$'s. Tape $3$ is used as a counter to count up to $$n^2.$$

Using its nondeterministic choices, $U$ moves according to the path given by $x$. The moves of $U$ are counted in binary in the block of tape $3$. If the counter on tape $3$ overflows, then $U$ halts. $U$ accepts $x$ if and only if there is a computation path from the start state of $U$ leading to the accept state and the total number of moves does not exceed $$n^2$$ steps, so the computation lies within $O(n)$ steps. Note that the factor $2$ in $$\lceil 2\times\log n\rceil$$ is fixed; it is a default setting.
}
\end{enumerate}

Compared with the proof of Theorem \ref{theorem4.1}, the construction of $U$ differs only in item \ref{item2}.

It is easy to see that $L_s^{'i}\subseteq L_s^i$ for some $i$, since for any $w\in L_s^{'i}$ we have $w\in L_s^i$, but for some $w\in L_s^i$, it may be that $w\notin L_s^{'i}$ (we leave it to the reader to determine why). Nevertheless, since all polynomial-time nondeterministic Turing machines are simulated by $U$ as $i$ runs through all numbers in $\mathbb{N}_1$, we certainly have $$L_s=\bigcup_{i\in\mathbb{N}_1}L_s^{'i}. $$

\vskip 0.5 cm
\end{document}